\newtheorem{theorem}{Theorem}
\newtheorem{lemma}[theorem]{Lemma}
\newtheorem{observation}[theorem]{Observation}
\newtheorem{corollary}[theorem]{Corollary}
\newtheorem{proposition}[theorem]{Proposition}
\newtheorem{remark}{Remark}
\newtheorem{claim}{Claim}
\newtheorem{definition}{Definition}
\theoremstyle{definition}
\newtheorem{construction}{Construction}
\crefname{figure}{Figure}{Figures}
\newcommand{\commentout}[1]{}
\newenvironment{claimproof}{\emph{Proof of Claim.}}{\hfill$\diamond$
}
\newcommand{\problemdef}[3]{
	\begin{center}\fbox{
	\begin{minipage}{0.95\textwidth}
		\noindent
		#1
		\vspace{5pt}\\
		\setlength{\tabcolsep}{3pt}
		\begin{tabularx}{\textwidth}{@{}lX@{}}
			\textrm{Input:}     & #2 \\
			\textrm{Question:}  & #3
		\end{tabularx}
	\end{minipage}}
	\end{center}
}
\newcommand{\problemdefopt}[3]{
	\begin{center}\fbox{
	\begin{minipage}{0.95\textwidth}
		\noindent
		#1
		\vspace{5pt}\\
		\setlength{\tabcolsep}{3pt}
		\begin{tabularx}{\textwidth}{@{}lX@{}}
			\textrm{Input:}     & #2 \\
			\textrm{Task:}  & #3
		\end{tabularx}
	\end{minipage}}
	\end{center}
}
\newcommand{\OO}{\mathcal{O}}
\newcommand{\probname}{\textsc{Tournament Value Maximization}}
\begin{document}
\title{How to Make Knockout Tournaments More Popular?}

\author{Juhi~Chaudhary\thanks{Supported by the European Research Council (ERC) project titled PARAPATH (101039913).}}
\author{Hendrik~Molter\thanks{Supported by the ISF, grant No.~1456/18, and the ERC, grant number 949707.}}
\author{Meirav~Zehavi$^*$}

\affil{\small Department of Computer Science, Ben-Gurion~University~of~the~Negev, 
Beer-Sheva, 
Israel\\ \texttt{juhic@post.bgu.ac.il, molterh@post.bgu.ac.il, meiravze@bgu.ac.il}}

\date{}
\maketitle
\begin{abstract}
Given a mapping from a set of players to the leaves of a complete binary tree (called a {\em seeding}), a {\em knockout tournament} is conducted as follows: every round, every two players with a common parent compete against each other, and the winner is promoted to the common parent; then, the leaves are deleted. When only one player remains, it is declared the winner.  This is a popular competition format in sports, elections, and decision-making. Over the past decade, it has been studied intensively from both theoretical and practical points of view. Most frequently, the objective is to seed the tournament in a way that ``assists'' (or even guarantees) some particular player to win the competition. We introduce a new objective, which is very sensible from the perspective of the directors of the competition:  maximize the profit or popularity of the tournament. Specifically, we associate a ``score'' with every possible match, and aim to seed the tournament to maximize the sum of the scores of the matches that take place. We focus on the case where we assume a total order on the players' strengths, and provide a wide spectrum of results on the computational complexity of the problem.
\end{abstract}

\section{Introduction}\label{sec:intro}
A {\em knockout (or single-elimination) tournament} is the most popular competition format in sports~\cite{DBLP:journals/ior/HorenR85,CR11,GMSS12}. Here, roughly speaking, the players are paired up to compete against each other in rounds, where, at each round, the losers are knocked out, until only one player (the winner) remains. For an illustrative example, consider any major tennis tournament or the knockout stage of the football World Cup. Notably, knockout tournaments are common in other fields as well, such as elections and decision-making~\cite{DBLP:conf/ijcai/Suksompong21,Tullock80,Rosen86,Laslier97}.

More formally, we are given $n$ \emph{players} (where, for simplicity, $n$ is a power of $2$) and a {\em seeding} that determines how to label the $n$ leaves of a complete binary tree with the players. Given a seeding, the competition is conducted in rounds: As long as the tree has at least two leaves, every two players with a common parent in the tree play against each other, and the winner is promoted to the common parent; then, the leaves of the tree are deleted from it. Eventually, only one player remains, and this player is declared the winner. 

Over the past decade,  knockout tournaments have been studied intensively from both theoretical and practical points of view.  Most of the attention has been given to the objective of making a specific player win, particularly by selecting a seeding that is considered advantageous to this player, as well as bribing some of the players (see, e.g., the surveys by \citet{DBLP:conf/ijcai/Suksompong21} and \citet{williams_moulin_2016}). Here, to decide whether a seeding is advantageous, we assume to possess {\em predictions} (deterministic or probabilistic) on the winners of (all or some of) the possible games or matches (we use these two terms interchangeably). In particular, when the aforementioned information is deterministic and complete (i.e., for every possible match involving two players, we ``know'' who will be the winner) and our only influence on the competition is by picking the seeding, the problem is known as the {\sc Tournament Fixing} problem. This problem and its various variations have been extensively studied from the viewpoints of classical complexity, parameterized complexity, and structural analysis~(see, e.g., \cite{DBLP:conf/ijcai/Gupta0SZ19,DBLP:conf/ijcai/GuptaR0Z18,DBLP:conf/ijcai/GuptaR0Z18a,DBLP:conf/ijcai/KimW15,DBLP:conf/aaai/AzizGMMSW14,DBLP:conf/atal/VuAS09,DBLP:conf/aaai/Williams10,zehavi2023tournament,RamanujanS17,DBLP:conf/ijcai/StantonW11,DBLP:conf/wine/StantonW11,DBLP:conf/aaai/KimSW16,manurangsi2023fixing}; this list is illustrative rather than comprehensive). 

One aspect of our contribution is conceptual: we introduce a new objective, which is, perhaps, more sensible from the perspective of the directors of the competition: maximize the profit or popularity of the tournament. Indeed, this is the main reason behind commercials and promotions for sports competitions and is a subject of active discussion in the media. More formally, the input consists of $n$ players and, for every possible match between two players, the winner of that match. Additionally, for every possible match between two players and the round in which it is to be conducted, we have the {\em value} of that match in that round, represented by an integer. Here, the value can correspond to the profit, popularity, or any other measure (or combination thereof) associated with the match in that round. We expect matches conducted in later rounds to be more profitable/popular; yet, when rounds do not affect this matter, the value function is said to be {\em round-oblivious}. 
The latter is plausible in many cases, such as local derbies or games between two rivaling teams or players, which roughly maintain their excitement regardless of the tournament round in which it occurs~\cite{hall2010empirical,chmait2020tennis}. 
The task is to compute a seeding that maximizes the sum of the values of the matches that take place in all rounds. Clearly, this problem can be naturally generalized to scenarios where the predictions are probabilistic or/and partial, to maximize the {\em expected} value.   

Being the first study on the computational complexity of this problem, we focus on the simplest and most basic case, where the predictions correspond to a linear ordering of the players, say, from weakest to strongest. This is also the most realistic case in terms of existing inputs. Indeed, a linear ordering can be simply obtained by taking an existing ranking of the players or calculating a number based on their previous performances (e.g., as is done in tennis~\cite{TennisRankingWomen,TennisRankingMen}). 
However, in many cases, it is unclear how to obtain more complicated, non-linear assessments of outcomes.
Moreover, intuitively, the deterministic model can serve as a ``reasonable approximation'' for settings where linear player rankings are available and the winning probabilities are either unknown or assumed to be close to 1/0. 
Henceforth, we refer to our problem---with a linear ordering of the players---as \probname; a formal definition can be found in \cref{sec:ps}. We note that for the {\sc Tournament Fixing} problem, the case of a linear ordering makes the problem trivial, since then, for any seeding, the same player (being the strongest one) wins. 

\paragraph*{Other Related Works.} There is a huge body of research on \textsc{Tournament Fixing} and related tournament manipulation problems; we gave an illustrative list in the introduction.
However, to the best of our knowledge, there is no prior work on \probname, with the following exception. \citet{dagaev2018competitive} investigated a highly restricted case of our setting, where every player has a distinct strength value, and the value of a game is determined by (a linear combination) of its ``quality'' (the sum of strengths of the involved players) and its ``intensity'' (the absolute difference of the strengths of the involved players). They characterize cases where either a ``close'' seeding is optimal, a ``distant'' seeding is optimal, or every seeding is optimal. In particular, this implies that their restricted cases are trivially solvable in linear time. 

\newcommand{\mrrb}[2]{\multirow{#1}{2.4cm}{\centering #2}}
\renewcommand{\arraystretch}{1.5}
\newcolumntype{Y}{>{\centering\arraybackslash}X}
\renewcommand\tabularxcolumn[1]{m{#1}}
\captionsetup[table]{skip=6pt}
\begin{table}[t]
  \setlength{\tabcolsep}{8pt}
  \centering
  \caption
  {Table of Results.}
  \def\fbs{3.5cm}
\small
  \begin{tabularx}{.95\textwidth}{@{}Y|Y|Y@{}}%
 \hline
      \textbf{Restriction} & \textbf{Hardness Results} & \textbf{Algorithmic Results}	\\
      \hline
      unrestricted & \text{NP}-hard and \text{APX}-hard for 2~game values (\cref{thm:apxhard}) & - \\
       \hline
       \multirow{2}{*}{\centering round-oblivious} & \multirow{2}{5cm}{\centering \text{NP}-hard and \text{APX}-hard for 3~game values (\cref{thm:apx2})} & $(1/\log n)$-factor approximation (\cref{thm:approx})\\\cline{3-3}
       & & FPT w.r.t.\ the size of a minimum influential set of players (\cref{thm:fptinfluence})\\

       \hline
win-count oriented & - & $n^{\OO(\log n)}$-time algorithm (\cref{thm:quasipoly})\\
  \hline
  \multirow{2}{4.5cm}{\centering player popularity-based (implies round oblivious and win-count oriented)} & - & linear-time algorithm for 2 player values (\cref{thm:twovalues}) \\\cline{3-3}
   &  & FPT w.r.t.\ the disagreement between the player popularity values and the strength ordering (\cref{thm:fptdisagree}) \\
  \hline
  \end{tabularx}
  \label{tab:results}
\end{table}

\paragraph*{Our Contributions.} 
We introduce \probname\ and analyze its computational complexity. 
In \cref{sec:apx}, we prove that it is \text{NP}-hard (as well as \text{APX}-hard) in two highly restricted scenarios: when all game values are $0$ or $1$, or when the game-value function is round-oblivious and there are 3 distinct game values. Here, the proofs are based on non-trivial reductions from {\sc Max $(2,3)$-Sat}.

Nevertheless, we provide a wide spectrum of positive results in \cref{sec:algos}. First, we provide a simple $(1/\log n)$-factor approximation algorithm based on the computation of a maximum-weight matching in a graph.\footnote{Throughout this document, $\log$ refers to the base-2 logarithm.} Second, we identify a large family of game-value functions that give rise to efficient algorithms: a quasipolynomial-time algorithm (i.e., with runtime $n^{\OO(\log n)}$). Intuitively, this family consists of game-value functions that allow the total value of a tournament to be computed from the information on the number of wins of each player. Here, our main tool is the introduction of the concept of {\em open} and {\em closed subtournaments} (given a partial seeding). We use it to design a dynamic programming algorithm. Moreover, we identify a natural restriction that facilitates a simple greedy algorithm. Here, each player is assigned a popularity value, and the value of a game is equal to the popularity value of the winning player. Additionally, we assume that there are only two different player popularity values. Intuitively speaking, players can be categorized as either ``popular" or ``unpopular".

Still regarding positive results, now at the parameterized complexity front, our contribution is twofold. First, we consider the setting where each player is once again assigned a popularity value, and the value of a game is equal to the popularity value of the winning player. 
Note that the popularity values naturally define an ordering of the players, from most to least popular.
For this setting, we consider a natural distance measure between the strength ordering and the ordering given by the popularity values of the players (for a formal definition, see \cref{sec:greedy}).
For this setting, we present a {\em fixed-parameter algorithm} (i.e., an algorithm with a runtime of the form $f(k)\cdot n^{\OO(1)}$ for a function $f$ depending only on $k$). The motivation for considering this parameter stems from the common observation that the popularity and fascination surrounding players are often highly correlated to their capabilities.
Second, for (general) round-oblivious value functions, we consider the parameter $k$ as the minimum size of a so-called {\em influential set of players}. Roughly speaking, we define an influential set as a set of players such that every match that does not involve any of them has value $0$ (for a formal definition, see \cref{sec:fptvc}). We design a fixed-parameter algorithm for this parameter as well. To this end, we adapt the recent algorithm of \citet{zehavi2023tournament} for {\sc Tournament Fixing} parameterized by the feedback vertex set number of the {\em prediction graph} of the input, being the complete digraph obtained by having an arc from a player $a$ to a player $b$ if $a$ is predicted to beat $b$. The motivation to consider this parameter is that, quite often, only a small set of players are truly profitable or popular.

Our main results are summarized in \cref{tab:results}. We hope that our work will open up the door for further studies of the maximization of the profit or popularity of tournaments, and present some directions for further research in \cref{sec:conc}.

\section{Problem Setting and Preliminaries}\label{sec:ps}
In our tournament setting, we are given a set $N$ of $n=|N|$ \emph{players} (where, for simplicity, we assume that $n$ is a power of $2$). Furthermore,  we assume to possess deterministic {\em predictions} on the winners of all possible matches. Generally, this is modeled by a so-called \emph{tournament graph}, a directed graph that has the set of players as vertices and where we have one arc between each pair of players. As mentioned in the introduction, our work focuses on the special case where we have a linear ordering on the players that defines their relative strength. We assume that the ordering is from strongest to weakest player. In other words, we consider the setting where the tournament graph is acyclic. This allows us to identify players with natural numbers and use the canonical ordering on the natural numbers as the strength ordering. 
%
%
More formally, we define \emph{players} as natural numbers, and we say that
a player $i$ \emph{beats} a players $j$ if $i>j$.

A {\em seeding} that determines how to label the $n$ leaves of a complete binary tree with the players. Given a seeding, the competition is conducted in rounds as follows. As long as the tree has at least two leaves, every two players with a common parent in the tree play against each other, and the winner is promoted to the common parent; then, the leaves of the tree are deleted from it. Eventually, only one player remains, and this player is declared the winner.

Formally, given a set $N\subseteq \mathbb{N}$ of players with $|N|=n=2^{n'}$ for some $n'\in\mathbb{N}$, we define 
a \emph{tournament seeding} for $N$ as an injective function $\sigma:N\rightarrow[n]$, where for $k\in \mathbb{N}$ we denote $[k]=\{1,2,\ldots,k\}$.
Informally speaking, $\sigma(i)$ is the seed position of player $i$ and corresponds to the ordinal position of the leaf player $i$ is assigned to in a DFS-ordering of the leaves of a complete binary tree with $n$ leaves.

For some $N'\subseteq N$ with $|N'|=2^{n''}$ and a seeding $\sigma$ for $N$, we denote the function $\sigma':N'\rightarrow [|N'|]$ as a \emph{partial seeding} for $N'$, if for all players $i,j\in N'$, we have that $\sigma'(i)-\sigma'(j)=\sigma(i)-\sigma(j)$.
We use a \emph{game-value function} $v:N\times N\times \mathbb{N}\rightarrow \mathbb{Z}$ to quantify the value of a game. If player $i$ plays against player $j$ in round $r$, the value of this game is $v(i,j,r)$. 
The value of the tournament is the sum of the values of the games played. Formally, we have the following.

\begin{definition}[Tournament Value]\label{def:tvalue}
Given a set of players $N$, a game-value function $v:N\times N\times \mathbb{N}\rightarrow \mathbb{Z}$, and a tournament seeding $\sigma$, the \emph{tournament value} $V_\sigma$ is defined as follows.
\begin{itemize}
    \item If $N=\{i,j\}$ with $i>j$, then for every tournament seeding $\sigma_0$ for players $i,j$ the tournament value $V_{\sigma_0}$ is $v(i,j,1)$ and player $i$ wins the tournament.
    \item Assume $|N|=2^n$ for some $n>1$. Let $N_1\subset N$ with $|N_1|=2^{n-1}$ and let $N_2=N\setminus N_1$ such that for all $i\in N_1$ and $j\in N_2$ we have that $\sigma(i)<\sigma(j)$. Let $\sigma_1$ denote the partial seeding for the players in $N_1$ and let $\sigma_2$ denote the partial seeding for the players in $N_2$.
    
    Let $V_{\sigma_1}$ be the value of the subtournament of players in $N_1$ and let $i_1$ be the winner of that subtournament. Analogously, let $V_{\sigma_2}$ be the value of the subtournament of players in $N_2$ and let $i_2$ be the winner of that subtournament. 
    
    Then, the value of the tournament is $V_\sigma=V_{\sigma_1}+V_{\sigma_2}+v(i_1, i_2, n)$ and if $i_1>i_2$, then $i_1$ wins the tournament, otherwise $i_2$ wins the tournament. 
\end{itemize}
\end{definition}

The main problem we investigate asks whether we can find a seeding that guarantees a certain minimal value of the tournament. Our formal problem definition is the following, where we assume that the set of players in the input is sorted by the strength of the players (from strongest to weakest).

\problemdef{\probname}{
A set $N\subset \mathbb{N}$ of players with $|N|=n=2^{n'}$ for some $n'\in\mathbb{N}$, a game-value function $v:N\times N\times \mathbb{N}\rightarrow \mathbb{Z}$, and a target value $V$.
}{
Is there a tournament seeding $\sigma$ for the players in $N$ such that the tournament value $V_\sigma$ is at least $V$?
}

We will consider some natural restrictions on the game-value function.
We call a game value $v$ function \emph{home team oblivious} if for all pairs of players $i,j$ and all rounds $r$ we have that $v(i,j,r)=v(j,i,r)$.
We call a game value $v$ function \emph{round-oblivious} if for all pairs of players $i,j$ and all rounds $r,r'$ we have that $v(i,j,r)=v(i,j,r')$.

Furthermore, we make the following observations.
\begin{observation}\label{obs:hometeam}
Let $I$ be an instance of \probname. Let $I'$ be the instance obtained from $I$ by replacing the game-value function $v$ with $v'$, where for all $i,j\in N$ and for all $r\in\mathbb{N}$ we have that $v'(i,j,r)=\max\{v(i,j,r),v(j,i,r)\}$. It holds that $I$ is a yes-instance if and only if $I'$ is a yes-instance.
\end{observation}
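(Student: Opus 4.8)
The plan is to prove the two implications separately, in both cases exploiting the fact that, for a fixed seeding, the winner of every game---and hence which games take place in which round---is determined solely by the strength ordering ``$i>j$'' and is completely independent of the game-value function. Write $V^{v}_\sigma$ for the tournament value of a seeding $\sigma$ computed with game-value function $v$. For the direction ``$I$ is a yes-instance $\Rightarrow$ $I'$ is a yes-instance'', I would take a seeding $\sigma$ with $V^{v}_\sigma\ge V$ and show that the \emph{same} seeding satisfies $V^{v'}_\sigma\ge V$. This reduces to a one-line induction on $|N|$ establishing $V^{v'}_\sigma\ge V^{v}_\sigma$: the base case is immediate since $v'(i,j,1)=\max\{v(i,j,1),v(j,i,1)\}\ge v(i,j,1)$, and in the inductive step the subtournament winners $i_1,i_2$ are identical under $v$ and $v'$, so $V^{v'}_\sigma=V^{v'}_{\sigma_1}+V^{v'}_{\sigma_2}+v'(i_1,i_2,n)\ge V^{v}_{\sigma_1}+V^{v}_{\sigma_2}+v(i_1,i_2,n)=V^{v}_\sigma$ by the induction hypothesis together with $v'\ge v$ pointwise.

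For the converse, I would start from a seeding $\sigma'$ with $V^{v'}_{\sigma'}\ge V$ and transform it into a seeding $\sigma$ with $V^{v}_{\sigma}\ge V^{v'}_{\sigma'}$. The key operation is \emph{swapping two sibling subtournaments}: at any internal node $u$ of the bracket tree, exchanging the two sub-brackets rooted at the children of $u$ yields another valid tournament seeding in which, as an unordered pair in a given round, every game and every game winner is unchanged, with the single exception that the game played at $u$---which is between the winner $i_1$ of one child's subtournament and the winner $i_2$ of the other's---now contributes $v(i_2,i_1,r(u))$ in place of $v(i_1,i_2,r(u))$. (This is because the sub-brackets are moved wholesale with their internal structure intact, and the player promoted out of $u$ is the strongest player below $u$ either way, so nothing above $u$ changes either; hence swaps at distinct nodes have independent, purely local effects.) Performing such a swap at exactly those nodes $u$ where $v(i_2,i_1,r(u))>v(i_1,i_2,r(u))$ produces a seeding $\sigma$ in which the game at $u$ contributes $\max\{v(i_1,i_2,r(u)),v(i_2,i_1,r(u))\}=v'(i_1,i_2,r(u))$; summing over all internal nodes gives $V^{v}_\sigma = V^{v'}_{\sigma'}\ge V$, using that $v'$ is symmetric so that $V^{v'}_{\sigma'}$ is precisely the sum of these per-game $v'$-values irrespective of the left/right orientation chosen at each node. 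If one prefers not to make the ``sum over internal nodes'' reformulation of \cref{def:tvalue} explicit, the same argument can be packaged as an induction on $|N|$ that, at the top level, picks the better of the two orientations of the halves $N_1,N_2$ and recurses into each half.

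I expect the only point requiring genuine care to be this backward direction: making the sibling-swap operation (equivalently, the recursive orientation choice) precise, verifying that the outcome is still a bona fide injective seeding $N\to[n]$, and checking that the winners---and therefore the entire multiset of games together with their rounds---are truly unaffected by the swaps, so that the tournament value really does decompose as claimed. The forward direction and the pointwise inequality $v'\ge v$ are routine.
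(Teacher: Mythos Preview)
Your proposal is correct and follows essentially the same approach as the paper: use the same seeding for the forward direction (since $v'\ge v$ pointwise), and for the converse reorient each game via a sibling swap so that the larger of $v(i_1,i_2,r)$ and $v(i_2,i_1,r)$ is realized. The paper states the swap step more tersely (``seed $i$ before $j$ if $v'(i,j,r)=v(i,j,r)$''), whereas you spell out why the swaps are independent and do not disturb other games; but the underlying argument is identical.
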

\begin{proof}
  $(\Leftarrow)$: Let $I$ be a yes-instance. Now, if we use the seeding used in $I$ in instance $I'$, then by the definition of the game-value function $v'$, the resulting tournament value of $I'$ is at least equal to that of $I$. 
  
  $(\Rightarrow)$: Let $I'$ be a yes-instance. Now, we construct a seeding $\sigma$ for $I$ in such a way that for every game played between players $i$ and $j$ in round $r$ of $I'$, the following hold: if $v'(i,j,r)=v(i,j,r)$, then $i$ is seeded before $j$ in $\sigma$, and if $v'(i,j,r)=v(j,i,r)$, then $j$ is seeded before $i$ in $\sigma$. If $v(i,j,r)=v(j,i,r)$, then the relative ordering of $i$ and $j$ in $\sigma$ is irrelevant. Now, note that the tournament value of $I$ computed using the seeding  $\sigma$ is equal to that of the tournament value of~$I'$.
\end{proof}

\cref{obs:hometeam} allows us w.l.o.g.\ to only consider \probname\ instances with home team oblivious game-value functions. Furthermore, we can observe that we can add some constant to each game value without significantly changing the tournament.

\begin{observation}\label{obs:constant}
Let $I$ be an instance of \probname. Let $I'$ be the instance obtained from $I$ by replacing the game-value function $v$ with $v'$, where for all $i,j\in N$ and for all $r\in\mathbb{N}$ we have that $v'(i,j,r)=v(i,j,r)+c$ for some $c\in\mathbb{Z}$ and by replacing the target value $V$ with $V'=V+(|N|-1)\cdot c$. It holds that $I$ is a yes-instance if and only if $I'$ is a yes-instance. 
\end{observation}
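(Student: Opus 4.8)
The plan is to prove a slightly stronger statement: for \emph{every} tournament seeding $\sigma$ for $N$, the tournament value computed with $v'$ satisfies $V'_\sigma = V_\sigma + (|N|-1)\cdot c$. The observation then follows at once, since $V'_\sigma \ge V' = V + (|N|-1)\cdot c$ holds if and only if $V_\sigma \ge V$, and the correspondence between seedings of $I$ and of $I'$ is the identity (both instances have the same player set).

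The crux is the (folklore) fact that a knockout tournament on $N$ always consists of exactly $|N|-1$ games, independently of the seeding, because each game eliminates exactly one player and the tournament terminates with a single remaining player. I would make this precise by induction on $\log_2|N|$, following the recursive structure of \cref{def:tvalue}, and in fact fold it directly into the proof of the value-shift identity. For the base case $|N|=2$: the single game has value $v'(i,j,1) = v(i,j,1)+c$, so $V'_{\sigma_0} = V_{\sigma_0} + c = V_{\sigma_0} + (|N|-1)\cdot c$. For the inductive step, $\sigma$ splits $N$ into halves $N_1,N_2$ with $|N_1|=|N_2|=|N|/2$; crucially, replacing $v$ by $v'$ does not change the winner of any (sub)tournament, since winners are determined solely by the strength order ($i$ beats $j$ iff $i>j$), so the recursion in $I$ and in $I'$ unfolds along exactly the same matches. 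Hence, writing $i_1,i_2$ for the two subtournament winners and applying the induction hypothesis to each half,
\[
V'_\sigma = V'_{\sigma_1} + V'_{\sigma_2} + v'(i_1,i_2,\log_2|N|) = \bigl(V_{\sigma_1} + (\tfrac{|N|}{2}-1)c\bigr) + \bigl(V_{\sigma_2} + (\tfrac{|N|}{2}-1)c\bigr) + \bigl(v(i_1,i_2,\log_2|N|) + c\bigr),
\]
which equals $V_\sigma + (|N|-1)\cdot c$, completing the induction.

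There is no genuine obstacle here; the only care needed is to phrase the induction over the recursive definition cleanly and to note explicitly that the sequence of matches played is unaffected by the uniform shift, so that the two tournaments are ``the same'' up to the additive constant on each of their $|N|-1$ games.
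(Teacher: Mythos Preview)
Your proposal is correct and follows essentially the same approach as the paper: both argue that any seeding yields exactly $|N|-1$ games, so the tournament value shifts by exactly $(|N|-1)\cdot c$ under $v'$, making the equivalence immediate. You simply spell out the counting argument via induction along \cref{def:tvalue}, whereas the paper invokes the $|N|-1$-games fact directly.
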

\begin{proof}
Considering the fact that exactly $|N|-1$ games occur in both tournaments $I$ and $I'$ and the game values in $I'$   are defined as $v'(i,j,r)=v(i,j,r)+c$, it follows that a seeding $\sigma$ used in tournament $I$ leads to a tournament value of $V$ if and only if the same seeding $\sigma$ results in a tournament value of $V+(|N|-1)\cdot c$ in tournament $I'$. Therefore, the observation is immediate.
\end{proof}



To find tractable cases, we investigate a restricted class of game-value functions that we call \emph{win-count oriented}. Intuitively speaking, such a game-value function allows one to evaluate the value of the tournament by looking at each player and the number of games they won.

\begin{definition}[Win-Count Oriented Game-Value Function]\label{def:wincount}
    A game-value function $v:N\times N\times \mathbb{N}\rightarrow \mathbb{Z}$ is \emph{win-count oriented} if there is a \emph{player evaluation function} $p:N\times\mathbb{N}\rightarrow \mathbb{Z}$ such that for every seeding $\sigma$ the tournament value $V_\sigma$ equals $\sum_{i\in N}p(i,w_{\sigma}(i))$, where $w_{\sigma}(i)$ denotes the number of wins of player $i\in N$ when tournament seeding $\sigma$ is used.
\end{definition}

We give an alternative characterization of win-count oriented game-value functions. We show that the game-value function is win-count oriented if and only if the function value only depends on the round and the winning player.

\begin{proposition}
A game-value function $v:N\times N\times \mathbb{N}\rightarrow \mathbb{Z}$ is win-count oriented if and only if there exists a function $v':N\times\mathbb{N}\rightarrow \mathbb{Z}$ such that for all $i,j,r\in N\times N\times \mathbb{N}$ we have
\[
v(i,j,r)=v'(\max(i,j),r).
\]
\end{proposition}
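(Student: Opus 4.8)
The plan is to prove the two implications separately.

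\emph{From the normal form to win-count orientedness.} Suppose $v(i,j,r)=v'(\max(i,j),r)$ for some $v':N\times\mathbb{N}\to\mathbb{Z}$. Unfolding the recursion in \cref{def:tvalue}, $V_\sigma$ is the sum of the values of all $n-1$ games played under $\sigma$, and the winner of every game is the larger of its two players. Group these games by their winner. In any knockout bracket a player who wins $k$ games wins them exactly in rounds $1,2,\dots,k$ (to appear in round $s$ one must have survived rounds $1,\dots,s-1$), so every game won by player $i$ in round $s$ contributes $v'(i,s)$, and the set of rounds in which $i$ wins is precisely $\{1,\dots,w_\sigma(i)\}$. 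Hence
\[
V_\sigma=\sum_{i\in N}\ \sum_{s=1}^{w_\sigma(i)}v'(i,s).
\]
Setting $p(i,k):=\sum_{s=1}^{k}v'(i,s)$ (so that $p(i,0)=0$) gives $V_\sigma=\sum_{i\in N}p(i,w_\sigma(i))$ for every $\sigma$, which is exactly the condition of \cref{def:wincount}.

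\emph{From win-count orientedness to the normal form.} Assume $v$ is win-count oriented with evaluation function $p$. A short exchange argument---swap the two halves of a sub-bracket and use that two seedings with the same win-count function have the same tournament value---first shows that $v$ is home team oblivious on every game that can actually occur, so it suffices to prove that $v(a,b,r)$, for the stronger player $a>b$, does not depend on the loser $b$; one then sets $v'(m,r)$ to this common value. The workhorse is again that $w_\sigma=w_{\sigma'}$ implies $V_\sigma=V_{\sigma'}$ by \cref{def:wincount}, so the two multisets of games produced by such seedings have equal total value. I would induct on $r$: given a realizable triple $(a,b,r)$, an alternative eligible loser $b'$, and a second eligible winner $a^*$, I would construct two seedings that induce the same win-count function and differ only on a bounded set of games---the round-$r$ games $(a,b,r)$ and $(a^*,b',r)$ get replaced by $(a,b',r)$ and $(a^*,b,r)$, together with the earlier-round games along the ``champion runs'' of $b$ and $b'$, whose value changes cancel by the induction hypothesis since they depend only on their winners. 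This yields the separability identity $v(a,b,r)-v(a,b',r)=v(a^*,b,r)-v(a^*,b',r)$, and for the final round an analogous identity obtained by evaluating $\sum_{i}p(i,w_\sigma(i))$ directly on suitable seedings; from these I would aim to conclude that $v(a,\cdot,r)$ is constant.

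\emph{Main obstacle.} The technical core is setting up the exchange argument so that the swap preserves \emph{every} player's win count---which forces both $b$ and $b'$ to play champion-of-a-sub-bracket roles in \emph{both} seedings---while keeping strict control of exactly which games are altered, and then verifying that all value changes outside the round-$r$ games telescope away via the inductive hypothesis. The more delicate, conceptual step is upgrading the separability identities to the full statement that $v(a,b,r)$ has no residual loser-dependent term; this, together with the boundary cases where only a few players are eligible for a given round (i.e.\ small tournaments) and the bookkeeping needed to reconcile the ``stronger-player-first'' convention in the base case of \cref{def:tvalue} with the home-team-oblivious viewpoint, is where I expect the real difficulty to lie.
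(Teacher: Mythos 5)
Your first direction coincides with the paper's argument (set $p(i,k)=\sum_{s\le k}v'(i,s)$ and group games by winner) and is fine. The converse, however, is only a plan, and the plan has a genuine gap that you yourself flag: you never get from the separability identities $v(a,b,r)-v(a,b',r)=v(a^*,b,r)-v(a^*,b',r)$ to the statement that $v(a,\cdot,r)$ is constant. This is not just missing bookkeeping; the route is insufficient in principle. Exchange arguments that compare two seedings of the \emph{full} player set with identical win-count vectors can never detect a loser-dependent term, because in a knockout tournament every player except the overall champion loses exactly one game, and loses it in round $w_\sigma(i)+1$; hence any contribution of the form $\psi(\min(i,j),r)$ sums to $\sum_{i\neq \max N}\psi(i,w_\sigma(i)+1)$, which is itself a function of the win counts and therefore invisible to the identity ``$w_\sigma=w_{\sigma'}$ implies $V_\sigma=V_{\sigma'}$''. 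So the best your swaps can deliver is additive separability $v(a,b,r)=\phi(a,r)+\psi(b,r)$, and no amount of whole-tournament comparisons will kill $\psi$.

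The paper avoids this by working with an explicit candidate rather than with exchange identities: it normalizes $p(i,0)=0$, defines $v'(i,r)=p(i,r)-p(i,r-1)$, rewrites $V_\sigma=\sum_{(i,j,r)\in\mathcal G_\sigma}v'(\max(i,j),r)$, and then applies this identity to \emph{subtournaments} (two-player brackets for the base case, and the recursive decomposition of \cref{def:tvalue} for the inductive step over the number of rounds) to conclude $v(i,j,r)=v'(\max(i,j),r)$ pointwise. The crucial extra ingredient is precisely the use of win-count orientedness on smaller brackets/player subsets, which pins down the loser-independent value game by game; your approach, restricted to seedings of the full set $N$, cannot do this. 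To repair your proof you would either have to reproduce that subtournament induction, or add an argument that exploits brackets of varying sizes so that a residual $\psi(b,r)$ leads to a contradiction; as written, the ``aim to conclude'' step is exactly where the proof fails. (Your preliminary step about home-team obliviousness is also obtained for free in the paper's route, since the base case directly identifies $v(i,j,1)=v(j,i,1)=p(i,1)$ for $i>j$; it does not need a separate exchange argument.)
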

\begin{proof}
For a tournament seed~$\sigma$, let $\mathcal{G}_\sigma\subseteq N\times N\times \mathbb{N}$ denote the set of games played in the tournament.
If for all $i,j,r\in N\times N\times \mathbb{N}$ we have $v(i,j,r)=v'(\max(i,j),r)$, then we can set 
\[
p(i,w_\sigma(i))=\sum_{1\le r\le w_\sigma(i)}v'(i,r).
\]

Then, we have
\[
V_\sigma=\sum_{(i,j,r)\in\mathcal{G}_\sigma} v'(\max(i,j),r)=\sum_{i\in N} \sum_{(i,j,r)\in\mathcal{G}_\sigma \wedge i=\max(i,j)} v'(\max(i,j),r)=\sum_{i\in N} \sum_{1\le r\le w_\sigma(i)}v'(i,r)=\sum_{i\in N}p(i,w_\sigma(i)).
\]

It follows that $v$ is win-count oriented. In the remainder, we prove the converse direction.

Let $v: N\times N\times \mathbb{N}\rightarrow \mathbb{Z}$ be a win-count oriented game-value function. Then, we have
\[
V_\sigma=\sum_{(i,j,r)\in\mathcal{G}_\sigma} v(i,j,r).
\]

Since $v$ is win-count oriented, we also have that there is a player value function $p$ such that
\[
V_\sigma=\sum_{i\in N} p(i,w_\sigma(i)),
\]
where $w_{\sigma}(i)$ denotes the number of wins of player $i\in N$ when tournament seeding $\sigma$ is used.

Note that if a player $i$ has $w_\sigma(i)$ wins when seeding $\sigma$ is used, we know by the definition of a knock-out tournament that player $i$ wins one game in each round $r$ with $1\le r\le w_\sigma(i)$.
We now define the following function $v':N\times \mathbb{N}\rightarrow \mathbb{Z}$. For all $i\in N$ and $r\in\mathbb{N}$, we set
\[
v'(i,r)=p(i,r)-p(i,r-1),
\]
where we assume w.l.o.g.\ for all $i\in N$ that $p(i,0)=0$. Using this, we can write
\[
V_\sigma=\sum_{i\in N} p(i,w_\sigma(i))=\sum_{i\in N} \sum_{1\le r\le w_\sigma(i)} v'(i,r).
\]

Instead of summing over the rounds where player $i$ wins, we can sum over the games won by player~$i$. This gives us the following.
\[
\sum_{i\in N} \sum_{1\le r\le w_\sigma(i)} v'(i,r) = \sum_{i\in N} \sum_{(i,j,r),(j,i,r)\in \mathcal{G}_\sigma \text{ s.t.\ }i> j} v'(i,r) .
\]

Now, we can see that for every game, we evaluate $v'$ on the winner and the round of the game. It follows that we can rewrite as follows.
\[
\sum_{i\in N} \sum_{(i,j,r),(j,i,r)\in \mathcal{G}_\sigma \text{ s.t.\ }i> j} v'(i,r) =  \sum_{(i,j,r)\in\mathcal{G}_\sigma} v'(\max(i,j),r).
\]

Summarizing, we have 
\[
V_\sigma=\sum_{(i,j,r)\in\mathcal{G}_\sigma} v(i,j,r) =  \sum_{(i,j,r)\in\mathcal{G}_\sigma} v'(\max(i,j),r).
\]

Note that the above equality also holds for each subtournament. Using this, we prove by induction on $r$ that for all $i,j,r\in N\times N\times \mathbb{N}$ we have $v(i,j,r)=v'(\max(i,j),r)$.

Each subtournament with $r=1$ involves two players, say $i$ and $j$ and w.l.o.g.\ $i>j$. Using the definition of the tournament value (\cref{def:tvalue}) and the assumption that $v$ is win-count oriented, we have that the subtournament value equals $p(i,1)=v(i,j,1)=v(j,i,1)$. 
Hence, we have that $v(i,j,1)=v'(\max(i,j),1)$ for all $i,j\in N$.

Assume $r>1$ and consider a subtournament with $r$ rounds where player $i$ plays against player $j$ in the final round of the subtournament. Let $\sigma$ be the corresponding seeding and let $V^{(r)}_{\sigma}$ denote the value of the subtournament. Then, by definition of the tournament value (\cref{def:tvalue}), we have $V^{(r)}_\sigma=V^{(r-1)}_{\sigma_1}+V^{(r-1)}_{\sigma_2}+v(i, j, r)$, where $V^{(r-1)}_{\sigma_1}$ and $V^{(r-1)}_{\sigma_2}$ denote the values of the subtournaments with $r-1$ rounds that lead up to the final game.
By our arguments above, we also have that $V^{(r)}_\sigma=V^{(r-1)}_{\sigma_1}+V^{(r-1)}_{\sigma_2}+v'(\max(i,j),r)$. It follows that $v(i,j,r)=v'(\max(i,j),r)$ for all $i,j,r\in N$.
\end{proof}

A further natural restriction is the setting where every player has a popularity value, and the value of a game equals the popularity value of the winning player. We call game-value functions capturing this setting \emph{player popularity-based}.

\begin{definition}[Player Popularity-Based Game-Value Function]\label{def:popularity}
        A game-value function $v:N\times N\times \mathbb{N}\rightarrow \mathbb{N}$ is \emph{player popularity-based} if there are \emph{player popularity values} $v_i$ for all players $i\in N$ such that for all $i,j\in N$ and $r\in\mathbb{N}$ we have $v(i,j,r)=v_{\max(i,j)}$.
\end{definition}

We can observe that player popularity-based game-value functions are exactly the win-count oriented game-value functions that are also round oblivious.

\begin{observation}
    A game-value function is player popularity-based if and only if it is win-count oriented and round-oblivious.
\end{observation}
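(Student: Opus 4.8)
The plan is to obtain both directions directly from the alternative characterization of win-count oriented game-value functions given in the preceding proposition, together with the definition of round-obliviousness; no new machinery is needed. Recall that the proposition states that $v$ is win-count oriented if and only if there is a function $v'\colon N\times\mathbb{N}\to\mathbb{Z}$ with $v(i,j,r)=v'(\max(i,j),r)$ for all $i,j,r$, and that $v$ is round-oblivious if and only if $v(i,j,r)=v(i,j,r')$ for all rounds $r,r'$. Thus, informally, a player popularity-based function (\cref{def:popularity}) is exactly a function admitting such a witness $v'$ that is, in addition, independent of its round argument.

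For the direction ``$\Rightarrow$'', assume $v$ is player popularity-based, with popularity values $v_i$ for $i\in N$, so that $v(i,j,r)=v_{\max(i,j)}$. Define $v'(i,r):=v_i$ for all $i\in N$ and $r\in\mathbb{N}$; then $v(i,j,r)=v'(\max(i,j),r)$, so $v$ is win-count oriented by the proposition. Moreover, $v_{\max(i,j)}$ does not depend on $r$, so $v(i,j,r)=v(i,j,r')$ for all $r,r'$, i.e., $v$ is round-oblivious.

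For the direction ``$\Leftarrow$'', assume $v$ is win-count oriented and round-oblivious. By the proposition, fix $v'\colon N\times\mathbb{N}\to\mathbb{Z}$ with $v(i,j,r)=v'(\max(i,j),r)$, and set $v_i:=v'(i,1)$ for each $i\in N$. Then for every $i,j\in N$ and $r\in\mathbb{N}$, using round-obliviousness for the first equality, $v(i,j,r)=v(i,j,1)=v'(\max(i,j),1)=v_{\max(i,j)}$, which is exactly the defining property of a player popularity-based function.

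Since everything reduces to bookkeeping once the proposition is available, I do not anticipate a genuine obstacle. The only points needing a little care are that the witness $v'$ from the proposition must be evaluated at a single fixed round to read off the popularity values (any fixed round works; we pick $r=1$), and the purely cosmetic fact that \cref{def:popularity} takes popularity values in $\mathbb{N}$ whereas a general round-oblivious win-count oriented function has codomain $\mathbb{Z}$; under the convention that the relevant game values are nonnegative this is immaterial, and in general one reads the equivalence with $\mathbb{Z}$ in place of $\mathbb{N}$.
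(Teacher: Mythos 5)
Your proof is correct and follows exactly the route the paper intends: the paper states this observation without proof as an immediate consequence of the preceding proposition (the characterization $v(i,j,r)=v'(\max(i,j),r)$), and your two directions spell out precisely that derivation, including the harmless codomain point. Nothing is missing.
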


\section{Hardness Results} \label{sec:apx}
In this section, we prove that \probname~is $\mathsf{NP}$-hard and its optimization version is $\mathsf{APX}$-hard even for game-value functions that map to $\{0,1\}$. We also prove that the problem remains $\mathsf{NP}$-hard as well as $\mathsf{APX}$-hard if the game-value functions are round-oblivious and map to three distinct values. 

In order to prove the approximation hardness results in this section, we recall the definition of an $\mathsf{L}$-reduction below.
\begin{definition} [$\mathsf{L}$-reduction, \cite{papadimitriou1988optimization,ausiello2012complexity}] \label{def:lred} A pair of polynomial-time computable functions $(f,g)$ is an $\mathsf{L}$-reduction from an optimization problem $A$ to an optimization problem $B$ if there are
positive constants $\alpha$ and $\beta$ such that for each instance $x$ of $A$, the following hold:
\begin{enumerate}

\item The function $f$ maps instances of $A$ to instances of $B$ such that $\mathsf{opt}_{B}(f(x))\leq \alpha \cdot \mathsf{opt}_{A}(x)$. Here, $\mathsf{opt}_{A}(x)$ denotes the optimal value of instance $x$ of $A$ and $\mathsf{opt}_{B}(f(x))$ denotes the optimal value of instance $f(x)$ of $B$. 
\item The function $g$ maps feasible solutions $y$ of $f(x)$ to feasible solutions $g(y)$ of $x$ such that $\vert \mathsf{opt}_{A}(x)-c_{x}(g(y))\vert \leq \beta \cdot \vert \mathsf{opt}_{B}(f(x))-c_{f(x)}(y)\vert$, where $c_x(\cdot)$ and $c_{f(x)}(\cdot)$ are the cost functions\footnote{These functions compute the value of the feasible solution of the associated problem instance.} of the instances $x$ and $f(x)$, respectively.
\end{enumerate}
\end{definition}

To show the $\mathsf{APX}$-hardness of \probname, we give an $\mathsf{L}$-reduction from \textsc{Max (2,3)-SAT}, which is known to be $\mathsf{NP}$-hard and $\mathsf{APX}$-hard~\cite{ausiello2012complexity}, and is defined below. 

\problemdefopt{\textsc{Max (2,3)-SAT}}{A Boolean formula $\phi$ such that each clause has exactly two literals and each variable appears in at most three clauses.}{Find an assignment to the variables that satisfies the maximum number of clauses of~$\phi$.}

It is worth noting that for a given instance of \textsc{Max (2,3)-SAT}, we can make the assumption that each variable appears in at least 2 clauses. In cases where a variable occurs in at most 1 clause, a preprocessing step, defined below, can be applied to transform the instance into an equivalent one, ensuring that every variable appears in at least 2 clauses.

If a variable does not appear in any clause, it has no impact on the outcome and can be safely eliminated from the instance, and if a variable appears in exactly one clause, then, note that there is an optimal solution such that we can set all variables of this kind to satisfy their only clauses. Based on these observations, we can assume, without loss of generality, that every variable in the given instance of \textsc{Max (2,3)-SAT} appears in at least 2 clauses. Furthermore, since each clause contains exactly 2 variables, this implies that there are at least $n$ clauses, where $n$ is the number of variables. Now, consider the following observation that follows from the above discussion and the fact that any optimal solution of a given instance of \textsc{Max (2,3)-SAT} satisfies at least half of the clauses.
\begin{observation} \label{obs:sat}
If $\mathsf{opt}(\phi)$ denotes the optimal value of an instance $\phi$ of \textsc{Max (2,3)-SAT}, then $\mathsf{opt}(\phi)\geq \frac{n}{2}$, where $n$ is the number of variables in $\phi$.
\end{observation}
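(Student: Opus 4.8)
The plan is to combine two elementary facts: a counting bound relating the number of clauses to the number of variables, and the classical ``random assignment'' lower bound for \textsc{Max SAT}. First I would invoke the preprocessing discussion preceding the statement: we may assume every variable of $\phi$ occurs in at least $2$ clauses, and by hypothesis every clause contains exactly $2$ (distinct) variables. Let $n$ be the number of variables and $m$ the number of clauses. Counting incidences between variables and clauses in two ways gives $2m = \sum_{v}(\text{number of clauses containing } v) \ge 2n$, hence $m \ge n$.

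Next I would argue $\mathsf{opt}(\phi) \ge m/2$. The quickest route is the probabilistic method: draw a uniformly random truth assignment; a clause with exactly two literals is falsified only when both its literals evaluate to false, which happens with probability $1/4$, so it is satisfied with probability $3/4 \ge 1/2$. By linearity of expectation, the expected number of satisfied clauses under a random assignment is at least $m/2$, so there exists an assignment satisfying at least $m/2$ clauses; thus $\mathsf{opt}(\phi) \ge m/2$. (Alternatively, one can give the deterministic greedy/derandomized version, but it is not needed here.)

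Finally I would chain the two inequalities: $\mathsf{opt}(\phi) \ge m/2 \ge n/2$, which is exactly the claimed bound. I do not anticipate any real obstacle here; the only point requiring a little care is making sure the counting step uses the post-preprocessing assumption (every variable in $\ge 2$ clauses) rather than the raw instance, and noting that the clauses being $2$-literal is what makes ``satisfied with probability $\ge 1/2$'' hold term by term.
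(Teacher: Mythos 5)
Your proposal is correct and follows essentially the same route as the paper: use the preprocessing assumption that every variable occurs in at least $2$ clauses to get $m \ge n$, and combine it with the standard fact that an optimal assignment satisfies at least half the clauses (which the paper simply cites, while you supply the routine random-assignment justification). No issues.
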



\subsection{Hardness for Two Game Values}

We prove that \probname~is $\mathsf{NP}$-hard and its optimization version is $\mathsf{APX}$-hard even for game-value functions that map to $\{0,1\}$. To this end, consider the following construction.

\begin{construction}\label{const:1}
Given an instance $\phi$ of \textsc{Max (2,3)-SAT}, where $\{x_{1},\ldots,x_{n}\}$ is the set of variables and $\{c_{1},\ldots, c_{m}\}$ is the set of clauses (here, note that $m\leq \frac{3n}{2}$), we create an instance $T_{\phi}$ of \probname\ as follows.

Let $x$ be a variable in $\phi$. Then, we create three \emph{variable players} $x, x^T, x^F$ such that $x>x^T>x^F$. We set 
\[
v(x,x^T,1)=v(x,x^F,1)=1. 
\]

Let $c$ be a clause in $\phi$. Then, we create one \emph{clause player} $c$ such that $c$ is weaker than any variable player. Let variable $x$ appear as a literal in $c$. Let this be the $j$th appearance of $x$ (note that $j$ can be 1, 2, or 3) in $\phi$.
If $x$ appears non-negated in $c$, then we set
\[
v(c,x^T,j)=1, 
\]
otherwise, we set 
\[
v(c,x^F,j)=1. 
\]

Let $n'$ be the smallest integer such that $16\cdot n \le 2^{n'}$. Let $p=2^{n'}-16n$. We create $13n+p-m$ \emph{dummy players}, namely, $\{f_{1},\ldots,f_{13n+p-m}\}$, each of which is weaker than any clause player.

The strength ordering of all the players is given below:

$x_{1}>x_{1}^{T}>x_{1}^{F}>x_{2}>\ldots >x_{n}>x_{n}^{T}>x_{n}^{F}>c_{1}>c_{2}>\ldots>c_{m}>f_{1}>\ldots>f_{13n+p-m}$.

For each pair of players $i,j$ and each round $r\in\{1,2,\ldots,n'\}$ where we have not specified the game value above, we set
\[
v(i,j,r)=0.
\]

This finishes the construction, which can easily be computed in polynomial time. Furthermore, we can observe that the constructed game-value function maps to $\{0,1\}$. 

 \begin{figure}[t]
 \centering
    \includegraphics[scale=0.85]{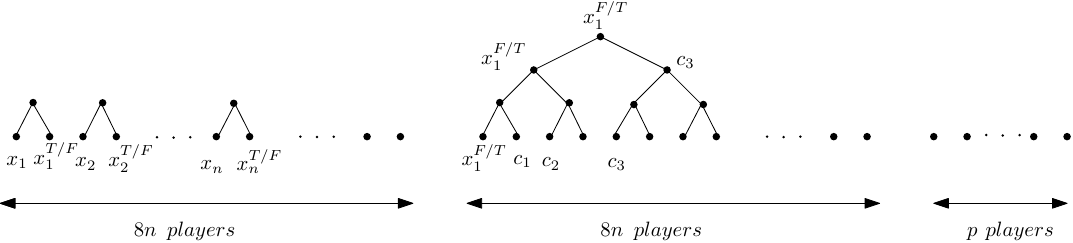}
    \caption{An illustration of the seeding of the players in $T_{\phi}$ in \cref{lem:1}. Here, we assume that $x_{1}^{F/T}$ appears as a literal in clauses $c_{1}$, $c_{2}$, and $c_{3}$.}
    \label{fig:4}
\end{figure}

\end{construction}

Next, we show that \cref{const:1} is an $\mathsf{L}$-reduction.
To this end, we first show that if a formula $\phi$ that is an instance of \textsc{Max (2,3)-SAT} admits an assignment that satisfies $k$ clauses, then there exists a seeding for the instance of \probname\ obtained by applying \cref{const:1} to $\phi$ such that the tournament value is at least $k+n$, where $n$ is the number of variables in $\phi$.

\begin{lemma}\label{lem:1}
Let $\phi$ be an instance of \textsc{Max (2,3)-SAT} with $n$ variables and let $T_\phi$ be the instance of \probname\ obtained by applying \cref{const:1} to $\phi$.
If the formula $\phi$ admits an assignment that satisfies at least $k$ clauses, then $T_{\phi}$ has a seeding corresponding to which the tournament value of $T_{\phi}$ is least $k+n$.
\end{lemma}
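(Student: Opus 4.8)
The plan is to start from an assignment $\beta$ of $\phi$ satisfying at least $k$ clauses, fix a set $S$ of exactly $k$ clauses satisfied by $\beta$, and, for each $c\in S$, designate one variable $x_c$ occurring in $c$ together with one occurrence of $x_c$ in $c$ whose literal $\beta$ satisfies; say this is the $r_c$-th occurrence of $x_c$ in $\phi$, so $r_c\in\{1,2,3\}$. I then build a seeding of $T_\phi$ in which each of the $n$ variables contributes one value-$1$ game, each of the $k$ clauses of $S$ contributes one value-$1$ game, and every other game has value $0$, for a total tournament value of $n+k$. By \cref{obs:hometeam} I may assume w.l.o.g.\ that the game-value function of $T_\phi$ is home team oblivious, so I need not track which competitor of a game lies in the left subtree, and every value specified in \cref{const:1} may be read as symmetric in its first two arguments.

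For the seeding, note $16\mid 2^{n'}$ (as $n'\ge 4$), so I partition the $2^{n'}$ seed positions into $2^{n'-4}$ consecutive blocks of $16$ leaves, of which $n$ are \emph{variable blocks} (one per variable) and the rest are \emph{padding blocks}; every block is a depth-$4$ subtree and splits into its left and right depth-$3$ halves of $8$ leaves. In the variable block of $x$, put $a_x:=x^T$ if $\beta$ sets $x$ true and $a_x:=x^F$ otherwise (the \emph{active twin}), and let $b_x$ be the other twin. In the right half I pair $x$ with $b_x$ and fill the remaining $6$ leaves with (for now) dummies; since $x$ beats $b_x$ and every clause and dummy player and this game is round~$1$, the right half is won by $x$ and contributes $v(x,b_x,1)=1$. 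In the left half I place $a_x$ together with the clause players of $\{c\in S:x_c=x\}$ (at most three, with pairwise distinct $r_c$): for each such $c$, the $2^{r_c-1}$-leaf sub-subtree that $a_x$ meets in round $r_c$ holds $c$ and $2^{r_c-1}-1$ dummies, so $c$ (stronger than the dummies) wins it and meets $a_x$ in round $r_c$; all remaining left-half leaves are dummies. Since $a_x$ beats every clause and dummy player it wins the left half with three wins, and each designated game ``$a_x$ vs.\ $c$ in round $r_c$'' has value $v(c,x^T,r_c)=1$ if the designated occurrence is non-negated (so $a_x=x^T$) or $v(c,x^F,r_c)=1$ if it is negated (so $a_x=x^F$), both $1$ by \cref{const:1}; then $a_x$ meets $x$ in round $4$ and loses, so $x$ wins the block with no value accrued in that round. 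Finally I place the $m-k$ clause players not in $S$ and all unused dummies onto the still-free leaves---the $6$ spare leaves of each right half and the leaves of the padding blocks---arbitrarily; a routine count shows that the $3n$ variable, $m$ clause and $13n+p-m$ dummy players of \cref{const:1} fill exactly the $16n+p=2^{n'}$ leaves.

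It then remains to verify that nothing else scores. By \cref{const:1} a game has positive value only if it is won by a twin $y^T$ or $y^F$ against a clause player, or won by a main player $y$ in round~$1$ against its own twin. In the seeding above, each main player $y$ wins exactly one round-$1$ game against its own twin (its right-half game, value $1$), and since \cref{const:1} gives a main-player win positive value only for such games, $y$ accrues nothing else; each inactive twin $b_y$ is eliminated in round~$1$ by $y$, so it never wins against a clause player; and each active twin $a_y$ meets only dummies and the clause players designated to $y$, contributing exactly $k$ value-$1$ games over all $y$. The clause players not in $S$ sit among dummies in spare right-half leaves or in padding blocks, where they meet no twin (they are knocked out by a main player, resp.\ a stronger block member, before round $4$, and padding blocks contain no twins), so they add nothing; and dummy--dummy, clause--clause and dummy--clause games are $0$ anyway. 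Hence the tournament value is exactly $n+k\ge k+n$. I expect the main obstacle to be the bookkeeping of the second paragraph: checking that the feeder sub-subtrees of the designated clause players really fit inside the depth-$3$ halves, that each variable is designated for at most one clause per round (which holds because a variable's occurrences carry distinct indices), and that all leaf and player counts reconcile with $2^{n'}$; verifying that nothing else scores is then the short case check just sketched.
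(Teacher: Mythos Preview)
Your proof is correct and follows essentially the same approach as the paper: both pair each variable player $x$ with its ``inactive'' twin in round~1, and place the ``active'' twin in an $8$-leaf subtree where it meets each designated satisfied clause in the round matching that occurrence's index. The paper specifies explicit seed positions (putting all $x_i$/inactive-twin pairs in positions $1,\ldots,2n$ and the active-twin subtrees in positions $8n+1,\ldots,16n$), whereas you group everything into per-variable blocks of $16$; this is a cosmetic layout difference, and your additional verification that the value is \emph{exactly} $n+k$ (rather than the paper's ``at least'') is a harmless strengthening.
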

\begin{proof}
 Assume that we are given an assignment to the variables $\{x_{1},\ldots,x_{n}\}$ that satisfies at least $k$ clauses of $\phi$.  We construct a seeding for $T_{\phi}$ as follows.

We seed player $x_{i}$ into position $2i-1$. If $x_{i}$ is set to true under the given assignment, then we seed $x_{i}^F$ into position $2i$, and we seed $x_{i}^T$ into position $8n + 8i-7$. Otherwise, we seed $x_{i}^T$ into position $2i$, and we seed $x_{i}^F$ into position $8n + 8i-7$.

Let $c$ be a clause in $\phi$ that is satisfied by one of its literals $\ell$ under the current assignment. Let $x_{i}$ appear in $\ell$, and let this be the $j$th appearance of $x_{i}$ in $\phi$. Then, we seed player $c$ into position $8n + 8i+2^{j-1}-7$.

Finally, we distribute all dummy players and the clause players that are not satisfied under the current assignment arbitrarily among the remaining seed positions. See \cref{fig:4} for an illustration.

We next show that the described seeding results in a tournament value of at least $k+n$. Note that in the first round of the tournament, we have that for every variable $x$ that player $x$ either plays against player $x^F$ (if $x$ is set to true in the satisfying assignment) or player $x^T$ (if $x$ is set to false in the satisfying assignment), since the two players are seeded next to each other. Each of these games has value 1. Hence these games contribute $n$ to the tournament value.

Now, consider clause player $c$ that corresponds to a satisfied clause. Player $c$ is seeded into position $8n + 8i+2^{j-1}-7$. By construction of the seeding, this means that clause $c$ contains the $j$th appearance of $x_{i}$, and that the clause is satisfied by the truth value assigned to variable $x_{i}$ in the satisfying assignment. Furthermore, we have that player $x_{i}^T$ (if $x_{i}$ is set to true in the satisfying assignment) or player $x_{i}^F$ (if $x_{i}$ is set to false in the satisfying assignment) is seeded into position $8n + 8i-7$. This results in player $c$ playing against either player $x_{i}^T$ or player $x_{i}^F$ in round $j$. By construction, the opponent of $c$ is player $x_{i}^T$ if setting $x_{i}$ to true satisfies $c$ and otherwise, player $x_{i}^F$. The corresponding game has value 1. It follows that at least $k$ clause players play a game with value 1. These games contribute at least $k$ to the tournament value. Hence, we get an overall tournament value of at least $n+k$.
\end{proof}

Now, we show the converse direction, that is, if an instance of \probname\ that is obtained by applying \cref{const:1} to some instance $\phi$ of \textsc{Max (2,3)-SAT} admits a seeding that yields a tournament value of $k'$, then $\phi$ admits an assignment that satisfies at least $k'-n$ clauses. 

\begin{lemma}\label{lem:2}
Let $\phi$ be an instance of \textsc{Max (2,3)-SAT} with $n$ variables and let $T_\phi$ be the instance of \probname\ obtained by applying \cref{const:1} to $\phi$.
If $T_{\phi}$ has a seeding corresponding to which the tournament value of $T_{\phi}$ is least $k'$, then the formula $\phi$ admits an assignment that satisfies at least $k'-n$ clauses of $\phi$.
\end{lemma}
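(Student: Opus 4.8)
The plan is to take an arbitrary seeding $\sigma$ of $T_\phi$ with tournament value $V_\sigma\ge k'$ and read off a good assignment, carefully partitioning the tournament value. By \cref{const:1} every game has value $0$ or $1$, and the value-$1$ games are of two kinds: \emph{variable games} $v(x_i,x_i^T,1)=v(x_i,x_i^F,1)=1$, and \emph{clause games} $v(c,x_i^T,j)=1$ or $v(c,x_i^F,j)=1$ for a clause $c$ in which $x_i$ makes its $j$th appearance. Hence $V_\sigma=A+B$, where $A$ (resp.\ $B$) is the number of value-$1$ variable games (resp.\ clause games) realized by $\sigma$. Two immediate facts: $A\le n$, since each player $x_i$ plays exactly one round-$1$ game and the only value-$1$ games involving $x_i$ are its two possible round-$1$ games against $x_i^T$ and $x_i^F$; and each clause player $c$ is in at most one value-$1$ game, since $c$ is weaker than every variable player, so the first time $c$ meets some $x_i^{T}$ or $x_i^{F}$ it is eliminated, which rules out value-$1$ games in two different rounds.

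Next I would build the candidate assignment by a voting scheme. If clause player $c$ realizes a value-$1$ game, it is against $x_i^T$ for a variable $x_i$ occurring non-negated in $c$, or against $x_i^F$ for $x_i$ occurring negated in $c$; let $c$ \emph{vote} for $x_i$ to be true in the former case and false in the latter, and set each variable to agree with a majority of the votes it receives (ties and abstentions resolved arbitrarily). Call $x_i$ \emph{conflicting} if it receives both a true-vote and a false-vote, and let $C$ be the number of conflicting variables. Because $x_i$ occurs in at most three clauses and the clauses voting for $x_i$ are pairwise distinct (each contains $x_i$), a conflicting $x_i$ has $a_i\ge 1$ true-votes and $b_i\ge 1$ false-votes with $a_i+b_i\le 3$, so $\min(a_i,b_i)\le 1$; thus the majority choice at $x_i$ fails at most one clause that voted for $x_i$. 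A non-conflicting variable satisfies every clause voting for it, and since each clause casts at most one vote, the resulting assignment satisfies at least $B-C$ clauses.

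The crucial step, which I expect to be the main obstacle, is to show that $A\le n-C$, i.e.\ that a conflicting variable contributes no variable game. Suppose $x_i$ is conflicting, so some clause $c$ realizes a value-$1$ game against $x_i^T$ in round $j$; then $x_i^T$ must survive to round $j$. If $j=1$ this game is $x_i^T$'s round-$1$ game and its opponent is $c\ne x_i$; if $j\ge 2$ then $x_i^T$ won its round-$1$ game, and since the stronger player always wins, its round-$1$ opponent is weaker than $x_i^T$ and hence is not $x_i$. Either way $x_i^T$ is not $x_i$'s round-$1$ opponent, and by the same argument applied to the false-vote, neither is $x_i^F$; so $x_i$'s round-$1$ game has value $0$. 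Summing over all variables gives $A\le n-C$. Combining with $V_\sigma=A+B\ge k'$ yields $B\ge k'-n+C$, so the constructed assignment satisfies at least $B-C\ge k'-n$ clauses, proving the lemma. Apart from this first-round "wasting" argument and the degree-$3$ bookkeeping that caps the loss per conflicting variable at one, everything else is a routine accounting of where the tournament value comes from.
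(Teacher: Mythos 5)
Your proposal is correct, but it proves the lemma by a different mechanism than the paper. The paper's proof is an exchange argument: it defines a variable pair $x^T,x^F$ to be \emph{cheating} when both play value-$1$ games against clause players, shows (using that $x$ then wastes its round-$1$ game and that one of $x^T,x^F$ plays only one value-$1$ game, by the at-most-three-appearances bound) that the seeding can be locally modified, without decreasing the tournament value and without creating new cheaters, until no variable cheats, and only then reads off an assignment in which every value-$1$ clause game corresponds to a satisfied clause, so the count $k'-n$ falls out because at most $n$ value-$1$ games avoid clause players. You instead keep the seeding fixed and do a static charging argument: decompose $V_\sigma=A+B$ into variable games and clause games, define the assignment by majority vote over the value-$1$ clause games, and observe that each \emph{conflicting} variable (which is exactly the paper's cheating condition) costs at most one dishonored vote (again via the degree-$3$ bound) while simultaneously forcing $x_i$'s round-$1$ game to have value $0$, so $A\le n-C$ and the two losses cancel, giving $B-C\ge k'-n$. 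The two proofs use the same two combinatorial facts (the round-$1$ wasting claim, which is the paper's Claim~1, and the $\le 3$ appearances bound), but your version avoids the iterative re-seeding entirely, and hence avoids having to argue that the exchange preserves value and does not make previously non-cheating pairs cheat -- a point the paper handles rather tersely; the paper's version, in exchange, yields the cleaner structural byproduct that an optimal seeding without cheating players exists, which it reuses conceptually in the second construction. One cosmetic remark: like the paper's own Lemma~\ref{lem:1}, you treat the game values symmetrically in the two players (the construction literally sets only $v(x,x^T,1)=1$), which is justified by \cref{obs:hometeam}; it would be worth saying so explicitly.
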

\begin{proof}
Assume that the constructed \probname\ instance $T_{\phi}$ has a seeding, say, $\sigma$, that achieves the tournament value $k'$. We give an assignment to $\{x_{1},\ldots,x_{n}\}$ that satisfies at least $k'-n$ clauses of $\phi$. Now, to proceed further, we need the concept of \emph{cheating variable players}. 

For some variable $x$ in $\phi$, we say that the variable players $x^{T}$ and $x^{F}$ \emph{cheat} in a seeding if 
there exist clause players $c$ and $c'$ such that $x^{T}$ plays a game with value 1 against $c$ and $x^{F}$ plays a game with value 1 against $c'$.


In what follows, we will show that if for some variable $x$, the variable players $x^{T}$ and $x^{F}$ cheat in the seeding $\sigma$, then there exists seeding $\sigma'$ for $T_{\phi}$ with tournament value at least $k'$ such that the variable players $x^{T}$ and $x^{F}$ do not cheat. Furthermore, for all variables $x'$ in $\phi$, we have that if the variable players $x'^{T}$ and $x'^{F}$ do not cheat in $\sigma$, then they also do not cheat in $\sigma'$.


Let $x$ be a variable in $\phi$ such that the variable players $x^{T}$ and $x^{F}$ cheat in seeding $\sigma$. Then, there exist clause players $c$ and $c'$ such that $x^{T}$ plays a game with value 1 against $c$ and $x^{F}$ plays a game with value 1 against $c'$.
Now, we have the following claim.
\begin{claim} \label{clm:1}
   The player $x$ must play a 0-value game in the first round.
\end{claim} 

\begin{claimproof}
    Note that the value 1 games that $x$ can play are against $x^{T}$ and $x^{F}$ only. For the sake of contradiction, let us assume w.l.o.g.\ that $x$ plays against $x^{T}$ in the first round, then, since $x$ is stronger than $x^{T}$, it will knock out $x^{T}$, and thus the assumption will not be true (that $x^{T}$ plays a game against clause player $c$).
\end{claimproof} 

So, by \cref{clm:1}, let $x$ play against a player, say, $d \notin \{x^{T},x^{F}\}$ in the first round. 
Furthermore, we can observe that the total number of games with value 1 played by $x^{T}$ and $x^{F}$ is at most 3, since neither of them plays a game with value 1 against player $x$ and there are at most three clause players $c$, $c'$, and potentially $c''$ (if $x$ appears in three clauses) that can play games with value 1 against $x^{T}$ or $x^{F}$. It follows that either $x^{T}$ or $x^{F}$ plays only one game with value 1. Assume w.l.o.g.\ that only $x^{T}$ plays one game with value 1 against $c$.
Now, we will construct a seeding $\sigma'$ from $\sigma$ as follows. 

In $\sigma'$, we seed the players in such a manner that $x$ plays against $x^{T}$ in the first round, and the player playing against $x^{T}$ in the first round in $\sigma$ plays against $d$ in the first round in $\sigma'$. The rest of the seeding remains the same. Note that the tournament value corresponding to $\sigma'$ does not decrease as we lose at most one value 1 game (between $x^{T}$ and clause player $c$) and achieve at least one value 1 game (between $x^{T}$ and $x$). 
Furthermore, note that for all variables $x'$ in $\phi$, we have that if the variable players $x'^{T}$ and $x'^{F}$ are not cheating in $\sigma$, then they are also not cheating in $\sigma'$.
It follows that by repeating the above-described procedure, we can find a seeding for $T_{\phi}$ with tournament value at least $k'$ such that no players cheat.

Finally, we construct an assignment for the variables in $\phi$ as follows. We set the variable $x$ in $\phi$ to true if there is a clause player $c$ that plays a game with value 1 against $x^T$. Otherwise, we set $x$ to false.
The variables that do not get any assignment above are given an arbitrary assignment. 

We claim that the described assignment satisfies at least $k'-n$ clauses.
Note that since we have a seeding without cheating players, we cannot have two clause players $c,c'$ such that $c$ plays a game with value 1 against some variable player $x^T$ and $c'$ plays a game with value 1 against variable player $x^F$. Furthermore, since clause players are weaker than variable players, we have that each clause player can play at most one game with value 1. It follows that in the constructed assignment, we have that a clause is satisfied if and only if the corresponding clause player plays a game of value 1. From the above, we can deduce that the number of satisfied clauses equals the number of games with value 1 that involve a clause player. Notice that we can have at most $n$ games of value 1 that do not involve a clause player: for every variable player $x$, we can have one game in round one against either $x^T$ or $x^F$ that has value one.
Thus, we can say that we have an assignment satisfying at least $k'-n$ clauses of $\phi$.
\end{proof}




\cref{lem:1} and \cref{lem:2} establish the following theorem. 
\begin{theorem} \label{thm:apxhard}
 \probname\ is \textsf{NP}-hard for game-value functions that map to $\{0,1\}$.
\end{theorem}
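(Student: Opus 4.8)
The plan is to combine the two directions already established in \cref{lem:1} and \cref{lem:2} to conclude the $\mathsf{NP}$-hardness of \probname\ (in its decision form) for $\{0,1\}$-valued game-value functions. First I would observe that \cref{const:1} runs in polynomial time and produces an instance $T_\phi$ of \probname\ whose game-value function maps into $\{0,1\}$; this was already noted at the end of the construction. Then I would view \cref{lem:1} and \cref{lem:2} together as a precise correspondence: $\phi$ has an assignment satisfying at least $k$ clauses if and only if $T_\phi$ admits a seeding with tournament value at least $k+n$. Indeed, \cref{lem:1} gives the ``only if'' part directly (set $k'=k+n$), and \cref{lem:2} gives the ``if'' part by taking $k'$ to be the tournament value and reading off an assignment satisfying at least $k'-n=k$ clauses.

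Next I would turn this equivalence into a many-one reduction from the classical decision problem \textsc{(2,3)-Sat} (or, equivalently, from the decision version of \textsc{Max (2,3)-Sat}: ``given $\phi$ and an integer $k$, is there an assignment satisfying at least $k$ clauses?''), which is $\mathsf{NP}$-hard. Given $(\phi,k)$ I would output the \probname\ instance consisting of $T_\phi$ together with target value $V=k+n$, where $n$ is the number of variables of $\phi$. The equivalence above shows that $(\phi,k)$ is a yes-instance of the \textsc{Max (2,3)-Sat} decision problem precisely when $(T_\phi,k+n)$ is a yes-instance of \probname, and the output is computable in polynomial time. Since the game-value function of $T_\phi$ is $\{0,1\}$-valued, this establishes $\mathsf{NP}$-hardness of \probname\ restricted to $\{0,1\}$-valued game-value functions, which is exactly \cref{thm:apxhard}.

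I do not anticipate a serious obstacle here: the two lemmas do essentially all the work, and what remains is bookkeeping — checking polynomial-time computability of the map $(\phi,k)\mapsto(T_\phi,k+n)$ and verifying the biconditional by plugging $k'=k+n$ into \cref{lem:2} and $k=k$ into \cref{lem:1}. The only point worth stating carefully is that the two inequalities ``$\ge k+n$'' (from \cref{lem:1}) and ``tournament value $k'$ implies $\ge k'-n$ satisfiable clauses'' (from \cref{lem:2}) are genuinely inverse to each other, so that the threshold $V=k+n$ is the right one and no slack is lost. (If one prefers, the very same pair $(f,g)$ — with $f$ mapping $\phi$ to $T_\phi$ and $g$ extracting an assignment from a seeding as in the proof of \cref{lem:2} — together with \cref{obs:sat} already yields the stronger $\mathsf{APX}$-hardness via \cref{def:lred}; for the present theorem, plain $\mathsf{NP}$-hardness suffices.)
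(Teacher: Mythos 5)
Your proposal is correct and matches the paper's argument: the paper likewise obtains \cref{thm:apxhard} directly by combining \cref{lem:1} and \cref{lem:2}, which together show that the polynomial-time \cref{const:1} with target value $V=k+n$ is a many-one reduction from the (\textsf{NP}-hard) decision version of \textsc{Max (2,3)-SAT}. No gap to report.
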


Furthermore, the following corollary follows immediately from \cref{lem:1} and \cref{lem:2}.

 \begin{corollary} \label{lem:3} If $\mathsf{opt}_{A}(\phi)$ denotes the number of clauses satisfied by an optimal assignment in the given instance $\phi$ of \textsc{Max (2,3)-SAT} and $\mathsf{opt}_{B}(T_{\phi})$ denotes the value of an optimal solution of the constructed \probname\ instance $T_{\phi}$, then $\mathsf{opt}_{B}(T_{\phi})=\mathsf{opt}_{A}(\phi)+n$.
\end{corollary}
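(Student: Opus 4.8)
The plan is to derive the exact identity $\mathsf{opt}_{B}(T_{\phi})=\mathsf{opt}_{A}(\phi)+n$ by combining the two directions already established in \cref{lem:1} and \cref{lem:2}, used as matching lower and upper bounds. Since both lemmas are stated with inequalities ``at least'', each one gives us one inequality on $\mathsf{opt}_B(T_\phi)$, and together they pin down its value exactly.

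First I would prove $\mathsf{opt}_{B}(T_{\phi})\ge \mathsf{opt}_{A}(\phi)+n$. Let $\sigma^\star$ be an optimal assignment for $\phi$, so it satisfies exactly $k:=\mathsf{opt}_{A}(\phi)$ clauses, hence at least $k$ clauses. By \cref{lem:1}, $T_{\phi}$ then admits a seeding whose tournament value is at least $k+n=\mathsf{opt}_{A}(\phi)+n$. Since $\mathsf{opt}_{B}(T_{\phi})$ is the value of an optimal seeding, it is at least this quantity.

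Next I would prove the reverse inequality $\mathsf{opt}_{B}(T_{\phi})\le \mathsf{opt}_{A}(\phi)+n$. Apply \cref{lem:2} with $k':=\mathsf{opt}_{B}(T_{\phi})$: since $T_{\phi}$ has a seeding (namely an optimal one) achieving tournament value $k'$, the formula $\phi$ admits an assignment satisfying at least $k'-n$ clauses. Hence $\mathsf{opt}_{A}(\phi)\ge k'-n=\mathsf{opt}_{B}(T_{\phi})-n$, which rearranges to the desired bound. Combining the two inequalities yields $\mathsf{opt}_{B}(T_{\phi})=\mathsf{opt}_{A}(\phi)+n$.

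There is essentially no technical obstacle here: the content is entirely in \cref{lem:1} and \cref{lem:2}, and the corollary is a bookkeeping step. The only point worth stating carefully is that \cref{lem:1} should be instantiated at $k=\mathsf{opt}_A(\phi)$ (using that an assignment satisfying exactly $\mathsf{opt}_A(\phi)$ clauses in particular satisfies ``at least'' that many), and \cref{lem:2} at $k'=\mathsf{opt}_B(T_\phi)$, so that the two estimates refer to the same two quantities and cancel to give equality rather than just a two-sided approximation.
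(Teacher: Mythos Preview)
Your proposal is correct and matches the paper's approach exactly: the paper simply states that the corollary ``follows immediately from \cref{lem:1} and \cref{lem:2}'', and your argument spells out precisely this two-inequality combination.
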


Using this corollary, we move our attention towards establishing the $\mathsf{APX}$-hardness. 

\begin{theorem} \label{thm:apxhard1}
The optimization version of \probname\ is \textsf{APX}-hard for game-value functions that map to $\{0,1\}$.
\end{theorem}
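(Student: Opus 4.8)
The plan is to show that the pair $(f,g)$ is an $\mathsf{L}$-reduction from \textsc{Max (2,3)-SAT} to the optimization version of \probname, where $f$ is \cref{const:1} and $g$ is the map that, given a seeding $\sigma$ for $T_\phi$, outputs the truth assignment produced in the proof of \cref{lem:2}. Since \textsc{Max (2,3)-SAT} is $\mathsf{APX}$-hard and $\mathsf{APX}$-hardness is preserved under $\mathsf{L}$-reductions (they compose with $\mathsf{PTAS}$-reductions), this gives the theorem. Both $f$ and $g$ are polynomial-time computable: $f$ is polynomial-time by \cref{const:1}, and $g$ is polynomial-time because the ``uncheating'' step of \cref{lem:2} is applied at most once per variable (it removes one cheating pair without creating a new one) and each application is a local swap of two seed positions, after which reading off the assignment is immediate.

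First I would verify the first $\mathsf{L}$-reduction condition with $\alpha=3$. By \cref{lem:3} we have $\mathsf{opt}_B(T_\phi)=\mathsf{opt}_A(\phi)+n$, and by \cref{obs:sat} we have $n\le 2\cdot\mathsf{opt}_A(\phi)$; hence $\mathsf{opt}_B(T_\phi)\le 3\cdot\mathsf{opt}_A(\phi)$, as required.

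Next I would verify the second condition with $\beta=1$. Fix an instance $\phi$ of \textsc{Max (2,3)-SAT} and a feasible solution $\sigma$ of $T_\phi=f(\phi)$, i.e.\ a seeding, with tournament value $k'=c_{T_\phi}(\sigma)$. By \cref{lem:2}, the assignment $g(\sigma)$ satisfies at least $k'-n$ clauses, so $c_\phi(g(\sigma))\ge k'-n$. Since no assignment satisfies more than $\mathsf{opt}_A(\phi)$ clauses and no seeding has value exceeding $\mathsf{opt}_B(T_\phi)$, both differences $\mathsf{opt}_A(\phi)-c_\phi(g(\sigma))$ and $\mathsf{opt}_B(T_\phi)-k'$ are nonnegative. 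Combining this with \cref{lem:3}, I would conclude
\[
\bigl|\mathsf{opt}_A(\phi)-c_\phi(g(\sigma))\bigr|=\mathsf{opt}_A(\phi)-c_\phi(g(\sigma))\le\bigl(\mathsf{opt}_B(T_\phi)-n\bigr)-\bigl(k'-n\bigr)=\mathsf{opt}_B(T_\phi)-k'=\bigl|\mathsf{opt}_B(T_\phi)-c_{T_\phi}(\sigma)\bigr|,
\]
which is exactly the second condition with $\beta=1$. Hence $(f,g)$ is an $\mathsf{L}$-reduction and the theorem follows.

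I do not expect a real obstacle here; the argument is essentially bookkeeping on top of \cref{lem:3} and \cref{obs:sat}. The two points that need a moment of care are (i) confirming that $g$ is genuinely polynomial-time, which reduces to observing that the cheating-elimination loop of \cref{lem:2} makes at most $n$ passes and never re-introduces a cheating pair (already argued there), and (ii) invoking the correct transfer principle, namely that an $\mathsf{L}$-reduction from an $\mathsf{APX}$-hard problem witnesses $\mathsf{APX}$-hardness of the target.
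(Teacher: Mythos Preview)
Your proposal is correct and follows essentially the same approach as the paper: verify that \cref{const:1} together with the assignment-extraction of \cref{lem:2} form an $\mathsf{L}$-reduction with $\alpha=3$ and $\beta=1$, using \cref{obs:sat} and \cref{lem:3} for the bookkeeping. If anything, your treatment is slightly more careful, since you correctly write the second condition as an inequality (which is all that \cref{lem:2} gives) and you explicitly argue polynomial-time computability of $g$, whereas the paper writes an equality in its display and leaves the running time of $g$ implicit.
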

\begin{proof}

To obtain the result, we show that \cref{const:1} is an $\mathsf{L}$-reduction. 
Let $\phi$ be an instance of \textsc{Max (2,3)-SAT} with $n$ variables and let $T_\phi$ be the instance of \probname\ obtained by applying \cref{const:1} to $\phi$.


First, note that by \cref{obs:sat}, we have \begin{equation}\label{eq:01} \frac{n}{2}\leq \mathsf{opt}_{A}(\phi).\end{equation} 

By \cref{lem:3}, we have
\begin{equation}\label{eq:02}
\mathsf{opt}_{B}(T_{\phi})=\mathsf{opt}_{A}(\phi)+n
\end{equation}

Combining (\ref{eq:01}) and (\ref{eq:02}), we have 
\begin{equation}\label{eq:03}
     \mathsf{opt}_{B}(T_{\phi})=\mathsf{opt}_{A}(\phi)+n\leq \mathsf{opt}_{A}(\phi)+2\cdot \mathsf{opt}_{A}(\phi)=3\cdot \mathsf{opt}_{A}(\phi).
\end{equation}

From \cref{lem:2}, if the constructed instance of \probname\ has a solution $y$, then the given instance of \textsc{Max (2,3)-SAT} has a solution $g(y)$ such that if $c_A(g(y))$ and $c_B(y)$ denote the cost functions of instances $g(y)$ and $y$, respectively, then 

\begin{equation}\label{eq:004}
c_{A}(g(y))\geq c_{B}(y)-n.
\end{equation}

Now, using (\ref{eq:02}) and (\ref{eq:004}), we have
\begin{equation} \label{eq:05}
\vert (\mathsf{opt}_{A}(\phi)-c_{A}(g(y))\vert=\vert (\mathsf{opt}_{B}(T_\phi)-n)-(c_{B}(y)-n)\vert = \vert \mathsf{opt}_{B}(T_{\phi})-c_{B}(y)\vert.
\end{equation}

From (\ref{eq:03}) and (\ref{eq:05}), it follows that we have an $\mathsf{L}$-reduction with $\alpha=3$ and $\beta=1$ (see \cref{def:lred}). This finishes the proof.
\end{proof}


\subsection{Hardness for Round Obliviousness and Three Game Values} 
We prove that \probname~is $\mathsf{NP}$-hard and its optimization version is $\mathsf{APX}$-hard even for game-value functions that are round oblivious and map to three different values. To this end, consider the following construction.

\begin{construction}\label{const:2}
Given an instance $\phi$ of \textsc{Max (2,3)-SAT}, where $\{x_{1},\ldots,x_{n}\}$ is the set of variables and $\{c_{1},\ldots, c_{m}\}$ is the set of clauses (here, note that $m\leq \frac{3n}{2}$), we create an instance $T_{\phi}$ of \probname\ as follows. 

First, note that we aim to create an instance of \probname\ that has round-oblivious game values. Therefore, we will avoid writing the round number while defining the game-value functions. Furthermore, we will use negative game values in this construction. Towards the end of the section, we argue that these can be removed using \cref{obs:constant}.
For every variable $x$ in $\phi$, we create three \emph{variable players} $x$, $x^T$, $x^F$, and we set 
\begin{equation} \label{eq:1}
    v(x,x^T)=v(x,x^F)=1.
\end{equation}


For every clause $c$ in $\phi$, we create one \emph{clause player} $c$. 
Let variable $x$ appear as a literal in $c$.
If $x$ appears non-negated in $c$, then we set
\[
v(c,x^T)=1,
\]
otherwise, we set 
\[
v(c,x^F)=1.
\]

For every $i\in[n]$, we have 3 \emph{special players} namely, $\widehat{d}_{i}$, $d_{i}$, and $\widetilde{d}_{i}$, and we set
\begin{equation} \label{eq:2}
v(d_{i},\widehat{d}_{i})=v(d_{i},\widetilde{d}_{i})=v(d_{i},x_{i})=0.
\end{equation}

Let $n'$ be the smallest integer such that $16n\le 2^{n'}$. Let $p=2^{n'}-16n$. We create $10n+p-m$ \emph{dummy players}, namely, $\{f_{1},\ldots,f_{10n+p-m}\}$.

The strength ordering of all the players is given below:

$\widehat{d}_{1}>d_{1}>\widetilde{d}_{1}>\widehat{d}_{2}>d_{2}>\widetilde{d}_{2}>\ldots >\widehat{d}_{n}>d_{n}>\widetilde{d}_{n}>x_{1}>x_{1}^{T}>x_{1}^{F}>x_{2}>\ldots >x_{n}>x_{n}^{T}>x_{n}^{F}>c_{1}>c_{2}>\ldots>c_{m}>f_{1}>\ldots>f_{10n+p-m}$.

For every $i\in[n]$ and for every player $Y$ for which we have not already set the value in (\ref{eq:1}) or (\ref{eq:2}), we set 
\[
v(d_{i},Y)=v(x_{i},Y)=-5.
\]

In other words, both $x$ and $d$ can play at most 3 positive value games; otherwise, they have to play a negative value game.

For each pair of players $i,j$, where we have not specified the game value above, we set
\[
v(i,j)=0.
\]

  This finishes the construction which can easily be computed in polynomial time. Furthermore, we can observe that the constructed game-value function maps to $\{0,1,-5\}$ and is round oblivious.

 \begin{figure}[t]
 \centering
    \includegraphics[scale=0.85]{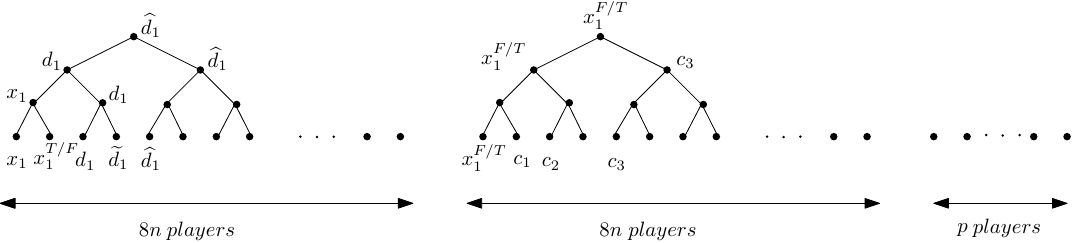}
    \caption{An illustration of the seeding of the players in $T_{\phi}$ in \cref{lem:4}. Here, we assume that $x_{1}^{F/T}$ appears as a literal in clauses $c_{1}$, $c_{2}$, and $c_{3}$.}
    \label{fig:3}
\end{figure}
    
\end{construction}

Next, we show that \cref{const:2} is an $\mathsf{L}$-reduction.
To this end, we first show that if a formula $\phi$ that is an instance of \textsc{Max (2,3)-SAT} admits an assignment that satisfies $k$ clauses, then there exists a seeding for the instance of \probname\ obtained by applying \cref{const:2} to $\phi$ such that the tournament value is at least $k+n$, where $n$ is the number of variables in $\phi$.

\begin{lemma}\label{lem:4}
Let $\phi$ be an instance of \textsc{Max (2,3)-SAT} with $n$ variables and let $T_\phi$ be the instance of \probname\ obtained by applying \cref{const:2} to $\phi$.
If the formula $\phi$ has an assignment that satisfies at least $k$ clauses, then $T_{\phi}$ has a seeding corresponding to which the tournament value of $T_{\phi}$ is least $k+n$.
\end{lemma}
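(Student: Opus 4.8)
The plan is to follow the proof of \cref{lem:1} closely, with one structural change forced by round-obliviousness. In \cref{const:1}, player $x_i$ could keep winning after its first-round value-$1$ game because all of its later games had value $0$ in the rounds in which they were played; since the values in \cref{const:2} do not depend on the round, $x_i$ must instead be knocked out immediately after that value-$1$ game, for otherwise it would eventually be dragged into a game of value $-5$. This is exactly what the special players are for: $\widetilde{d}_i$ serves as a harmless first-round opponent of $d_i$ (value $0$ by (\ref{eq:2})) so that $d_i$ survives to round~$2$; $d_i$ then meets $x_i$ in round~$2$ and eliminates it with value $0$ (again by (\ref{eq:2})); and $\widehat{d}_i$, which also beats $d_i$ with value $0$, eliminates $d_i$ in round~$3$, after which no player in that part of the bracket is any $x_i$ or any $d_i$.

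Concretely, I would regard the $2^{n'}$ leaves as $2^{n'-3}$ consecutive blocks of $8$ leaves (the subtrees rooted at depth $n'-3$) and use $2n$ of them. For each variable $x_i$, let $y_i$ be the player of $\{x_i^{T},x_i^{F}\}$ whose superscript matches the truth value of $x_i$ under the assignment, and let $y_i'$ be the other one. In a \emph{type-A block} for $x_i$ I seed $x_i$ and $y_i'$ as one first-round pair (their game has value $1$ by (\ref{eq:1})) and $\widetilde{d}_i,d_i$ as the other first-round pair of that half (value $0$; $d_i$ survives as it is the stronger of the two), so that $d_i$ meets $x_i$ in round~$2$ and eliminates it with value $0$; in the other half of the block I place $\widehat{d}_i$ with filler players so that $\widehat{d}_i$ wins that half and, in round~$3$, eliminates $d_i$ with value $0$ and wins the block. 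In a \emph{type-B block} for $x_i$ I seed $y_i$ with the (at most three) satisfied clauses for which $x_i$'s literal is chosen as the representative, positioned so that $y_i$ meets each of them in succession; each such game has value $1$ by construction and $y_i$ wins the block. Finally, I distribute all dummy players and all clause players of unsatisfied clauses among the remaining leaves---the filler slots inside the blocks and the $p$ leaves of the unused blocks---arranging things so that inside a type-B block such a clause player is always surrounded by dummies (the globally weakest players) and hence cannot win a game against a representative clause of $y_i$; a short counting argument shows there are always enough dummy players to do this.

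It remains to bound the value of this seeding from below. The $n$ games ``$x_i$ versus $y_i'$'' contribute $n$; the games ``$y_i$ versus a representative clause'' contribute exactly the number of satisfied clauses, which is at least $k$; and every remaining game has value at least $0$. The last point is the crux: in \cref{const:2} a game has negative value only if one of its players is some $x_i$ playing an opponent outside $\{x_i^{T},x_i^{F},d_i\}$, or some $d_i$ playing an opponent outside $\{\widehat{d}_i,\widetilde{d}_i,x_i\}$; but in the seeding above, $x_i$ plays exactly two games (against $y_i'$, then against $d_i$) and $d_i$ plays exactly three games (against $\widetilde{d}_i$, then $x_i$, then $\widehat{d}_i$), all against allowed opponents. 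Hence the value of the seeding is at least $n+k$.

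I expect the main obstacle to be the bookkeeping, heavier here than in \cref{lem:1} because of the special players and fillers: one must check that $\widehat{d}_i$, being stronger than $d_i$, would spoil the elimination of $x_i$ if it were $d_i$'s first-round opponent (hence $\widetilde{d}_i$ must be), and that $\widehat{d}_i$ genuinely wins its half of the type-A block so that $d_i$'s round-$3$ opponent has value $0$; one must also verify that an unsatisfied-clause player---which in this seeding only ever meets dummies, other clause players, some $\widehat{d}_j$, or some $y_j$---never produces a negative game, using in particular that a game ``unsatisfied clause versus $y_j$'' has value $0$ by a polarity argument (an unsatisfied clause cannot contain the literal of $x_j$ agreeing with $x_j$'s truth value). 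Once the block layout (illustrated in \cref{fig:3}) is fixed, all of this is routine.
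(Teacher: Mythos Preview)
Your proposal is correct and follows essentially the same approach as the paper: both build, for each variable $x_i$, a block of eight leaves in which $x_i$ first beats $y_i'$ (value~$1$), then is eliminated by $d_i$, which is in turn eliminated by $\widehat{d}_i$ (all value~$0$), together with a second block in which $y_i$ meets its representative satisfied clauses in successive rounds. If anything you are slightly more careful than the paper, which places unsatisfied clause players ``arbitrarily'' among the remaining positions without noting (as you do) that one must keep them from knocking out a representative clause inside a type-B block; your counting remark and polarity observation handle exactly this point.
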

\begin{proof}
Assume that we are given an assignment to the variables $\{x_{1},\ldots,x_{n}\}$ that satisfies at least $k$ clauses of $\phi$. We construct a seeding for $T_{\phi}$ as follows.

We seed player $x_{i}$ into position $8i-7$. If $x_{i}$ is set to true in the satisfying assignment, then we seed $x_{i}^F$ into position $8i-6$, and we seed $x_{i}^T$ into position $8n + 8i-7$. Otherwise, we seed $x_{i}^T$ into position $8i-6$, and we seed $x_{i}^F$ into position $8n + 8i-7$. Next, we seed $d_{i}$ into position $8i-5$, $\widetilde{d}_{i}$ into position $8i-4$, and $\widehat{d_{i}}$ into position $8i-3$.

Let $c$ be a clause in $\phi$ that is satisfied by one of its literals $\ell$ under the current assignment. Let $x_{i}$ appear in $\ell$, and let this be the $j$th appearance of $x_{i}$ in $\phi$ (note that $j$ can be 1, 2, or 3). Then, we seed player $c$ into position $8n + 8i+2^{j-1}-7$. Let $c'$ be a clause in $\phi$ that is not satisfied by the current assignment. Then, we distribute $c'$ and all dummy players arbitrarily among the remaining seed positions. See \cref{fig:3} for an illustration.

We next show that the described seeding results in a tournament value of at least $k+n$. Note that in the first round of the tournament, we have that for every variable $x$ that player $x$ either plays against player $x^F$ (if $x$ is set to true in the satisfying assignment) or player $x^T$ (if $x$ is set to false in the satisfying assignment), since the two players are seeded next to each other. Each of these games has value 1. Hence these games contribute $n$ to the tournament value.

Now, consider clause player $c$ that corresponds to a satisfied clause. Player $c$ is seeded into position $8n + 8i+2^{j-1}-7$. By construction of the seeding, this means that clause $c$ contains the $j$th appearance of the variable $x_{i}$ of $\phi$, and that the clause is satisfied by the truth value assigned to variable $x_{i}$ in the satisfying assignment. Furthermore, we have that player $x_{i}^T$ (if $x_{i}$ is set to true in the satisfying assignment) or player $x_{i}^F$ (if $x_{i}$ is set to false in the satisfying assignment) is seeded into position $8n + 8i-7$. This results in player $c$ playing against either player $x_{i}^T$ or player $x_{i}^F$ in round $j$. By construction, the opponent of $c$ is player $x_{i}^T$ if setting $x_{i}$ to true satisfies $c$ and otherwise, player $x_{i}^F$. The corresponding game has value 1. It follows that every satisfied clause player plays one game with value 1. These games contribute at least $k$ to the tournament value. Hence, we get an overall tournament value of at least $k+n$. Note that no game carrying a negative value occurs.
\end{proof}

Now, we show the converse direction, that is, if an instance of \probname\ that is obtained by applying \cref{const:2} to some instance $\phi$ of \textsc{Max (2,3)-SAT} admits a seeding that yields a tournament value of $k'$, then $\phi$ admits an assignment that satisfies at least $k'-n$ clauses. 

\begin{lemma}\label{lem:5}
Let $\phi$ be an instance of \textsc{Max (2,3)-SAT} with $n$ variables and let $T_\phi$ be the instance of \probname\ obtained by applying \cref{const:2} to $\phi$.
If $T_{\phi}$ has a seeding corresponding to which the tournament value of $T_{\phi}$ is least $k+n$, then the formula $\phi$ admits an assignment that satisfies at least $k$ clauses of $\phi$.
\end{lemma}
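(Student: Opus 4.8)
The plan is to avoid re-seeding (the route taken for \cref{lem:2}) and instead to read an assignment directly off the given seeding, paying for the games of value $-5$ by a charging argument. So I would start from a seeding $\sigma$ of $T_\phi$ of value at least $k+n$ and, for each variable $i$, let $t_i$ (resp.\ $f_i$) be the number of value-$1$ games that $x_i^T$ (resp.\ $x_i^F$) wins in $\sigma$; by \cref{const:2} these are exactly that player's games against clause players, so $t_i+f_i\le 3$ since $x_i$ occurs in at most three clauses. Let $g_i\in\{0,1,2\}$ be the number of value-$1$ games that $x_i$ wins (these are necessarily against $x_i^T$ and $x_i^F$). Every value-$1$ game has a unique winner, always one of $x_i,x_i^T,x_i^F$, so the value of $\sigma$ equals $\sum_i(g_i+t_i+f_i)-5z$, where $z$ is the number of games of value $-5$ played. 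I would then define the assignment $A$ by making $x_i$ true if $t_i\ge f_i$ and false otherwise, and prove that $A$ satisfies at least $k$ clauses.

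The crux is a structural claim: if $g_i=2$, or if $g_i\ge 1$ while both $t_i\ge1$ and $f_i\ge1$, then $\sigma$ plays a game of value $-5$ incident to $x_i$ or $d_i$. Here I would use that the only non-negative opponents of $x_i$ are $x_i^T,x_i^F$ (which it beats, for value $1$) and $d_i$ (which beats it, for value $0$); that the only non-negative opponents of $d_i$ are $x_i,\widetilde d_i$ (which it beats) and $\widehat d_i$ (which beats it); and that every special player outranks every variable player. In the second case, $x_i$'s round-$1$ opponent cannot be $x_i^T$ or $x_i^F$ (that eliminates one of them in round $1$, forcing $t_i=0$ or $f_i=0$) nor $d_i$ (then $x_i$ loses in round $1$, so $g_i=0$), hence that game has value $-5$. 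In the first case ($g_i=2$), either $x_i$ plays a game of value $-5$, or $x_i$ wins exactly rounds $1$ and $2$ (beating $x_i^T$ and $x_i^F$) and then plays a round-$3$ game (one exists since $2^{n'}\ge16n$); its round-$3$ opponent is not $x_i^T$ or $x_i^F$, and if it is not $d_i$ the game has value $-5$, while if it is $d_i$ then $d_i$ has won a depth-$2$ subtournament, which is impossible "legally" because its only beatable safe opponents are $\widetilde d_i$ and $x_i$, the latter sitting in the sibling subtournament — so some game of $d_i$ has value $-5$. In all cases I \emph{charge} one such game of value $-5$ to $i$; since such a game is incident to at most two players of the form $x_j$ or $d_j$, at most $2z$ variables get charged.

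For the count: each clause player loses its first game against a variable player, hence is the loser of at most one value-$1$ game, so the clauses beaten by $x_i^T$ when $A$ makes $x_i$ true, and by $x_i^F$ otherwise — in total $\max(t_i,f_i)$ of them for each $i$ — are pairwise distinct and are all satisfied by $A$; thus $A$ satisfies at least $\sum_i\max(t_i,f_i)$ clauses. Writing $c_i=\max(t_i,f_i)$ and $s_i=\min(t_i,f_i)$, one has $g_i+t_i+f_i=1+c_i+(g_i+s_i-1)$, where $g_i+s_i-1\le2$ and is positive only when $g_i+s_i\ge2$, which is precisely the hypothesis of the claim; hence $\sum_i(g_i+s_i-1)\le 2\cdot\#\{\text{charged }i\}\le 4z$. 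Consequently the value of $\sigma$ equals $n+\sum_ic_i+\sum_i(g_i+s_i-1)-5z\le n+\sum_ic_i\le n+(\text{number of clauses satisfied by }A)$, and since this value is at least $k+n$, the assignment $A$ satisfies at least $k$ clauses, proving the lemma. I expect the main obstacle to be the structural claim, in particular the subcase $g_i=2$: pinning down why $d_i$ cannot reach round $3$ without a game of value $-5$ requires a careful use of the strength order (all special players above all variable players) and of the fact that there $x_i$ already occupies the sibling subtournament; the rest is bookkeeping.
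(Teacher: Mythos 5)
Your proposal is correct, and it reaches the conclusion by a different organization than the paper, although both rest on the same gadget insight. The paper's proof of \cref{lem:5} introduces ``cheating'' variable players, proves three separate claims (\cref{claim:negativegames1}, \cref{claim:negativegames2}, \cref{claim:cheat1}) about which configurations force a $-5$ game or limit the value collected by $x_i^T,x_i^F$, then defines the assignment by a three-way case distinction and argues, variable by variable, that at most one unit of value per variable cannot be attributed to a satisfied clause, with the remaining case fully canceled by a $-5$ game. You instead take a majority assignment ($x_i$ true iff $t_i\ge f_i$), write the tournament value exactly as $\sum_i(g_i+t_i+f_i)-5z$, decompose it as $n+\sum_i\max(t_i,f_i)+\sum_i(g_i+\min(t_i,f_i)-1)-5z$, and pay for the excess terms by a single structural claim (your two cases indeed cover every variable with positive excess: $g_i=2$ is handled via the forced round-$3$ meeting with $d_i$ and the fact that $d_i$ cannot win two early rounds against safe opponents, and $g_i\ge1$ with $t_i,f_i\ge1$ via the round-$1$ opponent of $x_i$), followed by a charging argument with multiplicity at most $2$ per $-5$ game, giving $4z\le 5z$. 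What your route buys is a cleaner and more explicit count: the cheating case analysis disappears, the assignment rule is uniform, and the possibility that one $-5$ game is shared by two variables (e.g.\ $x_i$ vs.\ $x_j$ or $x_i$ vs.\ $d_j$) is handled explicitly by the factor-$2$ multiplicity, a point the paper's per-variable cancellation leaves implicit; what the paper's route buys is that the same cheating machinery is reused from \cref{lem:2}. One cosmetic correction: the value-$1$ wins of $x_i^T$ are its games against clause players of clauses containing the literal $x_i$ (its games against other clause players have value $0$), which is what you actually use for $t_i+f_i\le 3$ and for the satisfiability of the counted clauses.
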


\begin{proof}
Assume that the constructed \probname\ instance $T_{\phi}$ has a seeding, say, $\sigma$, that achieves the tournament value $k'$.  We construct an assignment for $\{x_{1},\ldots,x_{n}\}$ that satisfies at least $k'-n$ clauses of $\phi$. 
Now, to proceed further, we need the concept of \emph{cheating variable players}, which is similar to the one we use in the proof of \cref{thm:apxhard}.

For some variable $x$ in $\phi$, we say that the variable players $x^{T}$ and $x^{F}$ \emph{cheat} in a seeding if:
\begin{enumerate}
    \item there exist clause players $c$ and $c'$ such that $x^{T}$ plays a game with value 1 against $c$ and $x^{F}$ plays a game with value 1 against $c'$, or
    \item both $x^{T}$ and $x^{F}$ play a game against player $x$.
\end{enumerate}

In what follows, we analyze seeding with cheating players and establish some properties of those seedings.

 We first show that if players $x_{i}^{T}$ and $x_{i}^{F}$ cheat in the second sense, there must be at least one game with value $-5$ that involves either $x_{i}$ or $d_{i}$. 
Intuitively, this will allow us to set variable $x$ to some arbitrary value.

\begin{claim}\label{claim:negativegames1}
    Let $x_i$ be a variable in $\phi$ and assume that both players $x_i^{T}$ and $x_i^{F}$ play a game against player $x_i$ in the seeding $\sigma$. Then, player $x_i$ or player $d_i$ plays a game with value $-5$.
\end{claim}
\begin{claimproof}
 If $x_{i}$ plays against some player $\hat{d}_{j}$, $d_{j}$ or $\tilde{d}_{j}$ in the first round, then 
 $x_{i}$ will get knocked out by that player in the first round and there are no games of $x_i$ against $x_i^{T}$ and $x_i^{F}$. 
 

Next, let $x_{i}$ play w.l.o.g.\ against $x_{i}^{T}$ in the first round. Then, $x_{i}^{T}$ gets knocked out in the first round. Now, there are two possibilities. If $x_{i}$ plays against some player $\hat{d}_{j}$, $d_{j}$ or $\tilde{d}_{j}$ in the second round, then again, as $x_{i}$ gets knocked out by that player.
In this case, no further game between $x_{i}^{F}$ and $x_i$ can happen. 

Now, w.l.o.g.\ let $x_{i}$ play against $x_{i}^{T}$ in the first round and play against $x_{i}^{F}$ in some later round, say $r\ge 2$ (here, we note that $r$ cannot be the last round as $\widehat{d}_{1}$ must play in the last round). Then, there are at least two cheating games. Now, we will show that there exists a negative value game, that is, a game with value $-5$. If $x_{i}$ does not play against $d_{i}$ in round $r+1$, then the game played by $x_{i}$ in round $r+1$ must have value $-5$, and we are done. So, assume that $x_{i}$ plays against $d_{i}$ in round $r+1$. This implies that $d_{i}$ must have won at least two games (in the first and second rounds). But, by the definition of game-value functions, we know that at least one of these games must be a negative value game, and we are done. 
Lastly, if $x_{i}$ plays against any of the remaining players in the first round, then it must carry a negative value, and we are done again.  

We can conclude that if both $x_{i}^{T}$ and $x_{i}^{F}$ play a game against $x_i$, then $x_i$ or $d_i$ play a game with value $-5$. 
\end{claimproof}

Next, we analyze the case where a variable player $x_i^T$ (the case for $x_i^F$ is analogous) plays against player $x_i$ and also plays against some clause player $c$. Note that it is not considered cheating by the provided definition unless player $x_i^F$ also plays against $x_i$ or some variable player $c'$. Nevertheless, we show that in this case also, there must be a game with a negative value. This allows us later to concentrate on the case where $x_{i}^{T}$ and $x_{i}^{F}$ cheat, but neither player plays against $x_i$.

\begin{claim}\label{claim:negativegames2}
    Let $x_i$ be a variable in $\phi$ and assume that player $x_i^{T}$ plays a game against player $x_i$ and a game against a clause player $c$ in the seeding $\sigma$. Then, player $x_i$ or player $d_i$ plays a game with value $-5$.
\end{claim}

\begin{claimproof}
    First, note that if player $x_i^F$ also plays against $x_i$, then by \cref{claim:negativegames1} we have that player $x_i$ or player $d_i$ plays a game with value $-5$. Hence, assume that player $x_i^F$ does not play against player $x_i$.

    Note that if player $x_i^T$ plays both against $x_i$ and some clause player $c$, then the game between $x_i^T$ and $x_i$ cannot occur in round one, since $x_i$ is stronger than $x_i^T$. It follows that in round one, player $x_i$ plays against a player weaker than itself that is different from $x_i^F$ (since we assume $x_i$ does not play both against $x_i^T$ and $x_i^F$). By the definition of the game-value function, this game has value $-5$.
\end{claimproof}

Finally, we make the following observation on cheating players $x^T$ and $x^F$ that both do not play against $x$.

\begin{claim}\label{claim:cheat1}
    Let players $x^T$ and $x^F$ be cheating in $\sigma$ but neither $x^T$ nor $x^F$ play against player $x$. 
    Then, at least one of the players $x^T$ and $x^F$ plays at most one game with value 1.
\end{claim}
\begin{claimproof}
Let players $x^T$ and $x^F$ be cheating in $\sigma$ but neither $x^T$ nor $x^F$ play against player $x$. Then, by the definition of cheating players, there exist clause players $c$ and $c'$ such that $x^{T}$ plays a game with value 1 against $c$ and $x^{F}$ plays a game with value 1 against $c'$.
Furthermore, we can observe that the total number of games with value 1 played by $x^{T}$ and $x^{F}$ is at most 3, since neither of them plays a game with value 1 against player $x$ and there are at most three clause players $c$, $c'$, and potentially $c''$ (if $x$ appears in three clauses) that can play games with value 1 against $x^{T}$ or $x^{F}$. It follows that either $x^{T}$ or $x^{F}$ plays only one game with value 1. 
%
\end{claimproof}

We now construct an assignment for $\phi$ as follows. Let $x$ be a variable in $\phi$.
\begin{itemize}
    \item If players $x^T$ and $x^F$ are not cheating and player $x^T$ plays against some clause player $c$, we set $x$ to true. Otherwise, we set $x$ to false.
    \item If players $x^T$ and $x^F$ cheat and neither of them plays against $x$, then by \cref{claim:cheat1} one of $x^T$ and $x^F$ plays at most one value 1 game. If that player is $x^T$, we set $x$ to false. Otherwise, we set $x$ to true. 
    \item If none of the above is true, we set $x$ (arbitrarily) to true.
\end{itemize}

We claim that the above-described assignment satisfies at least $k'-n$ clauses.
To this end, we subtract from $k'$ every point of the tournament value that we cannot attribute to a satisfied clause. Note that every game with positive value involves either player $x^T$ or player $x^F$ for some variable $x$. Furthermore, note that every clause player can play at most one game of value 1. 

Consider players $x_i^T$ and $x_i^F$ corresponding to some variable $x_i$, then we have the following cases.
\begin{itemize}
    \item If $x_i^T$ and $x_i^F$ are not cheating, at most one of them plays a game of value 1 against player $x_i$, say w.l.o.g.\ $x_i^T$. All other games with value 1 are between players $x_i^F$ and some clause players. By construction of the instance and since in this case we set $x_i$ to false, all clauses corresponding to clause players that play a game with value 1 against $x_i^F$ are satisfied.
    Hence, there is at most one point of value that we cannot attribute to a satisfied clause.
    \item If players $x_i^T$ and $x_i^F$ cheat and neither of them plays against $x_i$, then we know by \cref{claim:cheat1} that one of $x_i^T$ and $x_i^F$ plays at most one value 1 game, say, w.l.o.g.\ player $x_i^T$. All other games with value 1 are between players $x_i^F$ and some clause players. By construction of the instance and since in this case we set $x_i$ to false, all clauses corresponding to clause players that play a game with value 1 against $x_i^F$ are satisfied.
    Hence, again, there is at most one point of value that we cannot attribute to a satisfied clause.
    \item If none of the above is true, then by \cref{claim:negativegames1} and \cref{claim:negativegames2}, we know that player $x_i$ or player $d_i$ plays a game with value $-5$. Note that there are at most three clause players $c$, $c'$, and potentially $c''$ (if $x$ appears in three clauses) that can play games with value 1 against $x_i^{T}$ or $x_i^{F}$. Furthermore, there are at most two additional games of $x_i^{T}$ and $x_i^{F}$ against $x$. It follows that players $x_i^{T}$ and $x_i^{F}$ can have at most 5 games with value 1. However, since we also have a game of value $-5$ that we can attribute to variable $x_i$, since it is played by player $x_i$ or player $d_i$, the contribution of all value 1 games of $x_i^{T}$ and $x_i^{F}$ to the overall tournament value is canceled out by the one game with value $-5$. Hence, in this case, there are no points of value from the overall tournament value that we need to attribute to some satisfied clauses.
\end{itemize}

Since there are $n$ variables and for each variable, there is at most one point of value that we cannot attribute to a satisfied clause, it follows that at least $k'-n$ clauses of $\phi$ are satisfied by the described assignment.
\end{proof}

\cref{lem:4} and \cref{lem:5} establish the following theorem. 

\begin{theorem} \label{thm:apx2}
\probname\ is \textsf{NP}-hard for round-oblivious game-value functions that map to three distinct values.
\end{theorem}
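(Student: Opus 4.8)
The plan is to obtain Theorem~\ref{thm:apx2} as an immediate consequence of the correctness of \cref{const:2}. First I would record that \cref{const:2} runs in polynomial time and outputs a \probname\ instance $T_\phi$ whose game-value function is round-oblivious and whose image is exactly $\{0,1,-5\}$, i.e., three distinct values. \cref{lem:4} supplies the forward implication (an assignment of $\phi$ satisfying at least $k$ clauses yields a seeding of $T_\phi$ of tournament value at least $k+n$), and \cref{lem:5} supplies the converse (a seeding of $T_\phi$ of value at least $k+n$ yields an assignment of $\phi$ satisfying at least $k$ clauses), where $n$ is the number of variables of $\phi$. Chaining the two lemmas, for every integer $k$ we get that $\phi$ has an assignment satisfying at least $k$ clauses if and only if $T_\phi$ has a seeding of tournament value at least $k+n$. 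Hence $\phi\mapsto(T_\phi,\,k+n)$ is a polynomial-time many-one reduction from the decision version of \textsc{Max (2,3)-SAT} to \probname\ restricted to round-oblivious, three-valued game-value functions; since \textsc{Max (2,3)-SAT} is NP-hard, so is the restricted \probname.

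If one additionally wants the three values to be nonnegative, I would invoke \cref{obs:constant} with the shift constant $c=5$: this replaces $v$ by $v'=v+5$, whose image is $\{0,5,6\}$ — still three distinct values, now all nonnegative — and replaces the target $V=k+n$ by $V'=V+(|N|-1)\cdot 5$. By \cref{obs:constant} the shifted instance is a yes-instance exactly when the original is, so NP-hardness survives the shift. Essentially the same argument, read through \cref{def:lred}, shows that \cref{const:2} is in fact an $\mathsf{L}$-reduction with constants analogous to those of \cref{thm:apxhard1} (using \cref{obs:sat} together with the identity $\mathsf{opt}_B(T_\phi)=\mathsf{opt}_A(\phi)+n$, which \cref{lem:4} and \cref{lem:5} imply in the same way \cref{lem:3} follows from \cref{lem:1} and \cref{lem:2}), which would additionally yield the APX-hardness claimed for this setting.

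The real work is already discharged inside \cref{lem:5}, and I would flag that as the crux rather than the assembly above: one must argue that any seeding of high tournament value can be decoded into a SAT assignment losing at most $n$ units of value, i.e., at most one per variable. The engine is the notion of \emph{cheating} variable players — a pair $x^T,x^F$ that either both play against the associated strong player $x$, or both win value-$1$ games against two distinct clause players. \cref{claim:negativegames1} and \cref{claim:negativegames2} show that in the cases where $x^T$ (or $x^F$) plays against $x$, a compensating $-5$ game is forced onto $x$ or $d_i$, while \cref{claim:cheat1} shows that in the remaining cheating case one of $x^T,x^F$ has a spare value-$1$ game that can simply be discarded; the final case analysis in \cref{lem:5} then bounds the unattributable value by one per variable. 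Since those claims are in hand, no further obstacle remains for the theorem.
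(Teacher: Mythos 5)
Your proposal is correct and matches the paper's own argument, which likewise derives \cref{thm:apx2} directly by chaining \cref{lem:4} and \cref{lem:5} applied to the polynomial-time \cref{const:2} (whose game-value function is round-oblivious with image $\{0,1,-5\}$), giving a many-one reduction from \textsc{Max (2,3)-SAT}; the nonnegativity shift via \cref{obs:constant} is exactly the paper's concluding remark (it uses $c=6$ rather than your $c=5$, an immaterial difference).
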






Furthermore, the following corollary follows immediately from \cref{lem:1} and \cref{lem:2}.

 \begin{corollary} \label{lem:6} If $\mathsf{opt}_{A}(\phi)$ denotes the number of clauses satisfied by an optimal assignment in the given instance $\phi$ of \textsc{Max (2,3)-SAT} and $\mathsf{opt}_{B}(T_{\phi})$ denotes the value of an optimal solution of the constructed \probname\ instance $T_{\phi}$, then $\mathsf{opt}_{B}(T_{\phi})=\mathsf{opt}_{A}(\phi)+n$.
\end{corollary}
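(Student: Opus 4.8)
The plan is to obtain the equality $\mathsf{opt}_{B}(T_{\phi})=\mathsf{opt}_{A}(\phi)+n$ by proving the two inequalities separately, each being a direct consequence of one of the two lemmas established for \cref{const:2}. (We note in passing that the intended references here are \cref{lem:4} and \cref{lem:5}, the analogues for \cref{const:2} of \cref{lem:1} and \cref{lem:2}.)

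For the inequality $\mathsf{opt}_{B}(T_{\phi})\geq \mathsf{opt}_{A}(\phi)+n$, I would set $k:=\mathsf{opt}_{A}(\phi)$, so that $\phi$ has an assignment satisfying at least $k$ clauses. Applying \cref{lem:4} to this assignment produces a seeding of $T_{\phi}$ whose tournament value is at least $k+n$, and hence $\mathsf{opt}_{B}(T_{\phi})\geq k+n=\mathsf{opt}_{A}(\phi)+n$. For the reverse inequality, I would take an optimal seeding $\sigma$ of $T_{\phi}$, so that $V_{\sigma}=\mathsf{opt}_{B}(T_{\phi})$, and write this value as $k+n$ with $k:=\mathsf{opt}_{B}(T_{\phi})-n$. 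Since $V_{\sigma}\geq k+n$, \cref{lem:5} yields an assignment of $\phi$ satisfying at least $k$ clauses, whence $\mathsf{opt}_{A}(\phi)\geq k=\mathsf{opt}_{B}(T_{\phi})-n$, i.e., $\mathsf{opt}_{B}(T_{\phi})\leq \mathsf{opt}_{A}(\phi)+n$. Combining the two inequalities gives the claim.

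There is no genuine obstacle: the corollary is a purely arithmetic combination of \cref{lem:4} and \cref{lem:5}, whose statements were deliberately phrased with matching ``at least'' bounds on the two sides to make this step immediate. The only point deserving a moment's care is that, when invoking \cref{lem:5}, the quantity $\mathsf{opt}_{B}(T_{\phi})-n$ must be a legitimate (nonnegative) count of clauses; this is guaranteed since, by \cref{obs:sat} together with the first inequality, $\mathsf{opt}_{B}(T_{\phi})\geq \tfrac{3n}{2}\geq n$.
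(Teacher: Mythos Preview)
Your proposal is correct and matches the paper's own treatment: the paper simply states that the corollary ``follows immediately'' from the two lemmas (and, as you note, the intended references for \cref{const:2} are \cref{lem:4} and \cref{lem:5}), and your two-inequality argument is exactly the routine unpacking of that immediacy. The nonnegativity remark is harmless but not strictly needed, since the conclusion ``at least $k$ clauses'' is vacuous when $k$ is negative.
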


Using this corollary, we move our attention towards establishing the $\mathsf{APX}$-hardness. 

\begin{theorem} \label{thm:apx3}
The optimization version of \probname\ is \textsf{APX}-hard for round-oblivious game-value functions that map to three distinct values.
\end{theorem}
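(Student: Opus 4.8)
The plan is to verify that \cref{const:2}, together with the solution map $g$ implicitly defined in the proof of \cref{lem:5} --- which sends a seeding $\sigma$ of $T_\phi$ to the truth assignment of $\phi$ that the proof reads off from it --- is an $\mathsf{L}$-reduction (\cref{def:lred}) from \textsc{Max (2,3)-SAT} to the optimization version of \probname\ restricted to round-oblivious, three-valued game-value functions. Both \cref{const:2} and $g$ are polynomial-time computable: $g$ only has to simulate the tournament, detect which variable players cheat, and apply the three-way case distinction of \cref{lem:5}. Hence it remains to check the two numerical conditions, and the computation is a direct transcription of the proof of \cref{thm:apxhard1}.

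For the first condition, combine \cref{obs:sat} ($\frac{n}{2}\le\mathsf{opt}_A(\phi)$) with \cref{lem:6} ($\mathsf{opt}_B(T_\phi)=\mathsf{opt}_A(\phi)+n$):
\[
\mathsf{opt}_B(T_\phi)=\mathsf{opt}_A(\phi)+n\le\mathsf{opt}_A(\phi)+2\,\mathsf{opt}_A(\phi)=3\,\mathsf{opt}_A(\phi),
\]
so condition~1 holds with $\alpha=3$. For the second condition, take any feasible solution $y$ of $T_\phi$, i.e., a seeding of tournament value $c_B(y)$. By \cref{lem:5} the assignment $g(y)$ satisfies at least $c_B(y)-n$ clauses, hence $c_A(g(y))\ge c_B(y)-n$; using \cref{lem:6} together with $c_A(g(y))\le\mathsf{opt}_A(\phi)$ and $c_B(y)\le\mathsf{opt}_B(T_\phi)$ we get
\[
\bigl|\mathsf{opt}_A(\phi)-c_A(g(y))\bigr|=\mathsf{opt}_A(\phi)-c_A(g(y))\le\bigl(\mathsf{opt}_B(T_\phi)-n\bigr)-\bigl(c_B(y)-n\bigr)=\bigl|\mathsf{opt}_B(T_\phi)-c_B(y)\bigr|,
\]
so condition~2 holds with $\beta=1$. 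Since \textsc{Max (2,3)-SAT} is $\mathsf{APX}$-hard and $\mathsf{L}$-reductions preserve $\mathsf{APX}$-hardness, the optimization version of \probname\ is $\mathsf{APX}$-hard for round-oblivious game-value functions with three distinct values (here, $\{0,1,-5\}$).

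Finally I would address the negative value $-5$, as announced right after \cref{const:2}. Applying \cref{obs:constant} with $c=5$ yields an equivalent instance $T'_\phi$ whose game-value function maps into $\{0,5,6\}$ and whose target is shifted by $(|N|-1)\cdot 5$; since exactly $|N|-1$ games are played, the value of every seeding (and hence $\mathsf{opt}_B$) shifts by the same constant, so the $\beta=1$ bound is unaffected. For condition~1 I would use that \cref{const:2} forces $16n\le|N|<32n$, so the additive shift $(|N|-1)\cdot 5$ is $\Theta(n)$, and thus $\mathsf{opt}_B(T'_\phi)\le\mathsf{opt}_A(\phi)+n+160n\le 323\,\mathsf{opt}_A(\phi)$ by \cref{obs:sat}; condition~1 therefore still holds, with the larger constant $\alpha=323$, so the hardness persists for three distinct \emph{non-negative} values. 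The substantive content here is the combinatorial correspondence of \cref{lem:4} and \cref{lem:5}, which is already in hand; the present step is only bookkeeping of the $\mathsf{L}$-reduction constants, and the one place to stay careful is the interaction of the additive shift of \cref{obs:constant} with the multiplicative factor $\alpha$ --- which works out precisely because $|N|$ and $\mathsf{opt}_A(\phi)$ are both $\Theta(n)$.
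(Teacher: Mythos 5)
Your proof is correct and follows essentially the same route as the paper: the $\mathsf{L}$-reduction with $\alpha=3$ and $\beta=1$ obtained by combining \cref{obs:sat}, \cref{lem:6}, and the guarantee $c_{A}(g(y))\geq c_{B}(y)-n$ from \cref{lem:5} (your version is in fact slightly more careful, writing the inequality rather than an equality in the $\beta$-condition). Your final paragraph shifting the values to be non-negative via \cref{obs:constant} is not needed for the theorem as stated, which allows the value $-5$; it reproduces, with slightly different constants, the paper's separate remark following the theorem.
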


\begin{proof}
To obtain the result, we show that \cref{const:2} is an $\mathsf{L}$-reduction. 
Let $\phi$ be an instance of \textsc{Max (2,3)-SAT} with $n$ variables and let $T_\phi$ be the instance of \probname\ obtained by applying \cref{const:2} to $\phi$.

First, note that by \cref{obs:sat}, we have \begin{equation}\label{eq:201} \frac{n}{2}\leq \mathsf{opt}_{A}(\phi).\end{equation} 

By \cref{lem:6}, we have
\begin{equation}\label{eq:202}
\mathsf{opt}_{B}(T_{\phi})=\mathsf{opt}_{A}(\phi)+n
\end{equation}

Combining (\ref{eq:201}) and (\ref{eq:202}), we have 
\begin{equation}\label{eq:203}
     \mathsf{opt}_{B}(T_{\phi})=\mathsf{opt}_{A}(\phi)+n\leq \mathsf{opt}_{A}(\phi)+2\cdot \mathsf{opt}_{A}(\phi)=3\cdot \mathsf{opt}_{A}(\phi).
\end{equation}

From \cref{lem:5}, if the constructed instance of \probname\ has a solution $y$, then the given instance of \textsc{Max (2,3)-SAT} has a solution $g(y)$ such that if $c_A(g(y))$ and $c_B(y)$ denote the cost functions of instances $g(y)$ and $y$, respectively, then 

\begin{equation}\label{eq:104}
c_{A}(g(y))\geq c_{B}(y)-n.
\end{equation}


Now, using (\ref{eq:202}) and (\ref{eq:104}), we have
\begin{equation} \label{eq:205}
\vert (\mathsf{opt}_{A}(\phi)-c_{A}(g(y))\vert=\vert (\mathsf{opt}_{B}(T_\phi)-n)-(c_{B}(y)-n)\vert = \vert \mathsf{opt}_{B}(T_{\phi})-c_{B}(y)\vert.
\end{equation}

From (\ref{eq:203}) and (\ref{eq:205}), it follows that we have an $\mathsf{L}$-reduction with $\alpha=3$ and $\beta=1$ (see \cref{def:lred}). This finishes the proof.
\end{proof}


Here, it is worth mentioning that we have used a negative game value in \cref{const:2}, mainly to simplify proofs. Nevertheless, it is crucial to acknowledge that \cref{thm:apx2} and \cref{thm:apx3} remain valid even when all the game values are non-negative. For this purpose, consider the following.
\begin{remark}
If we replace the game-value function $v$ with $v'(i,j)=v(i,j)+6$ in \cref{const:2}, then all the game values are non-negative and \cref{thm:apx2} and \cref{thm:apx3} still hold.
\end{remark}
\begin{proof}
Let $\phi$ be an instance of \textsc{Max (2,3)-SAT} with $n$ variables and let $T_\phi$ be the instance of \probname\ obtained by applying \cref{const:2} to $\phi$.
Let $T'_{\phi}$ be the instance of \probname~obtained from $T_{\phi}$ by replacing the game-value function $v$ with $v'(i,j)=v(i,j)+6$. Note that exactly $2^{n'}-1$ games occur in $T'_{\phi}$ with $2^{n'}\in\Theta(n)$. Now, by the similar arguments, as we used in the proofs of \cref{lem:4} and \cref{lem:5} (that lead to \cref{lem:6}), we have the following. 

The formula $\phi$ has an assignment that satisfies at least $k$ clauses if and only if $T'_{\phi}$ has a seeding corresponding to which the tournament value of $T'_{\phi}$ is least $(2^{n'}-1)\cdot 6+k+n$. Note that since $2^{n'}\in\Theta(n)$ we have that $(2^{n'}-1)\cdot 6+k+n=k+c\cdot n$ for some constant $c$.




Thus, we also have the following. 
If $\mathsf{opt}_{A}(\phi)$ denotes the number of clauses satisfied by an optimal assignment in the given instance $\phi$ of \textsc{Max (2,3)-SAT} and $\mathsf{opt}_{B}(T'_{\phi})$ denotes the value of an optimal solution of the constructed \probname\ instance $T'_{\phi}$, then $\mathsf{opt}_{B}(T'_{\phi})=\mathsf{opt}_{A}(\phi)+c\cdot n$.

Now, by analogous arguments as in the proof of \cref{thm:apx3}, we get that a modified version of \cref{const:2} that uses game-value function $v'(i,j)=v(i,j)+6$ instead of $v(i,j)$ is an  $\mathsf{L}$-reduction with $\alpha=2c+1$ and $\beta=1$ (see \cref{def:lred}). 
\end{proof}

\section{Algorithmic Results}\label{sec:algos}

In this section, we present several algorithmic results. We start in \cref{apx:algo} by describing a polynomial time $(1/\log n)$-approximation algorithm for \probname\ with round-oblivious game-value functions. We proceed by giving a quasi-polynomial time algorithm for \probname\ with win-count oriented game-value functions in \cref{sec:alg} and a linear time greedy algorithm as well as an FPT-algorithm for further restricted settings in \cref{sec:greedy}. Finally, we give a further FPT-algorithm in \cref{sec:fptvc} for more general instances of \probname\ parameterized by the so-called ``size of the set of influential players''.

\subsection{A Polynomial-Time $(1/\log n)$-Approximation Algorithm}\label{apx:algo}

In this section, we show the existence of a $(1/\log n)$-approximation algorithm for the optimization version of \probname\ with a round-oblivious game-value function that runs in polynomial time.
Formally, we prove the following theorem.

\begin{theorem} \label{thm:approx}

The optimization version of \probname\ with a round-oblivious game-value function admits a polynomial-time $(1/\log n)$-approximation algorithm.
\end{theorem}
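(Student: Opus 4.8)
The plan is to reduce the approximation task to a single maximum-weight matching computation. Since the game-value function is round-oblivious, write $v(i,j)$ for the common value. By \cref{obs:hometeam} we may assume $v$ is home-team oblivious, and by \cref{obs:constant} we may assume all game values are non-negative (add a large constant $c$ to every game value; since exactly $n-1$ games occur, the optimum shifts by exactly $(n-1)c$, and a $(1/\log n)$-approximation for the shifted instance pulls back to one for the original — one must check the shift does not break the ratio, which holds because both the optimum and the value of the computed seeding increase by the same additive $(n-1)c$ and $c$ can be chosen so all values are non-negative while the relative error only improves). First I would build the complete graph $G$ on vertex set $N$ with edge weight $w(\{i,j\}) = v(i,j)$, and compute a maximum-weight perfect matching $M$ in $G$ in polynomial time.

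The key structural observation is twofold. On one hand, the $n/2$ first-round games of \emph{any} seeding form a perfect matching of $N$, so the total first-round value of any seeding is at most $w(M)$; since the full tournament has $\log n$ rounds and each round contributes at most as much as a max-weight matching on the (at most $n$) surviving players — hence at most $w(M)$ — we get $\mathsf{opt} \le \log n \cdot w(M)$. Wait: a cleaner bound is simply $\mathsf{opt} \le \log n \cdot w(M)$ because round $r$ consists of a matching among a subset of players, whose weight is at most that of a maximum-weight matching of all of $N$, namely $w(M)$. On the other hand, I need a seeding whose value is at least $w(M)$. Here I would use the matching $M$ to dictate the first round: pair up the matched players as siblings in the bracket, placing each matched pair $\{i,j\}$ into two adjacent leaves. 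This is consistent with being a seeding (the leaves of a complete binary tree decompose into $n/2$ sibling pairs). The first round then realizes exactly the games of $M$, contributing $w(M)$; since all game values are non-negative, the remaining $\log n - 1$ rounds contribute at least $0$. Hence the constructed seeding has value at least $w(M) \ge \mathsf{opt}/\log n$, as required.

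The remaining care is purely bookkeeping: verifying that an arbitrary perfect matching of $N$ can be extended to a full seeding (true — assign sibling pairs arbitrarily to the $n/2$ sibling-leaf-pairs of the tree, then the induced map $\sigma: N \to [n]$ is injective and hence a valid tournament seeding), and verifying the constant-shift argument of \cref{obs:constant} interacts correctly with the multiplicative approximation guarantee. The main (and only mild) obstacle is the non-negativity reduction: an approximation ratio is only meaningful for non-negative objectives, so one must phrase the theorem for instances with non-negative game values, or invoke \cref{obs:constant} to reduce to that case and note that the additive shift is identical for the optimum and for the output of the algorithm, so the approximation factor is preserved (indeed improved). Everything else — the matching computation, the upper bound $\mathsf{opt}\le \log n\cdot w(M)$, and the lower bound $\ge w(M)$ on the produced seeding — is immediate.
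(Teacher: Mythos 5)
Your proposal is essentially the paper's own proof: the same maximum-weight-matching algorithm, the same per-round upper bound $\mathsf{opt}\le \log n\cdot w(M)$ (each round's games form a matching among surviving players), and the same observation that seeding matched pairs as siblings realizes $w(M)$ already in round one. One caveat: your side remark that the additive shift of \cref{obs:constant} preserves the $(1/\log n)$ ratio is not correct as stated (if the shifted algorithm value is at least the shifted optimum over $\log n$, subtracting the same constant from both does not give the original guarantee), but this does not affect the core argument, since the paper, like your main proof, implicitly works with non-negative game values (its edge weights are taken in $\mathbb{N}$), where no shift is needed.
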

\begin{proof}
Consider Algorithm $\mathcal{A}$ that, given an instance T of the optimization version of \probname\ with the player set $N$ and with a round-oblivious game-value function $v$, performs the following steps (1-5):
\begin{enumerate}
\item First, construct a complete undirected graph $G$ that has $N$ as its vertex set and an edge-weight function $w:E(G)\rightarrow \mathbb{N}$ defined as follows: 
\[
w(\{i,j\})=\max({v(i,j),v(j,i)}).
\]
\item Compute a maximum weight matching $M_{\max}\subseteq E(G)$ in $G$. 
\item Let $e_\ell=\{i,j\}\in M_{\max}$ denote the $\ell$th edge in $M_{\max}$ and let w.l.o.g.\ $v(i,j)\ge v(j,i)$. Then, define a seeding $\sigma$ such that player $i$ is seeded into position $2\ell-1$ and player $j$ is seeded into position $2\ell$.
\item Let $\overline{V}_{\max}$ denote the set of vertices that are not saturated by $M_{\max}$. For every $i\in \overline{V}_{\max}$, seed player $i$ arbitrarily into any of the remaining seed positions of $\sigma$(which are, $\{2|M_{\max}|+1,\ldots,n\}$).
\item Return $\sigma$.
\end{enumerate}

Now, we claim that if $W_{\max}$ denotes the tournament value of T corresponding to the seeding $\sigma$ (returned by Algorithm $\mathcal{A}$), then $W_{\max}$ is an $(1/\log n)$-approximation of the maximum achievable tournament value of T. Before we prove the approximation bound, note that Algorithm $\mathcal{A}$ runs in polynomial time, as maximum weight matchings can be computed in polynomial time~\cite{galil1986efficient}. 
 
 In the remainder, we show that Algorithm $\mathcal{A}$ is a $(1/\log n)$-approximation algorithm for \probname. 
Let $V^\star$ denote the highest achievable tournament value of T.
Observe that we have 
\[
V^\star\ge W_{\max}.
\]

Now, let $V^\star_r$ denote the value of games played in round $r\in[\log n]$ in the seeding that achieves tournament value $V^\star$. We have that 
\[
V^\star_r\le W_{\max},
\]
since otherwise the games played in round $r$ would correspond to edges in $G$ that form a matching with a weight larger than $W_{\max}$ (because the value of $W_{\max}$ is exactly equal to the weight of $M_{\max}$).

If follows that 
\[
V^\star=\sum_{r=1}^{\log n}V^\star_r \le \log n\cdot W_{\max}.
\]

Overall, we have that 
\[
\frac{1}{\log n}V^\star\le W_{\max}\le V^\star,
\]
and hence the theorem follows.
\end{proof}

%

We remark that the bound presented in \cref{thm:approx} is indeed tight for the presented algorithm. To illustrate this, consider the example shown in \cref{fig:03}, which demonstrates a scenario where the algorithm's performance matches the bound.
 \begin{figure}[t]
 \centering
    \includegraphics[scale=0.35]{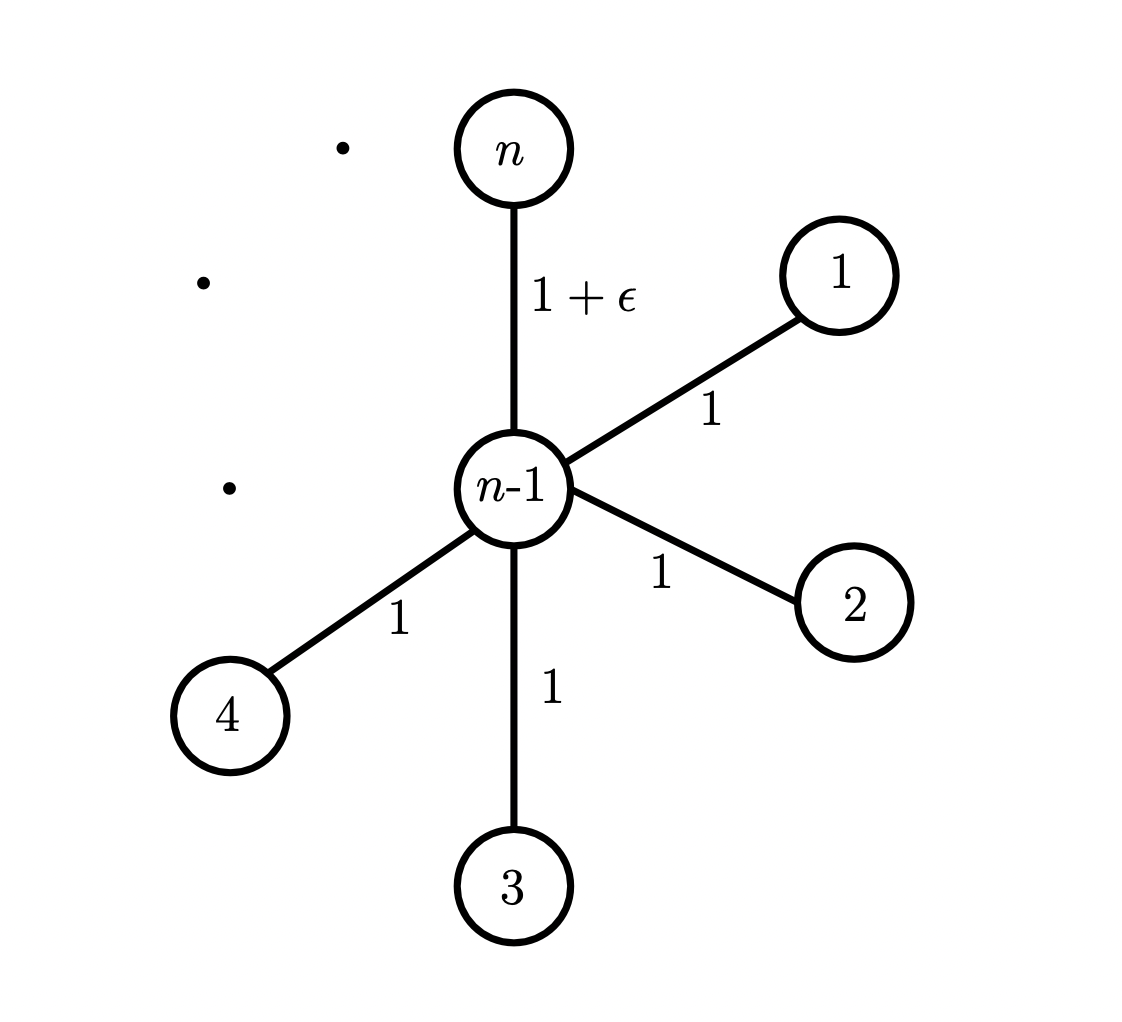}
    \caption{Here, we are given an instance of the optimization version of \probname~with $n$ players and round-oblivious game-value function defined as: $v(n-1,j)=1$ for all $j\in[n-2]$, $v(n-1,n)=1+\varepsilon$ (where $\varepsilon>0$ is a constant), and $v(i,j)=0$ for all $i,j\in [n]\setminus \{n-1\}$. Now, note that Algorithm $\mathcal{A}$ (given in \cref{thm:approx}) will return a seeding with tournament value $1+\epsilon$ (by making sure that players $n$ and $n-1$ play against each other in the first round). However, the optimal tournament value for this given instance is $\log n+\epsilon$ (one of the possible seedings is $(n-1,1,2,\ldots,n-2,n)$).}
    \label{fig:03}
\end{figure}

\subsection{A Quasi-Polynomial-Time Algorithm for Win-Count Oriented Game-Value Functions}\label{sec:alg}



For \probname\ with a win-count oriented game-value function (see \cref{def:wincount}), we present a quasipolynomial-time algorithm.

\begin{theorem}\label{thm:quasipoly}
\probname\ with a win-count oriented game-value function can be solved in $n^{\OO(\log n)}$ time.
\end{theorem}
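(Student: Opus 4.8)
The plan is to design a dynamic programming algorithm over partial seedings, exploiting the recursive structure of the tournament bracket (Definition~\ref{def:tvalue}) together with the fact that a win-count oriented value function only depends, via the player evaluation function $p$, on how many wins each player accumulates. First I would introduce the notion of a \emph{subtournament}: in a complete binary tree on $n=2^{n'}$ leaves, a subtournament of \emph{height} $h$ is the bracket induced by a contiguous block of $2^h$ consecutive seed positions of the form $\{t\cdot 2^h+1,\dots,(t+1)\cdot 2^h\}$. Such a subtournament is populated by some set $S$ of $2^h$ players, runs for $h$ rounds (rounds $1,\dots,h$), and produces a single winner. Since the value function is win-count oriented, the contribution of this subtournament to the global tournament value, given that its winner is player $a\in S$, equals $\sum_{b\in S\setminus\{a\}} (\text{value of the round in which } b\text{ was eliminated, evaluated at its victor})$; but by the Proposition proved earlier this is exactly $\sum_{b\in S} \sum_{1\le r\le w(b)} v'(b,r)$ where $w(b)$ is the number of wins of $b$ inside the subtournament. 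The key observation is that the winner $a$ has $h$ wins inside the subtournament, contributing $\sum_{r=1}^{h} v'(a,r)$, while all other contributions are already "locked in" by the two child subtournaments of height $h-1$. Hence the only information we need to pass up the recursion is: the set $S$ of players used, and the identity of the winner $a$ — or, more cheaply, a succinct summary thereof.

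The main step is to bound the number of relevant "states." A naive DP state would be (the set $S$ of players assigned to a given height-$h$ block, together with the winner), and we would tabulate the best achievable total value of that block; merging two height-$(h-1)$ blocks with winners $a_1>a_2$ gives a height-$h$ block with winner $a_1$ and added value $v'(a_1,h)$ (only the round-$h$ increment for the surviving player). The obstacle is that there are $\binom{n}{2^h}$ possible sets $S$, which is exponential. To get a quasi-polynomial bound I would instead observe — and this is the crucial structural point — that because players are totally ordered by strength and the stronger player always wins, the winner of any subtournament is simply its strongest player. Therefore, when we recursively split the whole bracket, what matters at each node is not the exact set of players but which players are "still alive" higher up; and a standard way to control this is to note that at height $h$ there are $n/2^h$ sibling blocks, each contributing exactly one survivor, and these $n/2^h$ survivors are exactly the $n/2^h$ players who reach round $h+1$. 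One can then run the DP \emph{level by level}: a state after processing all height-$h$ blocks records, for each block (in order), which player survived it, i.e.\ a sequence of $n/2^h$ survivors together with a consistent partition of the remaining players into the blocks. The number of ways to choose the multiset of survivors is at most $n^{n/2^h}$, and summing the logarithm of this over all $\log n$ levels gives $\sum_{h} (n/2^h)\log n = O(n\log n)$ — still too much. So the actual argument must be the genuinely recursive one: define $f(h, S)$ = maximum value obtainable inside a height-$h$ block populated exactly by the set $S$ (where $\max S$ is forced to be the block winner). Then $f(h,S) = v'(\max S, h) + \max_{S = S_1 \sqcup S_2} \big( f(h-1,S_1) + f(h-1,S_2)\big)$ over balanced partitions. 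The answer is $\max_{S: |S|=n} f(n', S)$, trivially $S=N$. The recursion has depth $\log n$ and at depth $h$ from the root we must consider subsets $S$ of size $2^{n'-h}$; but crucially we never need $f(h,S)$ for all $S$ simultaneously — we need it for subsets that can arise as blocks, and a more careful accounting (the hypercube/"meet in the middle over the bracket" structure) shows the number of distinct subproblems is $n^{O(\log n)}$, because specifying a height-$h$ block amounts to a sequence of $n'-h$ balanced bipartition choices from the root, and each such choice multiplies the count by at most $n^{O(1)}$.

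The cleanest rigorous route, which I expect the paper to take, is to replace "subset $S$" by the concept of \emph{open and closed subtournaments} announced in the introduction: process the bracket's internal nodes, and at the frontier maintain a collection of "open" subtournaments (those whose populating set is not yet fully decided) together with, for each, the partial information needed. I would show that at any point the number of open subtournaments is $O(\log n)$ and that each carries only polynomially many relevant configurations (essentially: which of the $n$ players is its eventual winner, plus a count), so the total DP table size is $n^{O(\log n)}$, and each transition (merging a sibling pair, using $V_\sigma = V_{\sigma_1}+V_{\sigma_2}+v(i_1,i_2,h)$ from Definition~\ref{def:tvalue} and the fact that $v(i_1,i_2,h)$ depends only on $\max(i_1,i_2)$ and $h$) is computable in polynomial time. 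Correctness follows by induction on the height using Definition~\ref{def:tvalue} and the win-count characterization.

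The step I expect to be the main obstacle is precisely the state-space bound: naively one is tempted to index DP states by subsets of players, which is exponential, and the real work is to argue that only $n^{O(\log n)}$ configurations are ever needed — this is where the "open/closed subtournament" bookkeeping, and the observation that the number of simultaneously open subtournaments is logarithmic, must be deployed carefully. Everything else (the transition formula, base case of two players with value $v'(\max,1)$, and the reduction to the win-count representation) is routine given the Proposition already established in the excerpt.
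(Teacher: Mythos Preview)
Your proposal converges on the right tool (open/closed subtournaments and a DP over them), but the state-space argument you sketch has a concrete error that blocks the bound. You assert that ``at any point the number of open subtournaments is $O(\log n)$ and that each carries only polynomially many relevant configurations (essentially: which of the $n$ players is its eventual winner).'' The first claim is false: after seeding the strongest player we have $\log n$ open subtournaments, but each subsequent placement into a subtournament with $r$ rounds closes one and opens $r$ new ones, so the count can grow to $\Theta(n)$ (e.g.\ with $n=16$, after four placements one can have $8$ open subtournaments). With $\Theta(n)$ open subtournaments, tracking a winner for each gives $n^{\Theta(n)}$ states, not $n^{O(\log n)}$.

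The paper's actual fix is the observation you are missing: process players from strongest to weakest, so the player currently being seeded is automatically the winner of whichever open subtournament it enters; and because the value function is win-count oriented, the contribution $p(i,r)$ depends only on the player $i$ and the number of rounds $r$ of the subtournament chosen. Hence two open subtournaments of the same size are \emph{interchangeable}, and the DP state need only record, for each $r\in\{0,\dots,\log n-1\}$, the \emph{count} $x_r$ of open subtournaments with $r$ rounds. This ``subtournament profile'' $(x_0,\dots,x_{\log n-1})$ has at most $n^{O(\log n)}$ possible values (each coordinate is at most $n$, and there are $\log n$ coordinates), and the transition when seeding the $\ell$th player into some size-$r$ slot decrements $x_r$, increments $x_0,\dots,x_{r-1}$, and adds $p(\cdot,r)$. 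That is the entire algorithm; your subset-indexed recursion $f(h,S)$ and the per-subtournament winner bookkeeping are not needed and do not lead to the bound.
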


To describe the algorithm for \cref{thm:quasipoly}, we introduce some additional terminology and concepts. Intuitively, we want to iterate through all players from the strongest to the weakest. We know that the strongest player wins the tournament, so we can assume w.l.o.g.\ that it is seeded into some fixed position, say one.


Once the winner of a tournament with $n$ seed positions is placed in the seeding, we say that $\log n$ subtournaments \emph{open up}, one for each round $r\in[\log n-1]$ of the tournament except the last one and one degenerate subtournament with only one player (if $n>1$, for ``round zero''). The subtournament for round $r$ has the $2^{r}$ seed positions $\{2^{r}+1,2^{r}+2,\ldots, 2^{r+1}\}$. The degenerate subtournament for ``round zero'' has seed position~2. The winner of each of the subtournaments is seeded into the smallest seed position of that subtournament and loses against the winner of the overall tournament in the respective round. See \cref{fig:5} for an illustration.

 \begin{figure}[t]
 \centering
    \includegraphics[scale=1]{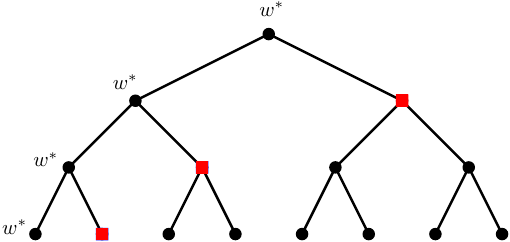}
    \caption{A tournament with $n=8$ players where the winning player $w^*$ is seeded into position one. The red vertices represent the roots of the three subtournaments that open.}
    \label{fig:5}
\end{figure}

For each subtournament, we now have an analogous situation. Since we seed players iteratively and decreasingly to their strength, we know that the first player seeded into a subtournament wins the subtournament. Once a winning player for a subtournament for some round $r$ is seeded, we say that the subtournament is \emph{closed}, and $r$ new subtournaments open up, one for each of the $r-1$ rounds of the subtournament for round $r$ and one for ``round zero''.

Now, after a prefix of the players is seeded (according to their strength ordering), a certain set of subtournaments is open. We classify subtournaments by their size or, equivalently, by their number of rounds. We call a vector $\mathcal{X}=(x_0, x_1, \ldots, x_{\log n-1})\in \mathbb{N}^{\log n}$ a \emph{subtournament profile}, where $x_r$ quantifies the number of open subtournaments with $r$ rounds. 

We define the following dynamic program $T:\mathbb{N}^{\log n+1}\rightarrow\mathbb{Z}$, where, intuitively, $T[\ell,\mathcal{X}]$ quantifies the maximum value achievable by seeding the $\ell$ strongest players while obtaining subtournament profile $\mathcal{X}$.
Let $N=\{1, 2, \ldots, 2^{n'}\}$ be the set of players. We initialize $T$ as follows.
\[
T[1,x_0,x_1,\ldots,x_{n'-1}]=
\begin{cases}
      p(2^{n'},n') & \text{if}\ x_0=x_1=\ldots=x_{n'-1}=1, \\
      -\infty & \text{otherwise.}
    \end{cases}
\]

The remaining entries of $T$ are recursively defined as follows.
\[
T[\ell,x_0,x_1,\ldots,x_{n'-1}]=\max_{r\in \{0,1,\ldots,n'-1\}} (T[\ell-1,x_0-1,\ldots,x_{r-1}-1,x_r+1,x_{r+1},\ldots, x_{n'-1}]+p(2^{n'}-\ell+1,r)).
\]

We first prove the correctness of the dynamic program.

\begin{lemma}\label{lem:dpcorrectness}
$T[\ell,\mathcal{X}]$ is the maximum value achievable by seeding the $\ell$ strongest players while obtaining subtournament profile $\mathcal{X}$.
\end{lemma}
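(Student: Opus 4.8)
The plan is to prove \cref{lem:dpcorrectness} by induction on $\ell$, following exactly the ``opening and closing of subtournaments'' narrative that precedes the lemma. The key conceptual point is a bijection: there is a natural correspondence between (a) sequences of seeding decisions for the $\ell$ strongest players that respect the strength ordering and result in a given open-subtournament configuration, and (b) the quantities tracked by $T$. I would first make this correspondence precise. When the $\ell$-th strongest player $2^{n'}-\ell+1$ is seeded, it must (by the argument in the text, since players are placed in decreasing strength order) be seeded into the root of some currently-open subtournament; if that subtournament has $r$ rounds, then the player wins exactly $r$ games in that subtournament, contributing $p(2^{n'}-\ell+1,r)$ to the tournament value by \cref{def:wincount}, the chosen subtournament closes, and $r$ new subtournaments open up (one with $r'$ rounds for each $r'\in\{0,1,\ldots,r-1\}$). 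This is precisely the transformation of the profile vector $\mathcal{X}$ encoded in the recurrence: $x_r$ decreases by one (the closed one), and $x_0,\ldots,x_{r-1}$ each increase by one.

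For the base case $\ell=1$, the strongest player $2^{n'}$ is seeded into position one, wins all $n'$ rounds, contributes $p(2^{n'},n')$, and opens exactly one subtournament of each size $0,1,\ldots,n'-1$; so the only reachable profile is the all-ones vector, with value $p(2^{n'},n')$, matching the initialization. For the inductive step, I would argue both inequalities. For ``$\ge$'': any achievable value for profile $\mathcal{X}$ after seeding $\ell$ players arises from some achievable value for profile $\mathcal{X}'$ after seeding $\ell-1$ players, where $\mathcal{X}'$ and $\mathcal{X}$ differ exactly by the closing of an $r$-round subtournament for some $r$; by the induction hypothesis that earlier value is $\le T[\ell-1,\mathcal{X}']$, and adding $p(2^{n'}-\ell+1,r)$ shows the value is $\le$ the corresponding term in the max, hence $\le T[\ell,\mathcal{X}]$. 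For ``$\le$'' (i.e., $T[\ell,\mathcal{X}]$ is actually achievable): for the maximizing $r$, the induction hypothesis gives a concrete partial seeding of the $\ell-1$ strongest players realizing $T[\ell-1,\mathcal{X}']$ with profile $\mathcal{X}'$ having $x_r'=x_r-1\ge 0$ wait—here one must be careful: we need $\mathcal{X}'$ to have a nonnegative, indeed positive, entry in the slot being decremented after incrementing; I would note that the recurrence only yields a finite value when the predecessor profile is legitimately reachable (otherwise the $-\infty$ initialization propagates), so the maximizing $r$ necessarily corresponds to a genuinely reachable $\mathcal{X}'$ with an open $r$-round subtournament available to close. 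Extend that seeding by placing player $2^{n'}-\ell+1$ into (the root of) one of the open $r$-round subtournaments; by the narrative this yields profile $\mathcal{X}$ and total value $T[\ell-1,\mathcal{X}']+p(2^{n'}-\ell+1,r)=T[\ell,\mathcal{X}]$.

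I would also record the small but necessary observation that the game values accumulated so far depend only on the set of players already placed and their win-counts-so-far, which is exactly the content of the win-count oriented property (\cref{def:wincount}) restricted to subtournaments — this is what lets us summarize the state by the pair $(\ell,\mathcal{X})$ rather than by the full partial seeding, and it is what makes the per-step contribution $p(2^{n'}-\ell+1,r)$ independent of everything except $\ell$ and $r$. The main obstacle I anticipate is the bookkeeping around reachability of profiles and the edge cases of the vector shift (what ``$x_{r-1}-1$'' means when $r=0$, ensuring indices stay in range, and confirming that closing a subtournament of the maximum open size is always possible when the formula returns something finite); once that is pinned down, the induction itself is routine. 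A secondary subtlety is making explicit that every partial seeding of the $\ell$ strongest players (respecting strength order) does arise as ``seed the $\ell$-th player into the root of some open subtournament'' starting from a partial seeding of the first $\ell-1$ — i.e., that the last-placed player is forced into a subtournament root — which follows because all stronger players are already placed and a knockout bracket forces the strongest unplaced player to be the winner of whatever sub-bracket it lands in.
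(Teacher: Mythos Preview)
Your proposal is correct and follows essentially the same approach as the paper: induction on $\ell$, with the base case $\ell=1$ yielding the all-ones profile and value $p(2^{n'},n')$, and the inductive step tracking the profile transformation $(x_0-1,\ldots,x_{r-1}-1,x_r+1,x_{r+1},\ldots)$ when the $\ell$th player is seeded as the winner of an $r$-round open subtournament. The paper's own proof is terser and does not explicitly separate the two inequalities or discuss the reachability/$-\infty$ bookkeeping and the role of the win-count oriented hypothesis, so your more careful treatment of these points is an improvement in rigor rather than a different method.
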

\begin{proof}
We prove the statement by induction on $\ell$. Initially, for $\ell=1$, we have
\[
T[1,x_0,x_1,\ldots,x_{n'-1}]=
\begin{cases}
      p(2^{n'},n') & \text{if}\ x_0=x_1=\ldots=x_{n'-1}=1, \\
      -\infty & \text{otherwise.}
    \end{cases}
\]
Note that we start with the strongest player $i=2^{n'}$. Hence, we have that $i$ wins all games. It follows that for each round $r<n'$, one subtournament with $r$ rounds is opened. Furthermore, one degenerate tournament for round zero is opened. Hence, the only achievable subtournament profile is $\mathcal{X}=(1,1,\ldots, 1)$ and the value achieved in this case is $p(2^{n'},n')$.

Now, assume that $\ell > 1$ and let $\mathcal{X}=(x_0,x_1,\ldots,x_{n'-1})$ be a subtournament profile.
Note that when seeding a player as a winner of a subtournament with $r$ rounds, one subtournament with $r$ rounds is closed, and $r$ subtournaments with $0,1,\ldots,r-1$ rounds, respectively, are opened. It follows that to achieve the subtournament profile $\mathcal{X}=(x_0,x_1,\ldots, x_{n'-1})$ by seeding the $\ell$th player as a winner of a subtournament with $r$ rounds, the previous subtournament profile must be $\mathcal{X}'=(x_0-1,x_1-1,\ldots,x_{r-1}-1,x_r+1,x_{r+1},x_{r+2},\ldots, x_{n'-1})$. Seeding the $\ell$th strongest player as a winner of a subtournament with $r$ rounds increases the tournament value by $p(2^{n'}-\ell+1,r)$. By induction, we have that the optimal tournament value achievable by seeding player $\ell$ as a winner of a subtournament with $r$ rounds and obtaining subtournament profile $\mathcal{X}$ is $T[\ell-1,x_0-1,\ldots,x_{r-1}-1,x_r+1,x_{r+1},\ldots, x_{n'-1}]+p(2^{n'}-\ell+1,r)$. It follows that the optimal tournament value achievable by seeding the $\ell$ strongest players while obtaining subtournament profile $\mathcal{X}$ is the maximum over all the possibilities of how to seed the $\ell$th strongest player, which is
\[
\max_{r\in \{0,1,\ldots,n'-1\}} (T[\ell-1,x_0-1,\ldots,x_{r-1}-1,x_r+1,x_{r+1},\ldots, x_{n'-1}]+p(2^{n'}-\ell+1,r)).
\]
It follows that the dynamic program is correct.
\end{proof}
To prove \cref{thm:quasipoly}, it remains to bound the running time of the dynamic program.
\begin{proof}[Proof of \cref{thm:quasipoly}]
    Consider the dynamic programming table $T:\mathbb{N}^{\log n+1}\rightarrow\mathbb{N}$ described above. By \cref{lem:dpcorrectness} we have that we can find the optimal solution by computing $T[2^{n'},0,0,\ldots,0]$; when all players are seeded, all subtournaments are closed. Each entry of the table can be computed in time $\OO(\log n)$ time, since we compute the maximum over $\OO(\log n)$ values, each of which can be looked up in constant time. From the definition of the table, it follows that its size is in $n^{\OO(\log n)}$, since we consider $\OO(\log n)$ different sizes of subtournaments and $\OO(n)$ subtournaments of each size can be open. \cref{thm:quasipoly} follows.
\end{proof}

\subsection{Algorithms for Player Popularity-Based Game-Value Functions}\label{sec:greedy}

Here, we present a linear time greedy algorithm for special cases of player popularity-based game-value functions where there are only two different player popularity values. Furthermore, we give an FPT algorithm for the ``disagreement'' between the player popularity values and the strength ordering, which, intuitively, measures how different the strength ordering is from the ordering obtained by sorting the players (descending) by their player popularity value. We give a formal definition when we introduce the algorithm.

The algorithms presented here use the concept of open and closed subtournaments introduced in the previous section. We start with presenting a greedy algorithm for the case where there are only two different player popularity values.

\begin{theorem}\label{thm:twovalues}
    \probname\ with a player popularity-based game-value function is solvable in linear time if the game-value function is player popularity-based and there are only two different player popularity values.
\end{theorem}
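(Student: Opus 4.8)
The plan is to exploit the structure revealed by the open/closed subtournament framework together with the player-popularity-based form of $v$: since $v(i,j,r)=v_{\max(i,j)}$, the only thing that matters for the total value is, for each player, the number of games that player wins. When there are only two popularity values, say $H$ (``popular'', value $h$) and $L$ (``unpopular'', value $\ell$) with $h>\ell$ (after an application of \cref{obs:constant} we may even assume $\ell=0$, so the value is $h$ times the number of wins by popular players), the total tournament value is $h\cdot(\text{number of games won by popular players}) + \ell\cdot(\text{number of games won by unpopular players})$. Because the total number of games is fixed at $n-1$, maximizing the value is equivalent to maximizing the number of games won by popular players (when $h>\ell$; the case $h=\ell$ is trivial, and $h<\ell$ is symmetric by swapping the roles). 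So the whole problem reduces to: seed the players so as to maximize the number of games won by members of the ``popular'' set $P\subseteq N$.

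First I would record the key combinatorial fact: a player wins exactly $r$ games if and only if it is the winner of a subtournament of $r$ rounds, and since the strength order is total, the winner of any sub-bracket is simply its strongest member. Iterating players from strongest to weakest (as in \cref{sec:alg}), when we place the $\ell$-th strongest player we choose the size $2^r$ of the currently-open subtournament it wins; this contributes $r$ wins to that player, closes one $r$-round slot, and opens one slot of each size $0,1,\dots,r-1$. The greedy rule I propose is: process players from strongest to weakest; whenever the current player is popular, assign it to the \emph{largest} currently-open subtournament; whenever the current player is unpopular, assign it to the \emph{smallest} currently-open subtournament. Intuitively, popular players should win as many games as possible, and unpopular players should be ``buried'' as deep as possible so that they win as few games as possible (equivalently, so they do not consume the large brackets that a later popular player could have used). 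This runs in linear time if the open subtournaments are kept in a simple size-indexed structure (at most $\log n$ distinct sizes, so maintaining counts $x_0,\dots,x_{\log n-1}$ and always taking the max/min nonempty index is $O(1)$ amortized per step after $O(n)$ setup; even a naive implementation is $O(n\log n)$, and with a little care $O(n)$).

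The substance of the proof is an exchange argument showing this greedy is optimal. I would argue by induction on the number of players processed, maintaining the invariant that the greedy's partial seeding can be extended to an optimal one. The cleanest formulation: given any optimal seeding, if the strongest not-yet-fixed player is popular but is not placed in a largest open bracket in that optimal solution, swap the sub-bracket rooted at its position with the sub-bracket rooted at the (root of the) largest open bracket — this does not decrease the number of popular wins, because moving a popular player to a bracket of at least equal depth can only increase its wins, while the displaced winner (strongest player of the large bracket, which is weaker, hence processed later) moves to a shallower bracket and we must check the net effect on \emph{its} subtree; here one uses that the displaced subtree as a whole is just relocated, so the only change is the depth at which its own winner plays the ``root'' game against our player, plus the recursive contents, which are unchanged. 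Symmetrically, if the strongest unfixed player is unpopular and not in a smallest open bracket, one shows relocating it to a smallest open bracket does not decrease the popular-win count, because a bracket size freed at the top is made available to a (weaker, later-processed) player that may be popular. The main obstacle is making these swaps genuinely value-non-decreasing: relocating a whole sub-bracket can change which games a \emph{later}, weaker player inside it wins only through the top game of that sub-bracket, so one must be careful that the bookkeeping is exactly ``wins are determined by bracket depth, and total games are conserved.'' I expect this exchange step — precisely specifying the swap and verifying it is safe in all configurations (in particular when the largest/smallest open bracket is of the same size as the one currently chosen, and when a popular and an unpopular chain interleave) — to be the crux; once it is in place, optimality of the greedy and the linear running time follow routinely, and translating ``maximize popular wins'' back to ``maximize tournament value'' is immediate from the player-popularity-based form of $v$ and \cref{obs:constant}.
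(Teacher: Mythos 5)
Your algorithm is exactly the paper's greedy (process players by decreasing strength; a popular player takes a largest currently open subtournament, an unpopular player a smallest), and your overall plan---an exchange argument against an optimal seeding that agrees with the greedy on a longest prefix---is also the paper's. Your reduction to ``maximize the number of games won by popular players'' (since the total number of games is fixed at $n-1$) is a small, correct simplification that the paper does not use. However, the step that carries the entire proof, the exchange itself, is not actually established: you describe the intended swap only loosely and explicitly defer its verification (``I expect this exchange step \dots to be the crux''). As stated, the swap is not even well-defined: you propose to exchange ``the sub-bracket rooted at its position with the sub-bracket rooted at the (root of the) largest open bracket'', but these sub-brackets have different sizes ($2^{r'}$ versus $2^{r}$ with $r'<r$), so they cannot be exchanged as blocks of seed positions; moreover, your accounting for the ``displaced winner'' skips precisely the configuration where a careless swap loses value, namely when that displaced winner $j$ is popular and, after being relocated, no longer beats the players now surrounding it, so popular wins disappear.

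The paper closes this gap with a case distinction on the displaced winner $j$ (the strongest player weaker than $i$ that the optimum seeds as winner of an $r$-round bracket). If $j$ is unpopular, one swaps only the two players $i$ and $j$: the value changes by at least $r\cdot(a-b)+r'\cdot(b-a)=(r-r')(a-b)\ge 0$, since the $r$ games formerly won by $j$ (value $b$ each) are now won by $i$ (value $a$ each), while the $r'$ games formerly won by $i$ are still won by players of value at least $b$. If $j$ is popular, swapping individuals is unsafe, so one instead swaps two \emph{equal-size} blocks: the $2^{r'}$ seed positions of the sub-bracket rooted at $i$ with the $2^{r'}$ seed positions of the sub-bracket rooted at $j$'s $r'$-th game (i.e., $j$ together with its first $r'$ opponents). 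Then $i$, being stronger than everyone in the $r$-round bracket, wins all $r$ rounds there, while $j$ keeps exactly its first $r'$ opponents and hence still wins at least $r'$ games, so no value is lost. In addition, the notion of agreement must be phrased in terms of win counts (each of the $\ell$ strongest players has the same number of wins), so that the modified seeding agrees with the greedy on a strictly longer prefix, and the symmetric case of an unpopular player $i$ (where $r'>r$ and the roles of popular and unpopular are exchanged) must be handled. Without these precise swaps and their verification, the central claim of your proposal remains unproved, even though the algorithm, the running-time bound, and the reduction to counting popular wins are all correct.
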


\begin{proof}
Let $a$ and $b$ denote the two distinct player popularity values with $a>b$. We call players with popularity value $a$ \emph{popular} and all other players \emph{unpopular}.

We provide a simple greedy algorithm for this case that uses the concept of open and closed subtournaments. 
During the execution of the algorithm, we keep track of how many subtournaments of which size are currently open.
Note that since we are given a tournament with $n$ players, initially, one subtournament (with $\log n$ rounds) is open. Now, our algorithm performs the following steps.   
\begin{enumerate}
\item The initial tournament value is set to 0, and one subtournament with $\log n$ rounds is open. 
\item Starting with the strongest player, iterate through the players in order of their strength.
\item Let player $i$ be the player considered in the current iteration. Let $r$ denote the largest number of rounds of an open subtournament, and let $r'$ denote the smallest number of rounds of an open subtournament.
\begin{enumerate}
\item[3.1] If player $i$ is popular:
\begin{enumerate}
\item[3.1.1] Close one open subtournament with $r$ rounds and open $r$ new subtournaments as follows: for each round $r''\in[r-1]$, we open one subtournament with $r''$ rounds. Furthermore, if $r\neq 0$, we open one degenerate subtournament for round zero.
\item[3.1.2] Add $r\cdot a$ to the current tournament value.
\end{enumerate} 
\item[3.2] If player $i$ is unpopular:
\begin{enumerate}
\item[3.2.1] Close one open subtournament with $r'$ rounds and open $r'$ new subtournaments as follows: for each round $r''\in[r'-1]$, we open one subtournament with $r''$ rounds. Furthermore, if $r'\neq 0$, we open one degenerate subtournament for round zero.
\item[3.2.2] Add $r'\cdot b$ to the current tournament value.
\end{enumerate} 
\end{enumerate}
\end{enumerate}

First, we argue that the described algorithm runs in linear time. We iterate through the players and consider each player once. Furthermore, since there are total $n$ subtournaments, and each subtournament is opened and closed exactly once in our algorithm, we can conclude that the above-described algorithm runs in linear time. Now, we claim that the described algorithm computes the maximum possible tournament value achievable by any seeding.

It is straightforward to see that if, after the last iteration, the described algorithm computes a maximum current tournament value $V$, then there is a seeding that achieves tournament value $V$.

For the converse, let the maximum achievable tournament value be $V$, and assume for contradiction that the maximum current tournament value computed by our algorithm after the last iteration is strictly smaller than $V$. We say that two seedings $\sigma$ and $\sigma'$ \emph{agree on a prefix of size $\ell$} of the players if each of the $\ell$ strongest players has the same number of wins in $\sigma$ and $\sigma'$. 
Let $\sigma$ be a seeding achieving tournament value $V$ that agrees on a maximum prefix of the players with the seeding produced by the described algorithm.
Let $i$ be the strongest player that is not part of the prefix.
We make the following case distinction:

    \smallskip
    \noindent \textbf{Case 1.} Player $i$ is popular. The algorithm seeds player $i$ as a winner of a subtournament with $r$ rounds but $\sigma$ seeds players $i$ as a winner of a subtournament with $r'\neq r$ rounds. First, note that we must have that $r'<r$, since all players that can beat $i$ are seeded by the algorithm in the same way as by $\sigma$, and the algorithm seeds player $i$ as the winner of a subtournament with a maximum number of rounds.

    Now, let player $j$ be the strongest player with $i>j$ that is seeded by $\sigma$ as a winner of a subtournament with $r$ rounds. Note that such a player must exist. Next, we consider two subcases.
    \begin{enumerate}

    \item Assume $j$ is an unpopular player, then we swap the seed positions of players $i$ and $j$ in $\sigma$. This results in a new seeding $\sigma'$ where player $i$ is the winner of a subtournament with $r$ rounds. Furthermore, the tournament value achieved by $\sigma'$ is at least $V$: player $j$ cannot win more than $r'$ games, since player $i$ is stronger than player $j$. It follows that the games won by player $i$ when the seeding $\sigma$ is used are now won by either player $j$ or some other player. In particular, the $r'$ games that player $i$ won when $\sigma$ is used give a value of at least $r'\cdot b$ when $\sigma'$ is used. Furthermore, the $r$ games won by player $j$ in $\sigma$ are now won by player $i$ in $\sigma'$ and hence give a value of $r\cdot a$. Let $V'$ denote the tournament value obtained when the seeding $\sigma'$ is used. We have $$V'\ge V+r\cdot (a-b)+r'\cdot (b-a)=V+(r-r')\cdot(a-b)\ge V.$$ 
    
    Furthermore, the seeding $\sigma'$ agrees on a prefix with the seeding produced by the described algorithm that contains at least one more player (namely player $i$). This is a contradiction to the assumption that $\sigma$ is a seeding achieving tournament value $V$ that agrees on a maximum prefix of the players with the seeding produced by the described algorithm.

  \item  Assume $j$ is a popular player. In this case, unlike the previous case, we need to swap the seed positions of multiple players as follows: Informally, we swap the seed positions of player $i$ and all players seeded in the subtournament rooted by the $r'$th game of player $i$ with the seed positions player $j$ and all players seeded in the subtournament rooted by the $r'$th game of player $j$. Formally, we rearrange the players in the seeding $\sigma$ by swapping the players on positions $\sigma(i), \ldots, \sigma(i)+2^{r'}-1$ in $\sigma$ with players on positions $\sigma(j), \ldots, \sigma(j)+2^{r'}-1 $ in $\sigma$ (maintaining the order). See \cref{fig:8} for an illustration of this swap. 
    This results in a new seeding $\sigma'$ where, in particular, player $i$ is the winner of a subtournament with $r$ rounds. Furthermore, player~$j$ wins at least $r'$ games in $\sigma'$, since player $j$ beats the same players in the first $r'$ rounds in $\sigma$ and $\sigma'$. Hence, we have that the tournament value obtained by $\sigma'$ is at least $V$. Furthermore, the seeding $\sigma'$ agrees on a prefix with the seeding produced by the described algorithm that contains at least one more player (namely player~$i$). This is a contradiction to the assumption that $\sigma$ is a seeding achieving tournament value $V$ that agrees on a maximum prefix of the players with the seeding produced by the described algorithm.
    \end{enumerate}
    
    \smallskip
    \noindent \textbf{Case 2.} Player $i$ is unpopular. The algorithm seeds player $i$ as a winner of a subtournament with $r$ rounds but $\sigma$ seeds players $i$ as a winner of a subtournament with $r'\neq r$ rounds. First, note that we must have that $r'>r$, since all players that can beat $i$ are seeded by the algorithm in the same way as by $\sigma$, and the algorithm seeds player $i$ as the winner of a subtournament with a minimum number of rounds.

     \begin{figure}[t]
 \centering
    \includegraphics[scale=0.65]{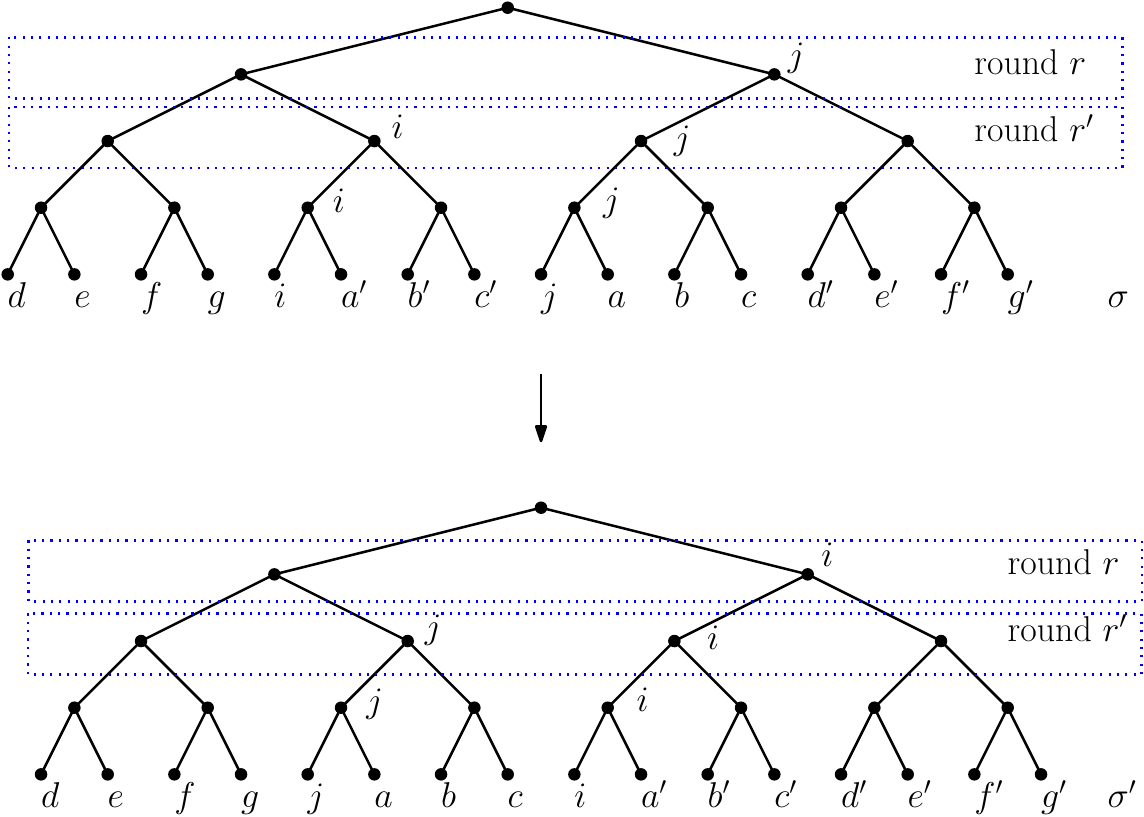}
    \caption{An illustration of the swapping of the players on positions $\sigma(i), \ldots, \sigma(i)+2^{r'}-1$ with players on positions $\sigma(j), \ldots, \sigma(j)+2^{r'}-1. $ Here, note that $r'=2$ and $r=3$. Also, for clarity, only a part of the whole tournament is shown.}
    \label{fig:8}
\end{figure}


    Now, let player $j$ be the strongest player with $i>j$ that is seeded as a winner of a subtournament with $r$ rounds. Note that such a player must exist. From here, the analysis is analogous to the first case, except that popular and unpopular players switch their roles.

It follows that the described algorithm is correct.
\end{proof}

\commentout{
Next, we present a greedy algorithm for the case where the strength ordering of the players and the player popularity value ordering agree. 
Formally, we say that the player popularity values \emph{agree} with the strength ordering if for all $i,j\in N$ we have that if $i>j$ then $v_i\ge v_j$, where $v_i$ and $v_j$ are the player popularity values of $i$ and $j$, respectively. 
Intuitively, we can use the greedy algorithm from \cref{thm:twovalues} and treat every player as if they were popular.

\begin{theorem}\label{thm:sameorder}
    \probname\ with a player popularity-based game-value function is solvable in linear time if the player popularity values agree with the strength ordering
\end{theorem}
\begin{proof}
We provide a simple greedy algorithm for this case that uses the concept of open and closed subtournaments. For every player $i\in N$, let $v_i$ denote the player popularity value of player $i$. Note that since we are given a tournament with $n$ players, initially, one subtournament (with $\log n$ rounds) is open. Now, our algorithm does the following.

\begin{enumerate}
\item Set the initial tournament value to 0.
\item Iterate through the players in order of their strength, starting with the strongest player.
\item Let player $i$ be the player considered in a current iteration and let $r$ denote the largest number of rounds of an open subtournament.
\begin{enumerate}
\item[3.1] Close one open tournament with $r$ rounds and open $r-1$ new subtournaments, for each round $r'\in[r-1]$, we open one subtournament with $r'$ rounds.
\item[3.2] Add $r\cdot v_i$ to the current tournament value. 
\end{enumerate}
\end{enumerate}

First, we argue that the described algorithm runs in linear time. We iterate through the players and consider each player once. Furthermore, since there are total $n$ subtournaments, and each subtournament is opened and closed exactly once in our algorithm, we can conclude that the above-described algorithm runs in linear time. We claim that the described algorithm computes the maximum possible tournament value achievable by any seeding.

It is straightforward to see that if, after the last iteration, the described algorithm computes a maximum current tournament value $V$, then there is a seeding that achieves tournament value $V$.

For the converse, let the maximum achievable tournament value be $V$ and assume for contradiction that the maximum current tournament value computed by the algorithm after the last iteration is strictly smaller than $V$. Let $\sigma$ be a seeding achieving tournament value $V$ such that the strongest player that is seeded differently (that is, as a winner of a subtournament with a different number of rounds) than by the described algorithm is as weak as possible. Let $i$ be said strongest player.

Let the algorithm seed player $i$ as a winner of a subtournament with $r$ rounds but $\sigma$ seed player $i$ as a winner of a subtournament with $r'\neq r$ rounds. First, note that we must have that $r'<r$, since all players that can beat $i$ are seeded by the algorithm in the same way as by $\sigma$, and the algorithm seeds player $i$ as the winner of a subtournament with a maximum number of rounds.

    Now, let player $j$ be the strongest player with $i>j$ that is seeded as a winner of a subtournament with $r$ rounds. Note that such a player must exist. We swap the seed positions of players $i$ and $j$ in $\sigma$. This results in a new seeding $\sigma'$ where player $i$ is the winner of a subtournament with $r$ rounds. Furthermore, the tournament value achieved by $\sigma'$ is at least $V$: player $j$ cannot win more than $r'$ games, since player $i$ is stronger than player $j$. It follows that the games won by player $i$ when the seeding $\sigma$ is used are now won by either player $j$ or some other player that is stronger than $j$ and weaker than $i$. In particular, the $r'$ games that player $i$ won when $\sigma$ is used give a value of at least $r'\cdot v_j$ when $\sigma'$ is used. Furthermore, the $r$ games won by player $j$ in $\sigma$ are now won by player $i$ in $\sigma'$ and hence give a value of $r\cdot v_i$. Let $V'$ denote the tournament value obtained when the seeding $\sigma'$ is used. We have $$V'\ge V+r\cdot (v_i-v_j)+r'\cdot(v_j-v_i)=V+(r-r')\cdot(v_i-v_j)\ge V.$$
    
    Furthermore, the seeding $\sigma'$ agrees on a prefix with the seeding produced by the described algorithm that contains at least one more player (namely player $i$). This is a contradiction to the assumption that $\sigma$ is a seeding achieving tournament value $V$ that agrees on a maximum prefix of the players with the seeding produced by the described algorithm.
    It follows that the described algorithm is correct.
\end{proof}}

Next, we present an FPT algorithm for the case where the strength ordering of the players and the player popularity value ordering agree for most players. 
Formally, we say that the player popularity values \emph{agree} with the strength ordering if for all $i,j\in N$ we have that if $i>j$ then $v_i\ge v_j$, where $v_i$ and $v_j$ are the player popularity values of $i$ and $j$, respectively. 
 We say that the \emph{disagreement} between the player popularity values and the strength ordering is $k$ if $k$ is the smallest integer such that there exists a player set $N'\subseteq N$ with $|N'|\le k$ and for all $i,j\in N\setminus N'$ we have that if $i>j$ then $v_i\ge v_j$, where $v_i$ and $v_j$ are the player popularity values of $i$ and $j$, respectively. We call $N'$ a minimum set of disagreeing players.

We use the disagreement between the player popularity values and the strength ordering as a parameter.
Intuitively, we can guess for the disagreeing players how many wins they should get, and then we can use the greedy algorithm from \cref{thm:twovalues} and treat the remaining players as if they were popular.

First, we show that we can compute a minimum set of disagreeing players efficiently, since we are going to need access to this set in our FPT algorithm.

\begin{proposition}\label{prop:disagree}
Given an instance of \probname\ with a player popularity-based game-value function, we can compute a minimum set of disagreeing players for that instance in $\OO(k2^k\cdot n+n\log n)$ time, where $k$ is the disagreement between the player popularity values and the strength ordering.  
\end{proposition}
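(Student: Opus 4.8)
The plan is to use a bounded search tree. First I would record the combinatorial reformulation underlying the definition: listing the players from weakest to strongest, a set $N\setminus N'$ agrees with the strength ordering if and only if the sequence of popularity values of its members (in this order) is non-decreasing; equivalently, for any two players that are consecutive among $N\setminus N'$ in strength order, the stronger one has popularity value at least that of the weaker one. Call such an ordered pair a \emph{conflict} of $N\setminus N'$ when it violates this. The branching rule is then immediate: if $N\setminus N'$ has a conflict $\{i,j\}$, then every agreeing set obtained by deleting only further players must delete $i$ or $j$, so we branch on these two options.

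Concretely, the algorithm maintains the set $D$ of players already marked for deletion (initially empty), together with a budget $t$; at each node it scans the still-present players in strength order to find the first conflicting consecutive pair. If there is none, $N\setminus D$ agrees and we output $D$; if there is one but $|D|=t$ we backtrack; otherwise we recurse on $D\cup\{i\}$ and on $D\cup\{j\}$. Running this with increasing budgets $t=0,1,2,\ldots$ (or, alternatively, first computing $k$ and then running with budget $k$), the first budget for which the search succeeds equals the disagreement $k$, and the set $D$ returned is a minimum disagreeing set. The search tree has depth at most $k$, hence $\OO(2^k)$ nodes; locating a conflict at a node costs $\OO(n)$, which together with an $\OO(n\log n)$ initialization and $\OO(k)$ bookkeeping per node gives the claimed $\OO(k2^k\cdot n+n\log n)$ bound.

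For correctness, soundness is clear, since we only ever output a $D$ with $N\setminus D$ agreeing. Completeness follows by induction on the budget using the branching rule: if some disagreeing set of size at most $t$ exists, then for the first conflict $\{i,j\}$ found, one of $i,j$ lies in it, so one of the two recursive calls still has sufficient budget; hence the smallest successful budget is exactly $k$.

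The one place that needs a little care — and where I expect the only real subtlety — is the bookkeeping under deletions: ``consecutive in strength order among the remaining players'' changes as $D$ grows, so one should maintain the remaining players in a doubly linked list (or similar structure) so that conflict detection stays linear and the branching rule is genuinely exhaustive. I would also remark that the problem admits an even faster branching-free solution: the maximum size of an agreeing set equals the length of a longest non-decreasing subsequence of the popularity sequence read in strength order, and a realizing subsequence — hence a minimum disagreeing set as its complement — can be computed in $\OO(n\log n)$ time.
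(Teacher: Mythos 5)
Your proposal is correct and follows essentially the same route as the paper: a depth-$k$, two-way Vertex-Cover-style branching on a violating pair, located in linear time at each node, with the budget increased from $0$ upwards; your choice of branching on the first \emph{consecutive} descent among the remaining players (maintained in a linked list) rather than the paper's lexicographic-sort-and-compare detection of a violating pair is only an implementation detail. Your closing remark—that a minimum set of disagreeing players is the complement of a longest non-decreasing subsequence of the popularity sequence read in strength order, computable in $\OO(n\log n)$ time—is also correct and in fact strengthens the proposition.
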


\begin{proof}
For every player $i\in N$, let $v_i$ denote the player popularity value of player $i$.
    First, we sort the players lexicographically by their player popularity value and their strength. Let $>_\text{value}$ denote this ordering, and let $>_\text{strength}$ denote the strength ordering. We first show that we can find a pair $i,j\in N$ of players with $i>j$ and $v_i<v_j$ in linear time (if one exists). 

    If there exist $i,j\in N$ of players with $i>j$ and $v_i<v_j$, then the two orderings $>_\text{strength}$ and $>_\text{value}$ are not the same, since $i >_\text{strength} j$ but $j >_\text{value} i$. Then, there must exist an ordinal position $\ell$ such that player $i_s$ is at ordinal position $\ell$ of $>_\text{strength}$ and player $i_v$ is at ordinal position $\ell$ of $>_\text{value}$ and $i_s\neq i_v$. Assume that $\ell$ is the smallest such ordinal position. Then, we have that $i_s >_\text{strength} i_v$ since otherwise $i_v$ must have an ordinal position $\ell'<\ell$ in $>_\text{strength}$, a contradiction to the assumption that $\ell$ is the smallest ordinal position where $>_\text{strength}$ and $>_\text{value}$ do not have the same player.
    
    Furthermore, we have $i_v >_\text{value} i_s$ since otherwise $i_s$ must have an ordinal position $\ell'<\ell$ in $>_\text{value}$, a contradiction to the assumption that $\ell$ is the smallest ordinal position where $>_\text{strength}$ and $>_\text{value}$ do not have the same player. Lastly, we have that $v_i<v_j$ since if $v_i=v_j$, then $i_s >_\text{value} i_v$ because of the lexicographical ordering. 

    It follows that we can iterate through the ordinal positions of $>_\text{strength}$ and $>_\text{strength}$ and compare the two respective players. If the two players are different, then we have found a pair $i,j\in N$ of players with $i>j$ and $v_i<v_j$. Clearly, this takes $\OO(n)$ time.

    Now, to compute a set of disagreeing players of size $k$, we can find a pair of players $i,j\in N$ of players with $i>j$ and $v_i<v_j$ and branch on which of the two players we put into the set of disagreeing players. Then, we recursively find a set of disagreeing players of size $k-1$ among the remaining players. Note that this resembles the well-known basic branching algorithm for \textsc{Vertex Cover} parameterized by the solution size~\cite{CyganFKLMPPS15}. We avoid explicitly constructing a graph on the set of players (as vertices) with edges between pairs of players $i,j\in N$ of players with $i>j$ and $v_i<v_j$ to keep the polynomial part of the running time in $\OO(n)$ (after the sorting step). To find a minimum $k$, we start with $k=0$ and increase $k$ by one until we find a set of disagreeing players of size $k$.
    It is straightforward to see that the described algorithm has the claimed running time. Its correctness follows from the correctness of the well-known basic branching algorithm for \textsc{Vertex Cover} parameterized by the solution size~\cite{CyganFKLMPPS15}.
\end{proof}

Now, we are ready to present the FPT algorithm for the minimum disagreement as a parameter.


\begin{theorem}\label{thm:fptdisagree}
    \probname\ with a player popularity-based game-value function is solvable in $k^{\OO(k)}\cdot n^{1+o(1)}$ time, where $k$ is the disagreement between the player popularity values and the strength ordering.
\end{theorem}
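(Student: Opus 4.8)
The plan is to reduce to the agreement case by brute-forcing over the few disagreeing players. First I would invoke \cref{prop:disagree} to compute, in $\OO(k2^k\cdot n+n\log n)$ time, a minimum set $N'$ of disagreeing players with $|N'|\le k$, so that on $N\setminus N'$ stronger players are at least as popular. Recall from the discussion around \cref{fig:5} (also used for \cref{thm:quasipoly}) that the multiset of win counts realized by a seeding does not depend on the seeding: for each $r\in\{0,\ldots,\log n-1\}$ exactly $2^{\log n-1-r}$ players win $r$ games, and exactly one player (necessarily the strongest) wins $\log n$ games; moreover a win-count assignment is realizable precisely when it is produced by the open/closed-subtournament process, i.e., processing players from strongest to weakest while maintaining the multiset $S$ of currently open subtournament sizes, starting from $S=\{\log n\}$, and at each step assigning the current player some $r\in S$, which deletes one copy of $r$ from $S$ and inserts $0,1,\ldots,r-1$. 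Since the game-value function is player popularity-based, the value of $\sigma$ equals $\sum_{i\in N}v_i\cdot w_\sigma(i)$ (it is win-count oriented with $p(i,w)=v_i\cdot w$). Hence it suffices to find a realizable win-count assignment $w$ maximizing $\sum_{i}v_i w_i$.

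The algorithm enumerates all $(\log n+1)^{|N'|}$ functions $g:N'\to\{0,\ldots,\log n\}$, viewing $g$ as a guess for the number of wins of each disagreeing player, and discards $g$ unless the multiset $g(N')$ is a sub-multiset of the global win-count multiset. For a fixed feasible $g$, I claim the best realizable extension of $g$ (if any exists) gives the agreeing players their win counts in the unique way that is non-increasing along the strength order: delete $g(N')$ from the global multiset and hand out the remaining values in decreasing order to the agreeing players sorted from strongest to weakest. The proof is an exchange argument in the spirit of Case~1, subcase~2 of \cref{thm:twovalues}: if $\sigma$ realizes an extension of $g$ and there are agreeing players $p_a>p_b$ with $w_\sigma(p_a)<w_\sigma(p_b)$, let $A$ be the $w_\sigma(p_a)$-round subtournament won by $p_a$, and let $B'$ be the $w_\sigma(p_a)$-round subtournament that $p_b$ defeats in round $w_\sigma(p_a)+1$ of its $w_\sigma(p_b)$-round subtournament; swapping the two subtrees $A$ and $B'$ wholesale yields a seeding in which $p_a$ now wins $w_\sigma(p_b)$ games, $p_b$ wins $w_\sigma(p_a)$ games, and every other player keeps its win count. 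The value changes by $(v_{p_a}-v_{p_b})\cdot(w_\sigma(p_b)-w_\sigma(p_a))\ge 0$ since $p_a,p_b$ are agreeing, so it does not decrease, while the number of strength-inversions among agreeing players strictly drops; iterating reaches the claimed assignment, showing that it is both optimal among extensions of $g$ and realizable whenever some extension of $g$ is.

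Thus for each $g$ the algorithm builds this candidate assignment, checks realizability by simulating the strongest-to-weakest process with $S$ stored as a length-$(\log n+1)$ count array, and, if it succeeds, evaluates $\sum_i v_i w_i$; it returns the maximum over all $g$, together with a seeding produced by the successful simulation. The simulation costs $\OO(n)$ per guess, since the total ``insert $0,\ldots,r-1$'' work over all players equals $\sum_i w_i=n-1$, and building and evaluating the candidate is $\OO(n+\log n)$; so the main loop runs in $(\log n+1)^{k}\cdot\OO(n)$ time. It remains to note $(\log n+1)^{k}\le k^{\OO(k)}\cdot n^{o(1)}$: if $\log n\le k^2$ then $k\log\log n\le 2k\log k$, hence $(\log n)^{k}\le k^{2k}$; otherwise $k<\sqrt{\log n}$, whence $k\log\log n\le\sqrt{\log n}\,\log\log n=o(\log n)$ and $(\log n)^{k}\le n^{o(1)}$ (and the $+1$ is absorbed). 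Combining with the $\OO(k2^k n+n\log n)$ preprocessing gives the claimed $k^{\OO(k)}\cdot n^{1+o(1)}$ bound.

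The main obstacle is the exchange lemma of the second paragraph. The delicate point is verifying that the subtree swap leaves the win counts of all players other than $p_a$ and $p_b$ unchanged, in particular the win counts of the guessed disagreeing players, which may lie inside $A$, inside $B'$, or elsewhere inside $p_b$'s subtournament, so that the swapped seeding still extends $g$. This is precisely where one argues, as in \cref{thm:twovalues}, that $p_a$ (being stronger than $p_b$ and than everyone formerly in $B'$) merely takes over $p_b$'s victories in rounds $w_\sigma(p_a)+1,\ldots,w_\sigma(p_b)$ without perturbing any other player, while the subtrees $A$ and $B'$ move rigidly. The rest (the running-time accounting and the reduction to maximizing $\sum_i v_i w_i$ over realizable assignments) is routine.
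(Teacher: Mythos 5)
Your overall route (reduce to choosing a realizable win-count assignment maximizing $\sum_i v_i w_i$, guess the win counts of the disagreeing players, and give the agreeing players the leftover counts sorted decreasingly by strength) is a legitimate and in fact cleaner plan than the paper's, which instead re-runs the greedy of \cref{thm:twovalues} with ``restricted'' subtournaments; your running-time accounting is also fine. However, the exchange lemma that everything hinges on --- and that you yourself flag as the main obstacle --- is proved incorrectly. The claim that swapping the blocks $A$ and $B'$ makes $p_a$ win exactly $w_\sigma(p_b)$ games while ``every other player keeps its win count'' is false: after taking over $p_b$'s slot, $p_a$ meets in round $w_\sigma(p_b)+1$ the player $q$ who formerly beat $p_b$, and $q$ is only guaranteed to be stronger than $p_b$, not stronger than $p_a$; if $q<p_a$, then $p_a$ keeps winning and the damage cascades upward. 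Concretely, take $n=8$ with strength order $8>7>\cdots>1$ and seeding $(8,7,1,2,5,3,6,4)$, so the win counts are $8\mapsto 3$, $7\mapsto 0$, $6\mapsto 2$, $5\mapsto 1$, $2\mapsto 1$, and $0$ otherwise. With $p_a=7$, $p_b=5$ (so $r_a=0$, $r_b=1$), your swap exchanges the leaves of players $7$ and $3$, giving the seeding $(8,3,1,2,5,7,6,4)$, in which player $7$ wins \emph{two} games (it also beats $6$, since $6<7$) and player $6$ drops from $2$ wins to $1$.

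This breaks the argument in two ways. First, the perturbed player (here $6$) may belong to $N'$, so the swapped seeding no longer extends the guess $g$, and your induction (``iterate while the number of inversions drops'') loses its invariant. Second, the value can strictly decrease: the change is $2v_7-v_5-v_6$, and $6$ may be a disagreeing player of arbitrarily large popularity, so you cannot argue monotonicity from $v_{p_a}\ge v_{p_b}$ alone. (In \cref{thm:twovalues} the analogous step is safe only because the player absorbing the extra wins is \emph{popular}, i.e., has the maximum value; and the paper's proof of \cref{thm:fptdisagree} avoids the cascade altogether by strengthening the prefix-agreement notion to also fix \emph{which player beats} each prefix player, and by choosing swaps between two blocks that are beaten by the same player $j$, so nothing above $j$ changes.) To salvage your approach you would need either such a refinement --- e.g., pick the inversion pair so that $p_b$'s conqueror is stronger than $p_a$, and show such a pair always exists --- or a direct argument at the level of win-count assignments that the sorted extension of $g$ is realizable (in the open-subtournament process, with the $N'$ values pinned) whenever any extension of $g$ is. Neither is in the proposal, and neither is routine, so as written the proof has a genuine gap.
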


\begin{proof}

First, we compute a minimum set $N'\subseteq N$ of disagreeing players using \cref{prop:disagree}.
%
For each player in $N'$, we guess how many games they win. This yields $(\log n)^{k}$ possibilities. 
Intuitively, once we guess the number of wins for each player in $N'$, we ``reserve'' subtournaments of appropriate size for the players in $N'$, and we use the greedy algorithm described in the proof of \cref{thm:twovalues} to fill up the remaining players in the tournament (by treating them as if they were popular).

More formally, the algorithm proceeds as follows. Once we guess the number of wins for each player in $N'$, we sort the players in $N'$ lexicographically by the number of games they are supposed to win and their strength. For every $i\in N$, let $v_i$ denote the player popularity value of player $i$. 
Initially, one subtournament with $\log n$ rounds is open, and the current tournament value is zero. We consider pairs of players in every iteration, where one player is from the set $N\setminus N'$ and one player is from the set $N'$. 
We iterate through the players in $N\setminus N'$ in order of their strength, starting with the strongest player, and we iterate through the players in $N'$ according to their sorting. Whenever we seed a player, we have to take care that we do not seed a stronger player into one of the subtournaments that will open after seeding that particular player. Therefore, whenever we open subtournaments as a result of seeding a player $i\in N$, we say that those subtournaments are \emph{restricted} by $i$.
Let players $i\in N\setminus N'$ and $j\in N'$ be the players considered in a current iteration. 
Let $r$ denote the largest number of rounds of an open subtournament. 
\begin{itemize}
    \item If we have guessed $j$ to have $r$ wins and there is a subtournament with $r$ rounds that is not restricted by some player $j'<j$, then we close the open tournament with $r$ rounds that is restricted by the weakest player $j''$ that is stronger than $j$, that is, $j''>j$, and open $r$ new subtournaments, for each round $r'\in[r-1]$ one with $r'$ rounds and, if $r\neq 0$, we open one degenerate subtournament for round zero. Each opened subtournament is restricted by $j$.
    Furthermore, we add $r\cdot v_j$ to the current tournament value. If there is no such subtournament, we abort. Otherwise, we proceed with considering player $i$ and the next player in $N'$.

\item If we have guessed $j$ to have strictly less than $r$ wins. Then, we seed player $i$ as a winner of an open subtournament with the largest number of rounds, say $r'$, that is not restricted by some player $j'<i$. Then, we close the open tournament with $r'$ rounds that is restricted by the weakest player $j''$ that is stronger than $i$, that is, $j''>i$, and open $r'$ new subtournaments, for each round $r''\in[r'-1]$ one with $r''$ rounds and, if $r'\neq 0$, we open one degenerate subtournament for round zero. Each opened subtournament is restricted by $i$. 
Furthermore, we add $r'\cdot v_i$ to the current tournament value. If there is no such subtournament, we abort. Otherwise, we proceed with considering the next player in $N\setminus N'$ and player $j$.
\end{itemize}

We first analyze the running time of the described algorithm. First, note that by \cref{prop:disagree}, we can compute a minimum set of disagreeing players in $\OO(k2^k\cdot n+n\log n)$ time. Given a set of disagreeing players, we have $(\log n)^{k}$ possibilities on how many games each of these players should win (called ``guesses''). Given a guess, we iterate through all players. Whenever we seed a player, we open and close restricted subtournaments. Note that in total, there are $\OO(n)$ games played, and for each one, we open and close exactly one restricted subtournament. Organizing these subtournaments in e.g.\ a balanced binary search tree~\cite{knuth2015art}, we can open and close a restricted subtournament in $\OO(\log n)$ time. Furthermore, we can find an appropriate subtournament for a given player in $\OO(\log n)$ time. It follows that the overall running time is in
\[
\OO(k2^k\cdot n+n\log n) + \OO((\log n)^{k}\cdot n\log n)\subseteq (\log n)^{\OO(k)}\cdot n.
\]
It is well-known that if $k\le n$, then $(\log n)^{\OO(k)}\subseteq k^{\OO(k)}+n^{\OO(1)}$ (see e.g.~\cite{RamanujanS17}). However, we can give a tighter analysis: Consider the case that $k<\log n / (\log \log n)^2$. Then, we have that 
\[
(\log n)^{\OO(k)}=n^{c/(\log \log n)},
\]
for some constant $c$. On the other hand, if $k\ge\log n / (\log \log n)^2$ then we have that 
\[
(\log n)^{\OO(k)}\le (k\cdot (\log \log n)^2)^{\OO(k)}.
\]
For large enough $n$ we further have that $k>(\log \log n)^2$, hence we get that 
\[
(\log n)^{\OO(k)} \subseteq k^{\OO(k)}.
\]
It follows that $(\log n)^{\OO(k)}\subseteq k^{\OO(k)}+n^{o(1)}$ and hence, we obtain the claimed running time bound.

In the remainder, we argue that the described algorithm computes the maximum possible tournament value achievable by any seeding.

It is straightforward to see that if after the last iteration for some guess, the described algorithm computes a maximum current tournament value $V$, then there is a seeding that achieves tournament value $V$.

For the converse, let the maximum achievable tournament value be $V$ and let $\sigma$ be a seeding achieving tournament value $V$. Consider an iteration of the algorithm where the number of guessed wins for the players in $N'$ is the same as when seeding $\sigma$ is used. 
Note that the algorithm does not abort in this case: for every size of open subtounaments, the players in $N'$ are seeded into those subtournaments first. Since seeding a player in $N$ into some open subtournament only opens up new subtournaments of smaller size, we have a maximal amount of open subtournaments of each size once we reach the iteration where they are considered. Hence, it does not happen that we cannot find a subtournament for them.

Assume for contradiction that the maximum current tournament value computed by the algorithm in that iteration is strictly smaller than $V$. We refine the notion of two seedings agreeing on a prefix of players that we used in the proof of \cref{thm:twovalues}. For this proof, we say that a seeding $\sigma$ and the seeding produced by the algorithm \emph{agree on a prefix of size $\ell$} if each of the first $\ell$ players seeded by the algorithm has the same number of wins in $\sigma$ the seeding produced by the algorithm \emph{and} is beaten by the same player in $\sigma$ and the seeding produced by the algorithm. 

Now, let $\sigma'$ be a seeding achieving tournament value $V$, where the number of wins for the players in $N'$ is the same as when seeding $\sigma$ is used, and additionally such that $\sigma'$ and the seeding produced by the algorithm agree on a maximum prefix. 
Let $i$ be the first player that is not part of the prefix.
We make a case distinction on whether $i$ is a player in $N'$ or not.

\begin{itemize}
    \item Consider the case that $i\in N'$. Then, we know by the assumption that player $i$ has the same number of wins, say $r$, in $\sigma'$ and the seeding produced by the algorithm. It follows that the player that wins against $i$ must be different in $\sigma'$ and the seeding produced by the algorithm. Let player $j$ beat $i$ when $\sigma'$ is used, and let player $j'$ beat $i$ in the seeding produced by the algorithm. Furthermore, let $i'$ be the player that is beaten by $j'$ in round $r+1$ when $\sigma'$ is used.
    
    First, we argue that $i'$ must be weaker than $i$: If $i'$ was stronger than $i$. Then, $i$ is seeded in an earlier iteration by the algorithm than $i$. However, $i'$ is beaten by different players in $\sigma'$ and the seeding produced by the algorithm. This is a contradiction to the assumption that $i$ is the first player that is seeded differently in $\sigma'$ and the algorithm. Hence, we can conclude that~$i>i'$.

     Similar as in the second case in the proof of \cref{thm:twovalues}, we swap the seed positions in $\sigma'$ of multiple players as follows: Informally, we swap the seed positions of player $i$, and all players seeded in the subtournament rooted by the $r$th game of player $i$ with the seed positions player $i'$ and all players seeded in the subtournament rooted by the $r$th game of player $i$. Formally, we rearrange the players in the seeding $\sigma'$ by swapping the players on positions $\sigma'(i), \ldots, \sigma'(i)+2^{r}-1$ in $\sigma$ with players on positions $\sigma'(i'), \ldots, \sigma'(i')+2^{r}-1 $ in $\sigma'$ (maintaining the order). 

     Note that since $j>i>i'$ and $j'>i>i'$, we have that after the swapping, every player wins exactly the same number of games, and only players $i$ and $i'$ are beaten by different players. Hence the tournament value does not change. However, after the swapping, player $i$ is beaten by the same player as in the seeding produced by the algorithm. This is a contradiction to the assumption that $\sigma'$ is a seeding with tournament value $V$ that agrees on a maximum prefix with the seeding produced by the algorithm.
    \item Consider the case that $i\in N\setminus N'$. Now, we consider two subcases: One where player $i$ has the same number of wins in $\sigma'$ and the seeding produced by the algorithm but is beaten by a different player, and one where player $i$ has fewer wins in $\sigma'$ than in the seeding produced by the algorithm.

    For the first case, we can make an argument analogously to the case above where $i\in N$. 

    For the second case, let the algorithm seed player $i$ as a winner of a subtournament with $r$ rounds but $\sigma'$ seed players $i$ as a winner of a subtournament with $r'\neq r$ rounds. First, note that we must have that $r'<r$, since the algorithm first seeds players as winners of larger subtournaments.
Let player $j$ beat player $i$ in the seeding produced by the algorithm. Now, let player $i'\in N$ be the player that is seeded as a winner of a subtournament with $r$ rounds and beaten by $j$ in $\sigma'$. Note that such a player must exist and note that we must have $j\in N\setminus N'$ since players in $N'$ that should obtain $r$ wins are seeded first, that is, if $j\in N'$ then $j$ is seeded before player $i$ by the algorithm and hence must be beaten by the same player in $\sigma'$ and the seeding produced by the algorithm.

 Similar to the case where $i\in N$ and the second case in the proof of \cref{thm:twovalues}, we swap the seed positions of multiple players as follows: Informally, we swap the seed positions of player $i$ and all players seeded in the subtournament rooted by the $r'$th game of player $i$ with the seed positions player $i'$ and all players seeded in the subtournament rooted by the $r'$th game of player $i'$. Formally, we rearrange the players in the seeding $\sigma'$ by swapping the players on positions $\sigma'(i), \ldots, \sigma'(i)+2^{r'}-1$ in $\sigma'$ with players on positions $\sigma'(i'), \ldots, \sigma'(i')+2^{r'}-1 $ in $\sigma'$ (maintaining the order). See \cref{fig:8} for an illustration of this swap. 
    This results in a new seeding $\sigma''$ where, in particular, player $i$ is the winner of a subtournament with $r$ rounds and is beaten by player $j$. Furthermore, player~$i'$ wins at least $r'$ games in $\sigma'$, since player $i'$ beats the same players in the first $r'$ rounds in $\sigma'$ and $\sigma''$. Hence, we have that the tournament value obtained by $\sigma''$ is at least $V$. Furthermore, the seeding $\sigma''$ agrees on a prefix with the seeding produced by the described algorithm that contains at least one more player (namely player~$i$). This is a contradiction to the assumption that $\sigma'$ is a seeding achieving tournament value $V$ that agrees on a maximum prefix of the players with the seeding produced by the described algorithm.

%
\end{itemize}

    It follows that the described algorithm is correct.
\end{proof}

We point out that if $k=0$, that is, the set of disagreeing players is empty, then we can solve the problem in linear time (rather than $n^{1+o(1)}$ time). Note that we can identify this case in linear time, since in the proof of \cref{prop:disagree} we show that we can find a pair of players where the strength disagrees with the player popularity values in linear time. Furthermore, the guessing step is unnecessary in this case. Finally, we do not need to keep track by which players the subtournaments are restricted, hence there are no $\log n$ factors in the running time. Hence, we get the following corollary.
\begin{corollary}\label{thm:sameorder}
    \probname\ with a player popularity-based game-value function is solvable in linear time if the player popularity values agree with the strength ordering
\end{corollary}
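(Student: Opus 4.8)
The plan is to observe that the hypothesis of the corollary is exactly the case $k=0$ of \cref{thm:fptdisagree}, in which the algorithm of that theorem collapses to a plain greedy procedure that admits a linear-time implementation.

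First I would note that, by the characterization established in the proof of \cref{prop:disagree}, the player popularity values agree with the strength ordering if and only if there is no pair $i,j\in N$ with $i>j$ and $v_i<v_j$, and the existence of such a pair can be tested in $\OO(n)$ time. Hence we may take the minimum set of disagreeing players to be $N'=\emptyset$. Then I would instantiate the algorithm of \cref{thm:fptdisagree} with $N'=\emptyset$: there is a single (empty) guess, no subtournament is ever restricted by a player of $N'$, and the whole "restricted subtournament" bookkeeping becomes vacuous. The algorithm therefore reduces to the following greedy loop: start with one open subtournament of $\log n$ rounds and tournament value $0$; process the players from strongest to weakest; for the current player $i$, let $r$ be the largest number of rounds among the open subtournaments, close one such subtournament, add $r\cdot v_i$ to the running value, and open a fresh subtournament with $r''$ rounds for each $r''\in[r-1]$ together with one degenerate round-zero subtournament if $r\ne 0$. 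This is precisely the greedy algorithm of \cref{thm:twovalues} with every player treated as popular, and the corresponding seeding is recovered in the same way.

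Finally I would bound the running time and invoke correctness. Maintain the open subtournaments as an array of counts indexed by $0,1,\ldots,\log n$ together with a pointer to the largest index with positive count; since each step closes a subtournament of maximum size and opens only strictly smaller ones, this pointer is monotonically non-increasing and moves $\OO(\log n)$ times in total. Each of the $n$ subtournaments is opened and closed exactly once over the whole run, so the total work is $\OO(n)$, and both the $(\log n)^{\OO(k)}$ factor from guessing and the $\log n$ factor from the balanced-search-tree bookkeeping in \cref{thm:fptdisagree} vanish. Correctness is inherited from the correctness argument of \cref{thm:fptdisagree} with $N'=\emptyset$: the exchange argument in the case $i\in N\setminus N'$ shows that an optimal seeding can always be modified, without losing value and without destroying agreement on the already-fixed prefix of players, so as to agree with the greedy seeding on a strictly longer prefix, which forces the greedy value to be optimal. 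There is no genuine obstacle here beyond checking that the "largest open subtournament" query is served in amortized constant time, which the monotone-pointer observation settles.
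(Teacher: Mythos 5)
Your proposal matches the paper's own derivation: the paper likewise obtains the corollary as the $k=0$ specialization of \cref{thm:fptdisagree}, noting that agreement can be detected in linear time via the argument in \cref{prop:disagree}, that the guessing step disappears, and that no restriction bookkeeping (hence no $\log n$ factors) is needed, so the algorithm collapses to the greedy of \cref{thm:twovalues} with all players treated as popular. Your extra detail on the monotone pointer for the largest open subtournament is a correct, slightly more explicit account of why the implementation is linear.
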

 We further remark that the corollary also follows from results obtained by \citet{dagaev2018competitive}.
Finally, we remark that \probname\ is presumably not $\mathsf{NP}$-hard in the restricted setting, which our FPT algorithm is able to solve, since we have a quasi-polynomial time algorithm for this case (\cref{thm:quasipoly}).

\subsection{An FPT-Algorithm for Parameter Size of Influential Set of Players}\label{sec:fptvc}

In this section, we study the parameterized complexity of \probname\ when there exists a small \emph{influential set of players}. 
\begin{definition}
Given an instance of \probname, a set $F\subseteq N$ of players is \emph{influential} if for all $i,j\in N$ and $r\in\mathbb{N}$ with $v(i,j,r)\neq 0$, we have that $i\in F$ or $j\in F$.
\end{definition}
First, we observe that we can compute a minimum-sized influential set of players efficiently, since we can easily reduce the problem to finding a minimum vertex cover in an appropriately defined graph.
\begin{observation}
A minimum influential set $F$ for an instance $I$ of \probname\ can be computed in $1.2738^{|F|}\cdot |I|^{\mathcal{O}(1)}$ time.
\end{observation}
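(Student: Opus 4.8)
The plan is to reduce the task of computing a minimum influential set to computing a minimum vertex cover, and then to invoke the state-of-the-art parameterized algorithm for the latter.

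First, I would build an auxiliary undirected graph $H$ on vertex set $N$: for each unordered pair of distinct players $i,j\in N$, insert the edge $\{i,j\}$ into $H$ if and only if there is some round $r$ with $v(i,j,r)\neq 0$ or $v(j,i,r)\neq 0$. Since at most $\log n$ rounds ever occur in a tournament on $n$ players, and since the game-value function is given explicitly as part of the input, this graph can be constructed in time polynomial in $|I|$ (using at most $\OO(n^2\log n)$ function lookups). The crucial point is the following reformulation of the definition: a set $F\subseteq N$ is influential precisely when every edge of $H$ has at least one endpoint in $F$, that is, when $F$ is a vertex cover of $H$. Consequently, a minimum influential set of the \probname\ instance is exactly a minimum vertex cover of $H$, and in particular $|F|$ equals the vertex cover number of $H$.

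Next, I would run the well-known algorithm of Chen, Kanj, and Xia for \textsc{Vertex Cover}, which finds a minimum vertex cover of an $n$-vertex graph in time $1.2738^{\tau}\cdot n^{\OO(1)}$, where $\tau$ is the vertex cover number of the graph. Applying it to $H$ (so $\tau=|F|$), and adding the polynomial-time cost of constructing $H$, yields the claimed bound $1.2738^{|F|}\cdot |I|^{\OO(1)}$.

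There is no genuine difficulty here. The only points that warrant a line of care are that the construction of $H$ is truly polynomial in the input size---which holds because the game-value function is part of the input and only $\log n$ rounds are relevant---and that the reduction is parameter-preserving, so that the exponential factor is exactly $1.2738^{|F|}$ rather than a larger quantity; the latter is immediate from the ``if and only if'' characterization of influential sets as vertex covers of $H$.
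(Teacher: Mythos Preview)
Your proposal is correct and follows essentially the same approach as the paper: build the auxiliary graph whose edges correspond to pairs of players with some nonzero game value, observe that influential sets coincide with vertex covers of this graph, and invoke the Chen--Kanj--Xia algorithm for \textsc{Vertex Cover}. The paper's argument is slightly terser but identical in substance.
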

This follows from the fact that an influential set of players is a vertex cover of the graph that has the set of players $N$ as vertices and an edge between two players $i,j\in N$ if there exists an $r\in\mathbb{N}$ such that $v(i,j,r)\neq 0$. \textsc{Vertex Cover} is fixed-parameter tractable when parameterized by the solution size, and a minimum vertex cover can be computed in the claimed running time~\cite{chen2010improved}.

Formally, we show the following.
\begin{theorem}\label{thm:fptinfluence}
    \probname\ with round-oblivious game-value function is fixed-parameter tractable when parameterized by the size of a minimum influential set of players.
\end{theorem}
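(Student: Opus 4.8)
The plan is to adapt the fixed-parameter algorithm of \citet{zehavi2023tournament} for \textsc{Tournament Fixing} parameterized by the feedback vertex set number of the prediction graph. Let $F$ be a minimum influential set of players, computed via the vertex-cover reduction noted above, and set $k=|F|$. The key structural observation is that the only games contributing to the tournament value are those involving at least one player of $F$, so the value of a seeding is determined entirely by (i) how far each player of $F$ advances (i.e., its number of wins and the round in which it is eliminated), and (ii) which opponent beats it — but since the value function is round-oblivious and a game has nonzero value only if it involves $F$, in fact we only need to know, for each $f\in F$, the multiset of rounds in which $f$ plays and, for the final game of $f$, whether the opponent is also in $F$ (this matters only when both endpoints are in $F$, of which there are at most $\binom{k}{2}$ pairs). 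Thus the ``relevant skeleton'' of a seeding is described by a bounded amount of information.

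First I would formalize this skeleton. Since the tournament tree has $\log n$ levels, the trajectory of a single player $f\in F$ is a root-to-some-node path, which I encode by the round $\rho(f)\in\{1,\dots,\log n\}$ in which $f$ loses (or $\rho(f)=\log n+1$ if $f$ is the overall winner, which happens only if $f$ is the strongest player). The total value is then $\sum_{f\in F}\big(\text{value }f\text{ accrues as winner of its games in rounds }1,\dots,\rho(f)-1\big)$ minus a correction for pairs $f,f'\in F$ that meet each other (to avoid double counting a game with both endpoints influential). Because games not touching $F$ have value $0$, this accounting is exact. So the problem reduces to: choose, for each $f\in F$, an integer $\rho(f)$ and the identity of whoever eliminates it (an element of $N$, but only its relation to $F$ and its strength relative to $f$ matters), such that the resulting partial picture is realizable by an actual seeding, and the induced value is maximized.

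Second, I would handle realizability. The players in $N\setminus F$ are totally ordered by strength and are ``inert'' — any game among them or against a stronger $F$-player has value $0$. The combinatorial core is to decide whether a guessed assignment $\{\rho(f)\}_{f\in F}$ together with a guessed ``who beats whom'' structure on $F$ can be completed: each $f$ must be routed so that it wins its first $\rho(f)-1$ games and then meets a player that beats it, and the subtournaments hanging off $f$'s path must be fillable with the remaining players. Here I would borrow the machinery of \citet{zehavi2023tournament}: guess the ``interaction pattern'' of the influential players (a bounded object, since $|F|\le k$, so there are $f(k)$ many possibilities — the pattern records the binary tree of merges among the $F$-players and where non-$F$ players attach as subtrees, together with the target rounds), and then for each fixed pattern reduce the completion question to checking that enough sufficiently-weak players are available to populate the ``closed'' subtournaments, using the open/closed-subtournament bookkeeping from \cref{sec:alg}. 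Counting available players of each strength threshold is a simple feasibility check, and because all value comes from $F$, once a pattern is realizable we can read off its value directly from the pattern.

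The main obstacle is the realizability/completion step: we must be careful that routing one influential player $f$ (forcing it to win $\rho(f)-1$ rounds) does not conflict with routing another $f'$ — for instance, two influential players cannot both be forced through the same subtree unless the pattern explicitly has them meet, and a player placed to be beaten by $f$ in round $\rho(f)$ must genuinely be weaker than $f$ (or be an influential player guessed to lose there). Managing these constraints is exactly what the guessed interaction pattern is for, and the technical work is to show (a) that the number of patterns is bounded by a function of $k$ — here the relevant blow-up is roughly $k^{O(k)}$ times a $(\log n)^{O(k)}$ factor for the choice of the $\rho(f)$'s, which is absorbed into $f(k)\cdot n^{O(1)}$ as in \cref{thm:fptdisagree} — and (b) that a pattern is realizable if and only if a polynomial-time-checkable ``enough weak players'' condition holds. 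I expect step (b), proving the completion lemma (every realizable pattern corresponds to an actual seeding with the claimed value, and conversely every optimal seeding induces some realizable pattern), to be where most of the care is needed; the adaptation of \citet{zehavi2023tournament} is conceptually direct but the details of translating feedback-vertex-set-style arguments into the knockout-tree setting, while tracking values rather than just a designated winner, require the bulk of the write-up.
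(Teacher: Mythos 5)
There is a genuine gap, and it lies in your very first ``key structural observation.'' For a general round-oblivious game-value function with influential set $F$, the value of a game between $f\in F$ and $j\notin F$ is $v(f,j)$, which depends on the identity of $j$, not only on $f$. So the tournament value is \emph{not} determined by how far each influential player advances together with the $F$--$F$ meeting pattern; it also depends on \emph{which} non-influential players each $f\in F$ beats, and which non-influential player (if any) eliminates it. Consequently your plan---guess the $\rho(f)$'s and an interaction pattern, check a polynomial-time ``enough sufficiently weak players'' feasibility condition, and ``read off the value directly from the pattern''---cannot work as stated: the value is not a function of the pattern, and choosing the identities of the non-influential opponents is exactly where the optimization lives. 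Guessing these identities explicitly is not an option either, since each $f\in F$ may play $\Theta(\log n)$ games, so explicit guessing costs roughly $n^{\Theta(k\log n)}$, which is not FPT.

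This is precisely where the paper's adaptation of \citet{zehavi2023tournament} does more than your sketch: after guessing an annotated structure (a tree on $F^\star$ plus LCA-closure placeholders, annotated with length and size functions), it runs a dynamic program that inserts the players of $N\setminus F^\star$ one by one in decreasing order of strength, and whenever the current player $p_i$ is mapped to a vertex adjacent to members of $F^\star$ (a placeholder vertex of the structure, or an extra child of some $u\in F^\star$), it adds the corresponding game values $v(p_i,u)$ to the table entry (cases 2 and 4 of the recursion). That DP simultaneously certifies realizability and optimizes the assignment of non-influential opponents, which a feasibility-only ``enough weak players'' check omits. Your high-level framing---bounded skeleton for $F$, a $(\log n)^{\OO(k)}$ guessing factor absorbed into FPT time, Zehavi-style machinery for realizability---matches the paper, and your observation would indeed suffice for player-popularity-based values, where $v(f,j)$ depends only on the winner; but for the theorem as stated you need the value-tracking dynamic program (or an equivalent weighted-assignment step), not a counting argument alone.
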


To show \cref{thm:fptinfluence}, we adapt an algorithm for {\sc Tournament Fixing} parameterized by the feedback vertex number by \citet{zehavi2023tournament}. We point out that due to the similarity of the algorithms, we need to introduce many concepts of \citet{zehavi2023tournament} to present our algorithm. We will use the same notation and terminology as \citet{zehavi2023tournament}.

In the {\sc Tournament Fixing} problem, we are given a set of players and a tournament graph that defines the outcome of a game between each pair of players. We are asked whether there exists a seeding that makes a specific player win the tournament. The parameter is the feedback vertex number of said tournament graph. Note that once the players corresponding to the feedback vertex set are removed, the remaining players admit a linear ordering defining their strength, similar to our setting. 
Informally, the main strategy of the algorithm of \citet{zehavi2023tournament} is as follows. 
\begin{itemize}
    \item ``Guess'' a sufficient amount of information on how the players corresponding to the feedback vertex set perform in the tournament.
    \item Find suitable opponents for the players corresponding to the feedback vertex set. Recall that those players do not fit into the linear ordering defining the strength. For each of those players, there are some games that they are supposed to win. For those games, we have to select players as opponents that they are able to beat. Similarly, for each of those players, there is at most one game that they are supposed to lose. For that game, we have to select a player as an opponent that beats them.
    \item Make sure that we can fill up the rest of the tournament with the remaining players.
\end{itemize}
We can solve our problem by following a very similar approach.
\begin{itemize}
    \item We want to ``guess'' a sufficient amount of information on how the players in the influential set of players perform in the tournament.
     \item We want to find suitable opponents for the players in the influential set of players, since those games are the only ones giving us value.
    \item We want to make sure that we can fill up the rest of the tournament with the remaining players.
\end{itemize}

To present the algorithm more formally, we adopt an alternative view of tournament seeds that gives a better way of looking at the performance of a player in a tournament. 
Given a tournament seeding $\sigma$ for a set $N$ of players, we define a \emph{tournament execution tree} $T^{(\sigma)}$.
\begin{definition}[Tournament Execution Tree]
Given a tournament seeding $\sigma$ for a set $N$, the rooted tree $T^{(\sigma)}$ is a \emph{tournament execution tree} if the following holds.
\begin{itemize}
    \item The vertices of $T^{(\sigma)}$ are the players in $N$, that is, $V(T^{(\sigma)})=N$.
    \item There is an edge between two players $i,j\in N$ in $T^{(\sigma)}$ if player $i$ plays against player $j$ when the seeding $\sigma$ is used.
    \item The tree $T^{(\sigma)}$ is rooted at the winner $\max N$ of the tournament.
\end{itemize}    
\end{definition}
We give an illustration of a tournament tree in \cref{fig:10}.

\begin{figure}[t]
 \centering
    \includegraphics[scale=0.7]{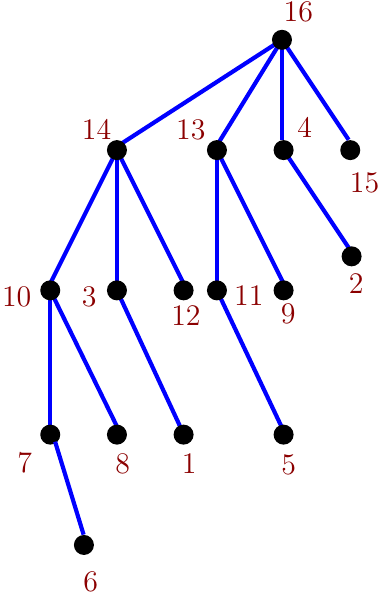}
    \caption{Example of a tournament execution tree with 16 players $\{1,2,\ldots,16\}$ and the seeding $(16,15,4,2,13,9,11,5,7,6,10,8,12,14,3,1)$, where the position of a player in the vector corresponds to its seed position.}
    \label{fig:10}
\end{figure}

 We use the following notation. Given a rooted tree $T$ and a vertex $u\in V(T)$, let $\mathsf{parent}_T(u)$ denote the parent of $u$ in $T$ (for vertices $u$ that are different from the root of $T$) and let $\mathsf{children}_T(u)$ denote the set of children of $u$ in $T$.
 It is easy to observe that for $i\in N$, we have that $\mathsf{children}_{T^{(\sigma)}}(i)$ are the players knocked out of the tournament by player $i$ when seeding $\sigma$ is used. Since the vertices of a tournament execution tree are the players of the tournament, we sometimes refer to the vertices of a tournament execution tree as players.

A Tournament execution tree has a specific structure that is captured by the concept of a \emph{binomial arborescence}. Formally, a binomial arborescence is defined as follows.

\begin{definition}[Binomial Arborescence (BA)]\label{def:BA}
A {\em binomial arborescence (BA)} $B$ rooted at $u$ is defined recursively:
\begin{itemize}
\item The tree on the single vertex $u$ is a BA rooted at $u$.
\item Given two vertex-disjoint BAs of equal size, $B_u$ rooted at $u$ and $B_{u'}$ rooted at $u'$,  the addition of an edge from $u$ to $u'$ yields a BA rooted at $u$.
\end{itemize}
For any $n$ that is a power of $2$, the unique (up to isomorphism) BA on $n$ vertices is denoted by $B_n$.
\end{definition}

We immediately get the following observation.

\begin{observation}\label{obs:seedfromBA}
Let $\sigma$ be a tournament seeding for $N$ players with $|N|=2^{n'}$ for some $n'\in\mathbb{N}$. Then, $T^{(\sigma)}$ is isomorphic to $B_{|N|}$.

Let $B$ be a BA with $V(B)=N$. If for all $i\in N$ and $j\in \mathsf{children}_B(i)$ it holds that $i>j$, then there exists a tournament seeding $\sigma$ such that $T^{(\sigma)}=B$. Furthermore, $\sigma$ can be computed in polynomial time.
\end{observation}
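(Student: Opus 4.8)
The plan is to prove both statements by induction on $n'$ (where $|N|=2^{n'}$), exploiting the parallel recursive structure of tournaments (a tournament on $2^{n'}$ players splits into two subtournaments on $2^{n'-1}$ players) and of binomial arborescences (\cref{def:BA}).

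For the first statement it suffices to show that $T^{(\sigma)}$ is a BA on $|N|$ vertices, since $B_{|N|}$ is the unique such BA up to isomorphism. The base case $n'=0$ is immediate. For the inductive step, let $N_1,N_2$ be the two halves of the seeding as in \cref{def:tvalue}, with subtournament winners $i_1,i_2$, and assume w.l.o.g.\ $i_1=\max N$, so $i_1$ is the root of $T^{(\sigma)}$. The edges of $T^{(\sigma)}$ among the players of $N_1$ (resp.\ $N_2$) are exactly the edges of the execution tree of the subtournament on $N_1$ (resp.\ $N_2$) with the induced partial seeding, and $i_1=\max N_1$ (resp.\ $i_2=\max N_2$) is its root; by the induction hypothesis these two trees are BAs on $2^{n'-1}$ vertices. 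The only remaining edge of $T^{(\sigma)}$ is $\{i_1,i_2\}$, arising from the final game, and adding it is precisely the recursive step of \cref{def:BA}, so $T^{(\sigma)}$ is a BA on $2^{n'}$ vertices.

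For the second statement I would first observe that the hypothesis forces the root $u$ of $B$ to be $\max N$: a straightforward induction shows that in a BA satisfying the ``parent beats child'' condition every vertex beats all of its proper descendants, so the root beats everyone. I would then build $\sigma$ recursively along the structure of $B$. By \cref{def:BA}, $u$ has $n'$ children rooting vertex-disjoint sub-BAs $B_0,B_1,\dots,B_{n'-1}$ of sizes $1,2,4,\dots,2^{n'-1}$, and each $B_r$ again satisfies the ``parent beats child'' condition, so by the induction hypothesis there is a partial seeding $\sigma_r$ for $V(B_r)$ whose execution tree is $B_r$. Now place $u$ into seed position $1$; this opens up exactly one subtournament of each size $2^r$ (in the sense of \cref{sec:alg}), and into the one of size $2^r$ place the players of $V(B_r)$ according to $\sigma_r$. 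By the induction hypothesis the root of $B_r$ wins its subtournament and is then eliminated by $u$, so the children of $u$ in $T^{(\sigma)}$ are exactly the roots of $B_0,\dots,B_{n'-1}$ and, restricted to $V(B_r)$, $T^{(\sigma)}$ equals $B_r$; hence $T^{(\sigma)}=B$. The construction does $O(n)$ work outside its recursive calls over a recursion tree with $O(n)$ nodes, so $\sigma$ is computable in polynomial time.

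The routine-but-delicate parts will be (i) verifying that the restriction of an execution tree to one half of a seeding really is the execution tree of the corresponding subtournament, and (ii) keeping the seed-position bookkeeping straight, i.e.\ that assembling $u$ in position $1$ with the partial seedings $\sigma_r$ in the opened subtournaments gives a valid seeding realizing exactly the games encoded by $B$. Neither is a genuine obstacle; the one conceptual point, which makes the induction go through, is that the ``parent beats child'' condition is exactly what guarantees that in every sub-competition the intended player---the root of the corresponding sub-BA---is the one that advances.
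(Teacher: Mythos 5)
Your proof is correct; the paper itself states this observation without proof (it is presented as immediate), and your two inductions---showing $T^{(\sigma)}$ is a BA via the recursive split of the seeding into halves, and recursively seeding the sub-BAs rooted at the children of $\max N$ into the subtournaments that open up---are exactly the natural justification the paper implicitly relies on. No gaps of substance remain; the two "delicate" points you flag (the restriction of the execution tree to a half being the subtournament's execution tree, and the seed-position bookkeeping) indeed follow directly from \cref{def:tvalue} and \cref{obs:BAprops}.
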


Binomial arborescences have many nice properties that we will exploit in our algorithm. In particular, we will make use of the following immediate observation regarding BAs.

\begin{observation}\label{obs:BAprops}
Let $B=B_n$ for some $n$ that is a power of $2$ and let $B'$ denote the subtree of $B$ rooted at some $u\in V(B)$.
Then, we have that
    $B'$ is a BA,
    $|\mathsf{children}_B(u)|=\log_2|V(B')|$, and
    the length of the path in $B$ from the root of $B$ to $u$ is at most $\log_2 n$.
\end{observation}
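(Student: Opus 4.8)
The plan is to prove the three claims by induction on $n$, directly unwinding the recursive definition of a binomial arborescence from \cref{def:BA}. The base case $n=1$ is immediate: the unique vertex is the root, its subtree is $B_1$ (a BA), it has $0=\log_2 1$ children, and the trivial root-to-root path has length $0\le\log_2 1$.

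For the inductive step, write $n=2^k$ with $k\ge 1$ and recall that $B_n$ arises from two vertex-disjoint copies $B^{(1)},B^{(2)}$ of $B_{n/2}$, with roots $w_1,w_2$, by adding the edge $w_1\to w_2$, so that $w_1$ is the root of $B_n$. Fix $u\in V(B_n)$ and consider three cases: (a)~$u=w_1$; (b)~$u\in V(B^{(1)})\setminus\{w_1\}$; (c)~$u\in V(B^{(2)})$. In case~(a) the subtree of $B_n$ rooted at $u$ is $B_n$ itself; in cases~(b) and~(c) this subtree coincides with the subtree of $B^{(1)}$, respectively $B^{(2)}$, rooted at $u$, because the only new edge $w_1\to w_2$ lies on or above the path from $w_1$ to $u$ and therefore creates no new descendants of $u$. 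The first claim now follows: in case~(a) the subtree is $B_n$, a BA by definition, and in cases~(b)/(c) it is the subtree of a $B_{n/2}$ rooted at one of its vertices, a BA by the induction hypothesis.

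Given the first claim, the subtree $B'$ rooted at $u$ is some $B_{2^j}$, and a short auxiliary induction on $j$ (again unwinding \cref{def:BA}) shows that the root of $B_{2^j}$ has exactly $j$ children --- the roots of the copies $B_{2^0},B_{2^1},\ldots,B_{2^{j-1}}$ assembled along the way --- so $|\mathsf{children}_B(u)|=j=\log_2|V(B')|$, giving the second claim. For the third claim, an induction on $k$ shows that the height of $B_{2^k}$ equals $k$: in case~(a) above a vertex of $B^{(1)}$ retains its depth, which is at most $k-1$, while in case~(c) a vertex of $B^{(2)}$ has depth one more than in $B^{(2)}$, hence at most $1+(k-1)=k$. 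Thus every root-to-$u$ path in $B_n$ has length at most $k=\log_2 n$.

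I do not foresee a real obstacle: the argument is entirely routine, and the only point deserving a moment's care is the verification in cases~(b) and~(c) that passing to $B^{(1)}$ or $B^{(2)}$ leaves the subtree rooted at $u$ unchanged, which is precisely the statement that the freshly added edge is never strictly below $u$.
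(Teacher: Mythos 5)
Your proof is correct: the three-case decomposition along the recursive construction of \cref{def:BA}, the auxiliary induction showing the root of $B_{2^j}$ has exactly $j$ children, and the depth bound all go through (the only blemish is the loose reuse of the case labels in the depth argument, which is presentational, not mathematical). The paper offers no proof at all---it states the observation as immediate from \cref{def:BA}---so your routine induction is precisely the verification the authors left implicit, and there is nothing further to compare.
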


For the remainder of the section, fix some instance $I$ of \probname, let $F$ be a minimum influential set of players of $I$, and let $w=\max N$ be the strongest player. Let $F^\star=F\cup\{w\}$. We use $S$ to denote a tournament execution tree $T^{(\sigma^\star)}$ for a tournament seeding~$\sigma^\star$ that maximizes the tournament value that our algorithm will attempt to find.

The main strategy of the algorithm is to (bijectively) map all players in $N$ to the vertices of the BA $B_{|N|}$ in order to obtain a tournament execution tree. By \cref{obs:seedfromBA}, this implicitly defines a tournament seeding $\sigma$, which can be computed from the mapping in polynomial time. The goal is to find a mapping such that the tournament value is maximized, that is, the mapping produces $S$ from $B_{|N|}$. To this end, the algorithm performs the following steps.
\begin{enumerate}
    \item ``Guess'' a minimum amount of information about the ``positions'' of the vertices in $F^\star$ in $S$.
    \item Use a dynamic program to find a mapping from $N$ to the vertices of $B_{|N|}$ that respects the guessed ``positions'' and produces $S$ from $B_{|N|}$.
\end{enumerate}

\paragraph*{Step 1: Guesses.}
First, we would like to have some information on the ancestral relationship between the vertices in $S$ corresponding to players in $F^\star$. We model this information using a so-called \emph{structure}. To give the formal definition of a structure, we first introduce {\em lowest common ancestors} and {\em least common ancestor-closures (LCA-closures)}.

Given a rooted tree $T$ and two vertices $u,u'\in V(T)$, the {\em lowest common ancestor} of $u$ and $u'$ in $T$ is denoted by $\mathsf{lca}_T(u,u')$.

\begin{definition}[LCA-closure]
For a rooted tree $T$ and a subset $X\subseteq V(T)$, the {\em least common ancestor-closure (LCA-closure)}, $\mathsf{LCA}_T(X)$, is obtained by the following process. Initially, set $X' = X$. Then, as long as there exist vertices $u,u'\in X'$ such that $\mathsf{lca}_T(u,u')\notin X'$, add $\mathsf{lca}_T(u,u')$ to $X'$. When the process terminates, output $\mathsf{LCA}_T(X)=X'$.
\end{definition}


In the structure we are about to define, some vertices are the players in $F^\star$, while others are ``placeholders'' to describe the positions of $\mathsf{LCA}_S(F^\star)\setminus F^\star$ (for the unknown $S$ that the algorithm tries to find) as well as some of the other vertices adjacent to the vertices in $F^\star$ in $S$.

\begin{definition}[Structure]\label{def:structure}
A {\em structure} is a tree $T$ rooted at $w$ such that $V(T)\cap N=F^\star$, and both of the following conditions hold:
\begin{enumerate}
\item\label{def:structure1} For every $u,u'\in F^\star$ with $u'\in \mathsf{children}_T(u)$ we have $u>u'$.
\item\label{def:structure2} For every $u\in V(T)\setminus N$, at least one of the following conditions holds:
\begin{enumerate}
\item\label{def:structure2a} $\mathsf{children}_T(u)\cap F\neq\emptyset$.
\item\label{def:structure2b} $\mathsf{parent}_T(u)\in F^\star$ and $\mathsf{children}_T(u)\neq\emptyset$.
\item\label{def:structure2c} $|\mathsf{children}_T(u)|\geq 2$.
\end{enumerate}
\end{enumerate}
\end{definition}

For a structure to encode the aforementioned ancestral relationship we will add some ``annotations'' to it, which we formally define soon. 
The second condition is meant to ensure that we do not add too many placeholders to the structure: we have placeholders for parents of vertices in $F^\star$, their children that have descendants in $F^\star$, and vertices corresponding to the LCA-closure; so, in total, we have $\OO(|F^\star|)$ placeholders. This is required to obtain the running time bound of the overall algorithm.

The mentioned annotations we add to a structure are the following: 
\begin{itemize}
    \item For each of its edges $\{u,u'\}$, we would like to know the length of the path from the vertex to which $u$ is mapped and the vertex to which $u'$ is mapped in $S$.
    \item For each of its vertices $u$, we would like to know the size of the subtree rooted at the vertex to which $u$ is mapped in $S$.
\end{itemize}
Knowing the above quantities allows us to calculate how many vertices we need to ``fill'' in these paths and subtrees. 
In particular, we want to know how large the subtrees in $B$ that are rooted at some vertex corresponding to a placeholder in the structure are. Furthermore, we want to know the length of a path in $B$ between two vertices corresponding to adjacent placeholders in the structure. To this end, we use a \emph{length function} and a \emph{size function}. Note that due to \cref{obs:BAprops}, we know that the lengths of the paths are upper-bounded by $\log_2 |N|$.

\begin{definition}[Length Function]\label{def:length}
Given a structure $T$, a function $\mathsf{len}: E(T)\rightarrow\{1,2,\ldots,\log_2|N|\}$ is a  {\em length function} if for every edge $\{u,u'\}\in E(T)$ such that $\{u,u'\}\cap F^\star\neq\emptyset$, $\mathsf{len}(u,u')=0$.
\end{definition}

In particular, the definition above ensures that we have placeholders for all the neighbors (in $B$) of the vertices in $F^\star$. From the definition of a binomial arborescence (\cref{def:BA}) we know that the sizes of subtrees have a very specific pattern.

\begin{definition}[Size Function]\label{def:size}
Given a structure $T$, a function $\mathsf{siz}: V(T)\rightarrow\{2^0,2^1,\ldots,2^{\log_2|N|}\}$ is a  {\em size function}.
\end{definition}

A triple $(T,\mathsf{len},\mathsf{siz})$, consisting of a structure, a length function, and a size function, respectively, is called an {\em annotated structure}. Next, we define what it means for an annotated structure to be \emph{realizable}. Intuitively, this is the case if we can map the vertices of the structure to a BA $B$ in a way that ``complies'' with the length and the size function.

\begin{definition}[Realizability]\label{def:guess}
Let $(T,\mathsf{len},\mathsf{siz})$  be an annotated structure. An injective function $\mathsf{rea}: V(T)\rightarrow V(B)$ is a {\em realizability witness} for $(T,\mathsf{len},\mathsf{siz})$ if all of the following conditions hold:
\begin{enumerate}
\item\label{def:guess1} For every two vertices $u,u'\in V(T)$, $\mathsf{rea}(\mathsf{lca}_T(u,u'))=\mathsf{lca}_{B}(\mathsf{rea}(u),\mathsf{rea}(u'))$. In particular, $\mathsf{LCA}_B(\{\mathsf{rea}(u): u\in V(T)\})=\{\mathsf{rea}(u): u\in V(T)\}$.
\item\label{def:guess2} For every edge $e=\{u,u'\}\in E(T)$, the length of the path from $\mathsf{rea}(u)$ to $\mathsf{rea}(u')$ in $B$ is $\mathsf{len}(e)$.
\item\label{def:guess3} For every vertex $u\in V(T)$, the size of the subtree of $B$ rooted at $\mathsf{rea}(u)$ is $\mathsf{siz}(u)$.
\end{enumerate}
If $(T,\mathsf{len},\mathsf{siz})$ admits a realizability witness, then it is said to be {\em realizable}.
\end{definition}

The first condition ensures that we respect ancestral relationships encoded by the structure. The other two conditions ensure that we respect the length and size functions.
We give an illustration of a realizable structure in \cref{fig:11}.

\begin{figure}[t]
 \centering
    \includegraphics[scale=0.7]{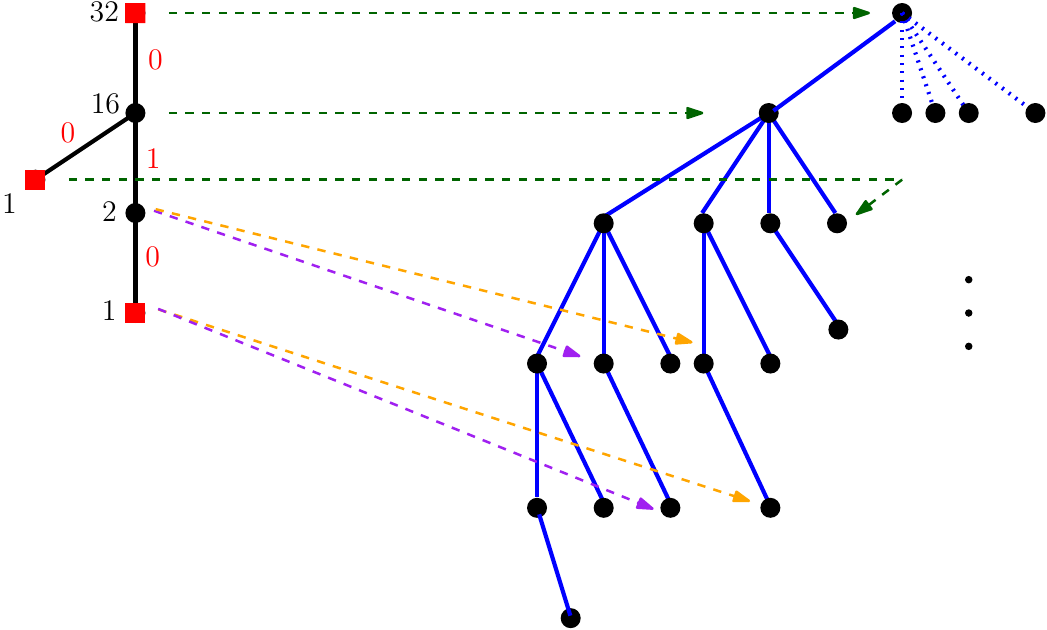}
    \caption{Example of a realizable annotated structure. Here, we have $n=32$ and $|F^\star|=3$. The annotated structure $(T,\mathsf{len},\mathsf{siz})$ is depicted on the left, the binomial arborescence $B$ is (partially) depicted on the right. In the annotated structure, vertices in $F^\star$ are depicted as red squares and vertices in $T\setminus F^\star$ as black circles. The size function is written in black and the length function is written in red. The mapping $\mathsf{rea}$ is visualized by the dashed arrows. For the top three vertices of the annotated structure, there is only one possible mapping, visualized by the greed dashed arrows. For the bottom two vertices of the annotated structure, there are two possible mappings that can be used by $\mathsf{rea}$, visualized by the violet and orange dashed arrows, respectively.}
    \label{fig:11}
\end{figure}

Now, we can give the definition of the notion of \emph{compliance} of an annotated structure and a tournament execution tree. Intuitively, if an annotated structure complies with some tournament execution tree, then it correctly encodes the information about that tournament execution tree.

\begin{definition}[Compliance]\label{def:compliance}
An annotated structure $(T,\mathsf{len},\mathsf{siz})$ {\em complies} with a tournament execution tree $S$ if $(T,\mathsf{len},\mathsf{siz})$ has a realizability witness $\mathsf{rea}: V(T)\rightarrow V(S)$ such that: for every $u\in F^\star$, $\mathsf{rea}(u)=u$.
\end{definition}



\citet{zehavi2023tournament} showed that we can efficiently compute a sufficiently small collection of annotated structures such that at least one of them complies with $S$.

\begin{lemma}[\cite{zehavi2023tournament}]\label{lem:guesses}
There exists a $|F^\star|^{\OO(|F^\star|)}\cdot n^{\OO(1)}$-time algorithm that outputs a collection $\cal C$ of $|F^\star|^{\OO(|F^\star|)}\cdot n^{\OO(1)}$ realizable annotated structures such that $\cal C$ contains an annotated structure that complies with $S$.
\end{lemma}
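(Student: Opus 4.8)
\emph{Proof plan.} The idea is to obtain $\cal C$ by brute-force enumeration of \emph{all} annotated structures whose underlying tree has $\OO(|F^\star|)$ vertices, and then to discard from this list every annotated structure that is not realizable (realizability being testable in polynomial time). The only substantive points are (i)~that a structure in the sense of \cref{def:structure} can have only $\OO(|F^\star|)$ vertices, which keeps the enumeration within the claimed budget, and (ii)~that the annotated structure one naturally reads off from the target tree $S$ is produced by the enumeration and survives the filtering.

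For (i), note that by \cref{def:structure} the vertices of a structure $T$ that do \emph{not} lie in $N$ are exactly its ``placeholders'', and each such placeholder $u$ witnesses one of the three cases of the second condition of \cref{def:structure}: it has a child in $F$, or its parent lies in $F^\star$ and it has a child, or it has at least two children. A charging argument of the same flavour as the one bounding LCA-closures (a set $X$ in a rooted tree satisfies $|\mathsf{LCA}_T(X)| \le 2|X|-1$) then shows $|V(T)| = \OO(|F^\star|)$: the placeholders with at least two children are charged to pairs of vertices of $F^\star$ lying in distinct subtrees below them, and the remaining placeholders are charged to the vertex of $F^\star$ that is their parent, or to the child of some $F^\star$-vertex carrying a distinct $F^\star$-vertex in its own subtree; each vertex of $F^\star$ is charged only $\OO(1)$ times.

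Given this, the enumeration in step one lists all rooted trees on at most $c|F^\star|$ labelled vertices for the appropriate constant $c$, together with all ways of designating which vertices are identified with the players of $F^\star$; this yields $|F^\star|^{\OO(|F^\star|)}$ candidate trees, and we keep those satisfying both conditions of \cref{def:structure} (a polynomial-time check). For each surviving structure $T$ we then try every length function ($\OO(|F^\star|)$ edges, at most $\log_2|N|+1$ values each, subject to the constraint of \cref{def:length}) and every size function ($\OO(|F^\star|)$ vertices, at most $\log_2|N|+1$ values each), giving $(\log n)^{\OO(|F^\star|)}$ annotated structures per structure. Finally we test each for realizability in the sense of \cref{def:guess}: fixing the images of $F^\star$, the images of the placeholders can be determined greedily from the root downward, since the subtrees of a binomial arborescence of a given size are unique up to isomorphism by \cref{def:BA}, so this check runs in polynomial time. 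The surviving list $\cal C$ thus has size $|F^\star|^{\OO(|F^\star|)} \cdot (\log n)^{\OO(|F^\star|)}$, which by the standard inequality $(\log n)^{\OO(k)} \subseteq k^{\OO(k)} + n^{\OO(1)}$ equals $|F^\star|^{\OO(|F^\star|)} \cdot n^{\OO(1)}$, and it is produced within time of the same order.

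It remains to see that $\cal C$ contains an annotated structure that complies with $S$; this is also where the main obstacle lies. I would construct such an annotated structure directly from $S$: take the vertex set $X := \mathsf{LCA}_S(F^\star)$, then add to $X$ the $S$-parent of every vertex of $F^\star$ and, for every vertex already in $X$, each of its $S$-children whose $S$-subtree still meets $F^\star$; make $X$ into a rooted tree $T_S$ by declaring $u'$ a child of $u$ whenever $u$ is the nearest proper $S$-ancestor of $u'$ inside $X$. Setting $\mathsf{len}$ on an edge to be the length of the corresponding path of $S$ and $\mathsf{siz}(u)$ to be the size of the $S$-subtree rooted at $u$, one checks that $(T_S,\mathsf{len},\mathsf{siz})$ meets both conditions of \cref{def:structure} and the constraint of \cref{def:length}, that it is realizable (the inclusion $V(T_S)\hookrightarrow V(S)$ is a realizability witness, recalling $S \cong B_{|N|}$ by \cref{obs:seedfromBA}), and that it complies with $S$ in the sense of \cref{def:compliance} because this witness fixes every vertex of $F^\star$; hence it is generated by the enumeration, kept by the filtering, and lies in $\cal C$. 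The delicate step is verifying that $T_S$ satisfies the second condition of \cref{def:structure} while having only $\OO(|F^\star|)$ vertices: one must argue both that the placeholder-adding rule above introduces $\OO(|F^\star|)$ new vertices and that every introduced vertex matches one of the three allowed cases, which is precisely where the (somewhat intricate) shape of \cref{def:structure} has to be matched against what contracted binomial arborescences look like. Everything else -- the enumeration, the running-time bookkeeping, and the polynomial-time realizability test -- is routine once this structural core is in place.
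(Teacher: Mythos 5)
First, a point of context: the paper itself gives no proof of \cref{lem:guesses} --- it is imported wholesale from \citet{zehavi2023tournament} --- so there is no in-paper argument to measure you against, and your attempt has to stand on its own. Your outer shell (enumerate all candidate trees on $\OO(|F^\star|)$ vertices together with all length/size annotations, filter by \cref{def:structure} and by realizability, and then exhibit one complying annotated structure read off from $S$) is a sensible plan, and your size bound for structures is essentially right: every leaf of a structure must lie in $F^\star$, so branching placeholders number at most $|F^\star|-1$ and the remaining placeholders are charged via Conditions 2(a)/2(b), giving $|V(T)|=\OO(|F^\star|)$ and the claimed enumeration budget (modulo the $(\log n)^{\OO(k)}\subseteq k^{\OO(k)}+n^{\OO(1)}$ absorption, which matches what the paper uses elsewhere).

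The genuine gap is exactly the step you yourself flag as ``delicate'' and then do not carry out, and as written your construction of $T_S$ fails it. You add, for \emph{every} vertex already in $X$, all of its $S$-children whose subtree meets $F^\star$. Take an LCA-placeholder $p=\mathsf{lca}_S(f_1,f_2)\notin F^\star$ and the child $u$ of $p$ whose subtree contains exactly one influential player $f_1\in F$, sitting at depth at least two below $u$. Then $u$ is added to $X$, its unique child in $T_S$ is the placeholder $\mathsf{parent}_S(f_1)$ (not in $F$), and its parent in $T_S$ is the placeholder $p$ (not in $F^\star$); hence $u$ satisfies none of Conditions 2(a), 2(b), 2(c) of \cref{def:structure}, so $T_S$ is not a structure and is discarded by your own filter --- the compliance half of the lemma is therefore not established. (If instead you meant the child-adding rule to be iterated to a fixpoint, you would add entire root-to-$F^\star$ paths, destroying the $\OO(|F^\star|)$ bound as well.) The intended construction, which the surrounding text of the paper spells out, takes as placeholders only $\mathsf{LCA}_S(F^\star)\setminus F^\star$, the $S$-parents of vertices in $F^\star$, and the $S$-children \emph{of vertices in $F^\star$} that have descendants in $F^\star$; these verify 2(c), 2(a) and 2(b) respectively, and number $\OO(|F^\star|)$. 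Two smaller loose ends: the added vertices are actual players, so to satisfy $V(T)\cap N=F^\star$ you must replace them by anonymous placeholders before enumerating; and your polynomial-time realizability filter is asserted via a top-down greedy, but assigning the children of a structure vertex to distinct child subtrees of its image while respecting both $\mathsf{len}$ and $\mathsf{siz}$ needs at least a matching-type argument, which should be made explicit.
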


\paragraph*{Step 2: Dynamic Programming.}


We will refer to the dynamic programming table as $M$. From now on we fix some realizable annotated structure $(T,\mathsf{len},\mathsf{siz})$. Each entry of our table corresponds to a so-called {\em index}, defined below.

\begin{definition}[Index]\label{def:index}
An {\em index} is a quadruple $I=(i,f,g,h)$ where $i\in\{0,1,\ldots,|N|-|F^\star|\}$, $f: V(T)\setminus F^\star\rightarrow\{\mathsf{false},\mathsf{true}\}$, $g: E(T)\rightarrow \{0,1,\ldots,\log_2|N|-1\}$ and $h: F^\star\rightarrow \{0,1,\ldots,\log_2|N|\}$, such that all of the following conditions hold:
\begin{enumerate}
\item\label{def:index1} For every $u\in V(T)\setminus F^\star$ such that $f(u)=\mathsf{true}$, either $\mathsf{parent}_T(u)\in F^\star$ or both $f(\mathsf{parent}_T(u))=\mathsf{true}$ and $g((\mathsf{parent}_T(u),u))=\mathsf{len}(e)-1$.
\item\label{def:index2} For every $e\in E(T)$, $g(e)\leq \mathsf{len}(e)-1$.
\item\label{def:index3} For every $e=\{u,u'\}\in E(T)$ such that $u=\mathsf{parent}_T(u')$ and $g(e)>0$, ($u\notin F^\star$ and)\footnote{Follows from \cref{def:length} and Condition \ref{def:index2} in \cref{def:index}.} $f(u)=\mathsf{true}$.
\item\label{def:index4} For every $u\in F^\star$, $h(u)\leq \log_2\mathsf{siz}(u)-|\mathsf{children}_T(u)|$.
\end{enumerate} 
\end{definition}

Intuitively speaking, an entry $M[(i,f,g,h)]$ will the maximum value of games played in a ``partial solution'' (that is, mapping of the players to the vertices in $B$) involving the players in $F^\star$ and the $i$ strongest players that are not in $F^\star$ that, in particular, satisfies the following: 
\begin{itemize}
    \item $f$ encodes, for $u\in V(T)\setminus F^\star$, whether the partial solution maps a player from $N$ to the vertex corresponding to $u$ in $B$ (according to some realizability witness);
    \item $g$ encodes, for $\{u,u'\}\in E(T)$, to how many internal vertices on the path from the vertex corresponding to $u$ to the vertex corresponding to $u'$ the partial solution maps players from~$N$;
    \item $h$ encodes, for $u\in F^\star$, the number of children of the vertex corresponding to $u$ in $B$---excluding those corresponding to children of $u$ in $T$---to whom the partial solution maps players from~$N$.
\end{itemize}
 Note that Conditions \ref{def:index2} and \ref{def:index4} ensure that we respect the length and size functions.

We do not provide a formal correctness proof of the dynamic program, because its correctness follows from similar arguments as made by \citet{zehavi2023tournament}. However, we give an informal description of the main ideas here. The correctness of the dynamic program heavily relies on two notions:
\begin{enumerate}
    \item \emph{validity} of indices and
    \item \emph{witnesses} for indices.
\end{enumerate}
We do not give the technical definitions here but describe on an intuitive level what these concepts capture. We first discuss \emph{valid indices}, since they are important for describing the recursive formula for the dynamic program. After we give the recursive formula, we discuss \emph{index witnesses}.

We are going to map the players in $N\setminus F^\star$ to the vertices of $B$ in order of their strength. Intuitively, we want to map players to certain vertices in $B$ before players are mapped to certain other vertices in $B$. In the definition of indices (\cref{def:index}), some of the constraints aim to enforce such precedences.  
Condition~\ref{def:index1} ensures that if we map a player from $N\setminus F^\star$ to the vertex corresponding to some vertex $u\in V(T)\setminus F^\star$, then we have already mapped players from $N$ to the vertex corresponding to its parent in $T$ as well as to all the vertices on the path between them.
Condition~\ref{def:index3} ensures that if we map a player from $N\setminus F^\star$ to the path in $B$ between two adjacent vertices $u,u'\in V(T)\setminus F^\star$ with $u=\mathsf{parent}_T(u')$, then we already mapped a player from $N\setminus F^\star$ to the vertex in $B$ corresponding to $u$.

Furthermore, some players are implicitly set aside to ``fill up'' subtrees of $B$ rooted at e.g.\ children of a vertex in $F^\star$ or children of vertices on a path between two vertices in $F^\star$. Note that we can only start to fill up subtrees once a player is mapped to its root vertex in $B$. Considering paths between two vertices in $F^\star$: we can assume w.l.o.g.\ that first players are mapped to \emph{all} vertices along the path, and then we start to fill up the subtrees of those vertices.
If an index is valid, those subtrees must be large enough to host all vertices designated to fill up subtrees, since the player located at the root of the subtree must be strong enough to beat all other players in the subtree.

\citet{zehavi2023tournament} formalized the notion of valid indices and showed that they can be identified efficiently. This involves some technically involved arguments that we omit here.

Finally, we are ready to define the table $M$. The table $M$ maps indices to natural numbers (and~$-\infty$).
%
%
%
%
%
The order in which we map players to $B$ corresponds to $\sigma$. Concretely, when we consider an index $(i,f,g,h)$, the set of players that we map is the set of the first $i$ players according to $\sigma$ that are not contained in $F^\star$. 
The basis is when $i=0$. We have two cases:
\begin{enumerate}
\item If $f$ is the function that assigns only $\mathsf{false}$, $g$ is the function that assigns only $0$, and $h$ is the function that assigns only $0$, then $M[I]$ stores
\[
\sum_{\{u,u'\}\in E(T)\wedge \{u,u'\}\subseteq F^\star} v(u,u').
\]
\item Otherwise, $M[I]$ stores $-\infty$.
\end{enumerate}

Now, suppose that $i\geq 1$. 
If $I$ is an invalid index, then $M[I]$ stores $-\infty$.
Otherwise, $M[I]$ stores the maximum of the following entries. Let $p_i$ denote the $i$th strongest player (according to $\sigma$) that is not contained in $F^\star$.
\begin{enumerate}
\item\label{item1} $M[(i-1,f,g,h)]$.
\item\label{item2} $M[(i-1,f',g,h)]+v^\star$ for any $f'$ and $v^\star$ such that:
	\begin{enumerate}
	\item $f'$ is identical to $f$ except for exactly one vertex $u^\star\in V(T)\setminus F^\star$ to which $f$ assigns $\mathsf{true}$ but $f'$ assigns $\mathsf{false}$,
	\item for every $u\in\mathsf{children}_T(u^\star)$ with $u\in F^\star$: $p_i>u$ (according to $\sigma$),
	\item for $u=\mathsf{parent}_T(u^\star)$: if $u\in F^\star$, then $u>p_i$ (according to $\sigma$), and
	\item
 \[
 v^\star=\sum_{u\in\mathsf{children}_T(u^\star)\cap F^\star}v(p_i,u) + \sum_{u\in\{\mathsf{parent}_T(u^\star)\}\cap F^\star}v(p_i,u).
 \]
 \end{enumerate}
\item\label{item3} $M[(i-1,f,g',h)]$ for any $g'$ that is identical to $g$ except for exactly one edge $e^\star$ to which $g$ assigns some number $\ell\geq 1$ but $g'$ assigns $\ell-1$.
\item\label{item4} $M[(i-1,f,g,h')]+v^\star$ for any $h'$ and $v^\star$ such that:
	\begin{enumerate}
	\item $h'$ is identical to $h$ except for exactly one vertex $u^\star\in F^\star$ to which $h$ assigns some number $\ell\geq 1$ but $h'$ assigns $\ell-1$,
	\item $u^\star>p_i$ (according to $\sigma$), and
    \item $v^\star=v(u^\star,p_i)$.
	\end{enumerate}
\end{enumerate}

Intuitively, the four cases correspond to different types of ``locations'' in $B$ to which $p_i$ is mapped. In the first case, it is mapped to a vertex that has no neighbors nor descendants corresponding to players in $F^\star$. In the second case, it is mapped to a vertex corresponding to a vertex in $T$. In the third case, it is mapped to an internal vertex on the path between two vertices corresponding to vertices in $T$.  Lastly, in the fourth case, it is mapped to the child corresponding to a player in $F^\star$.



The time complexity of the dynamic program is straightforward to obtain and is analogous to the analysis done by \citet{zehavi2023tournament}.

\begin{lemma}[\cite{zehavi2023tournament}]\label{lem:runtimeM}
Table $M$ can be computed in time $|F^\star|^{\OO(|F^\star|)}\cdot n^{\OO(1)}$.
\end{lemma}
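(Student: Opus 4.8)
The plan is to bound the running time of computing $M$ as the product of the number of table entries and the time needed to evaluate a single entry, and then to simplify the resulting expression; this follows the analysis of \citet{zehavi2023tournament} closely.

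First I would bound the size of the table, i.e., the number of indices $I=(i,f,g,h)$. The key structural fact is that $|V(T)|\in\OO(|F^\star|)$: by the second condition of \cref{def:structure}, every vertex of $T$ outside $F^\star$ is either the parent of a vertex in $F^\star$, a child of a vertex in $F^\star$ with a nonempty subtree, or a vertex with at least two children, and a standard counting argument bounds the number of such placeholders by $\OO(|F^\star|)$; consequently $|E(T)|\in\OO(|F^\star|)$ as well. Hence there are $\OO(n)$ choices for $i$, at most $2^{|V(T)\setminus F^\star|}=2^{\OO(|F^\star|)}$ choices for $f$, at most $(\log_2|N|)^{|E(T)|}=(\log n)^{\OO(|F^\star|)}$ choices for $g$, and at most $(\log_2|N|+1)^{|F^\star|}=(\log n)^{\OO(|F^\star|)}$ choices for $h$. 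Thus the table has at most $n\cdot(\log n)^{\OO(|F^\star|)}$ entries, and invoking $(\log n)^{\OO(k)}\subseteq k^{\OO(k)}+n^{\OO(1)}$ (as used in the proof of \cref{thm:fptdisagree}) this is $|F^\star|^{\OO(|F^\star|)}\cdot n^{\OO(1)}$.

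Second, I would bound the time to compute a single entry $M[I]$. Testing whether $I$ is valid is done via the procedure of \citet{zehavi2023tournament}, which runs within the budget $|F^\star|^{\OO(|F^\star|)}\cdot n^{\OO(1)}$; in the base case $i=0$ one additionally evaluates $\sum_{\{u,u'\}\in E(T),\,\{u,u'\}\subseteq F^\star}v(u,u')$ in $\OO(|F^\star|)$ time. For $i\geq 1$, each of the four cases of the recurrence ranges over only $\OO(|F^\star|)$ candidates: a single fixed lookup in the first case; the $\OO(|F^\star|)$ vertices $u^\star\in V(T)\setminus F^\star$ in the second; the $\OO(|F^\star|)$ edges of $T$ in the third; and the $\OO(|F^\star|)$ vertices of $F^\star$ in the fourth. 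For each candidate, checking the side conditions and evaluating the contribution $v^\star$ (a sum over $\mathsf{children}_T(u^\star)\cap F^\star$ and $\{\mathsf{parent}_T(u^\star)\}\cap F^\star$) costs $\OO(|F^\star|)$ time plus a polynomial-time table lookup. Hence each entry is computed in $|F^\star|^{\OO(|F^\star|)}\cdot n^{\OO(1)}$ time, and multiplying by the number of entries yields the claimed running time.

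The one genuinely nontrivial ingredient is the efficient recognition of valid indices, which I would import verbatim as a black box from \citet{zehavi2023tournament} rather than reprove; this is the step I expect would need the most care were all details spelled out, but in the present setting it requires no new argument.
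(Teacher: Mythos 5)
Your analysis is correct and matches the one the paper intends: the paper itself gives no proof of \cref{lem:runtimeM} but defers to \citet{zehavi2023tournament}, whose argument is exactly the product of the table size (bounded via $|V(T)|,|E(T)|\in\OO(|F^\star|)$ and the $(\log n)^{\OO(k)}\subseteq k^{\OO(k)}+n^{\OO(1)}$ conversion) and the per-entry cost (validity checking plus $\OO(|F^\star|)$ candidates per recurrence case). Treating the recognition of valid indices as a black box within the stated budget is consistent with how the paper itself handles it, so no further argument is needed.
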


As mentioned earlier, the correctness of the dynamic program relies on the notion of index witnesses. Informally speaking, an index witness for some valid index $(i,f,h,g)$ is a mapping of the players in $F^\star$ and the $i$ strongest players that are not in $F^\star$ to the vertices of $B$ such that
\begin{itemize}
    \item The mapping respects the constraints implicitly given by $f$, $h$, and $g$.
    \item There exists a realizability witness for the annotated structure fixed in the beginning that agrees with the mapping of the index witness.
\end{itemize}

Note that the mapping of the index witness implicitly defines some games that are played (between players that are mapped by the index witness to adjacent vertices in $B$). Informally, the dynamic program is correct if and only if for all valid indices $I$, there exists an index witness such that the values of the games implicitly defined by the mapping of the index witness add up to $M[I]$. A formal proof of this statement is analogous to the one given by \citet{zehavi2023tournament}, which is technically quite involved. We omit the formal correctness proof here.

\paragraph*{The Final Algorithm.}





Given an instance \probname, the algorithm proceeds as follows: 
\begin{enumerate} 
\item Call the algorithm in \cref{lem:guesses} to compute a collection $\cal C$ of realizable annotated structures. Set $V_{\max}=0$.
\item For every $(T,\mathsf{len},\mathsf{siz})\in{\cal C}$, compute the table $M$ corresponding to $(T,\mathsf{len},\mathsf{siz})$ (see the Dynamic Programming section). If $M[I]>V_{\max}$ where $I$ is the index of $(T,\mathsf{len},\mathsf{siz})$, then set $V_{\max}=M[I]$.
\item Output Yes if $V_{\max}\ge V$, and No otherwise.	
\end{enumerate}

We first analyze the time complexity. By \cref{lem:guesses}, both $|{\cal C}|$ and the time necessary to compute ${\cal C}$ are upper-bounded by $|F^\star|^{\OO(|F^\star|)}\cdot n^{\OO(1)}$. Then, by \cref{lem:runtimeM}, we have that for every $(T,\mathsf{len},\mathsf{siz})\in{\cal C}$, the algorithm uses $|F^\star|^{\OO(|F^\star|)}\cdot n^{\OO(1)}$ time. So the overall running time is upper-bounded by $|F^\star|^{\OO(|F^\star|)}\cdot n^{\OO(1)}$.

We do not provide a formal correctness proof. The full proof is analogous to the correctness proof by \citet{zehavi2023tournament}. However, roughly speaking, the correctness is implied by the following arguments.
\begin{enumerate}
    \item The mapping from the sought tournament execution tree $S$ to $B$ is an index witness for indices corresponding to annotated structures that comply with $S$.
    \item The mapping of an index witness of an index corresponding to a realizable annotated structure maps the players to $B$ such that $B$ is a tournament execution tree.
    \item If we face a yes instance, then there is a realizable annotated structure that complies with a tournament execution tree $S$ (that gives the maximum tournament value). By the first argument, the mapping from $S$ to $B$ is an index witness for indices corresponding to the annotated structure. By \cref{lem:guesses}, there is an iteration where said annotated structure is considered, and by the correctness of the dynamic program, the algorithm returns Yes.
    \item If the algorithm outputs Yes, then by the second argument, the mapping of the index witness corresponding to the index that maximized the tournament value defines a tournament execution tree. By the correctness of the dynamic program, that tournament execution tree produces the desired tournament value.
\end{enumerate}

\section{Conclusion and Open Problems} \label{sec:conc}

In this paper, we formulated an objective of finding seedings for a knockout tournament that maximizes the profit or popularity of the tournaments. We provided a spectrum of results, showcasing both negative and (mostly) positive results.

Still, several intriguing open questions remain.
\begin{itemize}
\item Can we improve the $(1/\log n)$-approximation bound presented in \cref{apx:algo}, or can we give stronger approximation hardness results? Additionally, is there an approximation algorithm for \probname~that can handle non-round oblivious game-value functions?
\item Given that we have a quasipolynomial-time algorithm for \probname\ with a win-count oriented game-value function, it is natural to ask the following: Is it possible to solve \probname\ for win-count oriented game-value functions in polynomial time, or can we establish a conditional lower bound for the problem?
\item What is the computational complexity of \probname~when the game-value functions are player popularity-based, and there are more than two distinct player popularity values? 
\item We have restricted the game-value functions in our paper to be round oblivious extensively, as our primary aim was to lay a theoretical groundwork to fully understand and investigate the simpler game-value functions first. 
It seems very difficult, algorithmically, to handle game values that can arbitrarily change in different rounds unless all possibilities are checked, such as in our algorithm for win-count-oriented game value functions (\cref{thm:quasipoly}), which can be non-round-oblivious. However, given the limitations of round obliviousness in real-world contexts, are there natural restrictions for non-round-oblivious game-value functions that still allow for additional algorithmic results?  
\item 
What is the computational complexity of \probname~when other natural restrictions on game-value functions are considered, e.g.\ when players have popularity values and the value of a game is defined as the product of the popularity values of the two players playing the game?

\item Lastly, can we replace the parameter ``size of the influential set of players'' with a natural and smaller structural parameter and obtain tractability?

\end{itemize}

As the field continues to evolve, our research might inspire further exploration and advancements in optimizing tournament seeding strategies from the viewpoint of the organizers.

\bibliographystyle{abbrvnat}
\bibliography{bibliography}

\end{document}